\font \sevenrm=cmr7
\font \fiverm=cmr5
\newcommand{\nc}{\newcommand}
\newtheorem{theorem}{Theorem}
\newtheorem{definition}{Definition}
\newtheorem{corollary}{Corollary}
\newtheorem{lemma}{Lemma}
\newtheorem{proposition}{Proposition}
\newtheorem{ex}{Example}
\newtheorem{remark}{Remark}
\nc{\comment}[1]{[[{\tt #1}]] }
\nc{\Cal}[1]{{\mathcal {#1}}}
\nc{\mop}[1]{\mathop{\hbox {\rm #1} }\nolimits}
\nc{\gmop}[1]{\mathop{\hbox {\bf #1} }\nolimits}
\nc{\smop}[1]{\mathop{\hbox {\sevenrm #1} }\nolimits}
\nc{\ssmop}[1]{\mathop{\hbox {\fiverm #1} }\nolimits}
\nc{\mopl}[1]{\mathop{\hbox {\rm #1} }\limits}
\def\dbar{d\hskip-3pt \raise 4pt\hbox{-}}
\nc{\smopl}[1]{\mathop{\hbox {\sevenrm #1} }\limits}
\nc{\ssmopl}[1]{\mathop{\hbox {\fiverm #1} }\limits}
\nc{\frakg}{{\frak g}}
\nc{\g}[1]{{\frak {#1}}}
\def \restr#1{\mathstrut_{\textstyle |}\raise-6pt\hbox{$\scriptstyle #1$}}
\def \srestr#1{\mathstrut_{\scriptstyle |}\hbox to
-1.5pt{}\raise-4pt\hbox{$\scriptscriptstyle #1$}}
\nc{\wt}{\widetilde} \nc{\wh}{\widehat}
\nc{\redtext}[1]{\textcolor{red}{#1}}
\nc{\bluetext}[1]{\textcolor{blue}{#1}}
\nc\fleche[1]{\mathop{\hbox to #1 mm{\rightarrowfill}}\limits}
\nc{\ignore}[1]{}
\def\semi{\mathrel{\times}\kern -.85pt\joinrel\mathrel{\raise
1.4pt\hbox{${\scriptscriptstyle |}$}}}
\nc\R{{\mathbb R}}
\nc\N{{\mathbb N}}
\nc\inver{^{-1}}
\nc\point{\hbox{\bf .}}
\nc\un{\hbox{\bf 1}}
\renewcommand{\Re}{\operatorname{Re}}
\newcommand{\tr}{\textrm{tr}}
\def\graphearetenor{\,{\scalebox{0.25}{
\begin{picture}(130,6) (175,-221)
\SetWidth{3.0}
\SetColor{Black}
\Line(176,-218)(304,-218)
\end{picture}
}}\,}
\def\graphearetecurcroixun{\,{\scalebox{0.25}{
\begin{picture}(130,90) (111,-177)
\SetWidth{3.0}
\SetColor{Black}
\Photon(112,-164)(240,-164){7.5}{6}
\Line(191.998,-179.998)(160.002,-148.002)\Line(160.002,-179.998)(191.998,-148.002)
\Text(176,-212)[lb]{\Huge{\Black{$1$}}}
\end{picture}
}}\,}
\def\graphearetenorcroixzero{\,{\scalebox{0.25}{
\begin{picture}(130,34) (95,-143)
\SetWidth{3.0}
\SetColor{Black}
\Line(96,-126)(224,-126)
\Line(144.002,-110.002)(175.998,-141.998)\Line(175.998,-110.002)(144.002,-141.998)
\Text(156,-172)[lb]{\Huge{\Black{$0$}}}
\end{picture}
}}\,}
\def\graphearetenorcroixun{\,{\scalebox{0.25}{
\begin{picture}(130,34) (95,-143)
\SetWidth{3.0}
\SetColor{Black}
\Line(96,-126)(224,-126)
\Line(144.002,-110.002)(175.998,-141.998)\Line(175.998,-110.002)(144.002,-141.998)
\Text(156,-172)[lb]{\Huge{\Black{$1$}}}
\end{picture}
}}\,}
\def\graphearetecur{\,{\scalebox{0.25}{
\begin{picture}(133,20) (104,-209)
\SetWidth{3.0}
\SetColor{Black}
\Photon(105,-201)(236,-197){7.5}{7}
\end{picture}
}}\,}
\def\graphev{\,{\scalebox{0.25}{
\begin{picture}(66,34) (191,-191)
    \SetWidth{2.0}
    \SetColor{Black}
    \Line(192,-174)(224,-174)
    \Line(224,-174)(256,-158)
    \Line(224,-174)(256,-190)
\end{picture}}}\,}
\def\graphevcur{\,{\scalebox{0.25}{
\begin{picture}(160,166) (97,-122)
\SetWidth{2.0}
\SetColor{Black}
\Photon(98,-99)(195,-97){7.5}{5}
\Line(195,-99)(256,-37)
\Line(196,-100)(255,-161)
\end{picture}
}}\,}
\def\grapheexmdel{\,{\scalebox{0.25}{
\begin{picture}(162,52) (175,-209)
\SetWidth{3.0}
\SetColor{Black}
\Arc(256,-222)(16,270,630)
% Singular FArc, ignored!
\Arc(256,-192)(34,-61.928,241.928)
\Line(288,-190)(336,-190)
\Line(224,-190)(176,-190)
\end{picture}
}}\,}
\def\graphecerc{\,{\scalebox{0.25}{
\begin{picture}(130,34) (191,-287)
\SetWidth{3.0}
\SetColor{Black}
\Arc(256,-270)(16,270,630)
% Singular FArc, ignored!
\Line(272,-270)(320,-270)
\Line(240,-270)(192,-270)
\end{picture}
}}\,}
\def\graphecerccroizero{\,{\scalebox{0.25}{
\begin{picture}(162,63) (175,-101)
\SetWidth{3.0}
\SetColor{Black}
% Singular FArc, ignored!
\Arc(256,-75)(35.777,243,603)
\Line(288,-75)(336,-75)
\Line(224,-75)(176,-75)
\Line(240.002,-122.998)(271.998,-91.002)\Line(240.002,-91.002)(271.998,-122.998)
\Text(256,-139)[lb]{\Huge{\Black{$0$}}}
\end{picture}
}}\,}
\def\graphecerccroiun{\,{\scalebox{0.25}{
\begin{picture}(162,63) (175,-101)
\SetWidth{3.0}
\SetColor{Black}
% Singular FArc, ignored!
\Arc(256,-75)(35.777,243,603)
\Line(288,-75)(336,-75)
\Line(224,-75)(176,-75)
\Line(240.002,-122.998)(271.998,-91.002)\Line(240.002,-91.002)(271.998,-122.998)
\Text(256,-139)[lb]{\Huge{\Black{$1$}}}
\end{picture}
}}\,}
\def\graphedeltasansindice{\,{\scalebox{0.20}{
\begin{picture}(226,68) (111,-161)
    \SetWidth{2.0}
    \SetColor{Black}
    \Arc(224,-142)(48,180,540)
    \Line(192,-110)(192,-174)
    \Line(240,-94)(240,-190)
    \Line(272,-142)(336,-142)
    \Line(176,-142)(112,-142)
\end{picture}
}}\,}
\def\graphedeltacontratddd{\,{\scalebox{0.20}{
\begin{picture}(96,57) (336,-297)
    \SetWidth{2.0}
    \SetColor{Black}
    % Singular FArc, ignored!
\Arc(382.723,-358.862)(114.921,68.415,112.902)
    \Arc(379,-191.833)(131.078,-108.688,-67.576)
    \Line(354,-248)(354,-320)
    \Line(412,-249)(415,-317)
\end{picture}
  }}\,}
\def\graphedeltacontratdddd{\,{\scalebox{0.20}{
\begin{picture}(168,29) (184,-180)
    \SetWidth{2.0}
    \SetColor{Black}
    \Arc(243,-176)(22.627,135,495)
    \Arc(288,-175)(22.627,135,495)
    %\Line(258.001,-188.998)(275.999,-165.002)\Line(255.002,-168.001)(278.998,-185.999)
    \Line(312,-176)(351,-176)
    \Line(221,-177)(185,-177)
\end{picture}
  }}\,}
 \def\qedj{\,{\scalebox{0.20}{ \begin{picture}(181,66) (218,-178)
    \SetWidth{3.0}
    \SetColor{Black}
    \PhotonArc(307.786,-179.786)(22.955,14.598,173.03){7.5}{3.5}
    \PhotonArc[clock](306,-170.142)(49.152,-175.498,-364.502){7.5}{8.5}
    \Line(219,-175)(398,-175)
  \end{picture}}}\,}
\def\qeda{\,{\scalebox{0.20}{
\begin{picture}(98,35) (271,-286)
    \SetWidth{3.0}
    \SetColor{Black}
    \PhotonArc(320,-277)(17.889,-26.565,206.565){7.5}{4.5}
    \Line(272,-285)(368,-285)
\end{picture}}}\,}
\def\qedcoiz{\,{\scalebox{0.20}
  { \begin{picture}(258,93) (191,-180)
    \SetWidth{3.0}
    \SetColor{Black}
    \PhotonArc(328,-187)(91.214,15.255,164.745){7.5}{12.5}
    \Line(192,-163)(448,-163)
    \Line(335.998,-178.998)(304.002,-147.002)\Line(304.002,-178.998)(335.998,-147.002)
    \Text(315,-210)[lb]{\Huge{\Black{$0$}}}
  \end{picture}}}\,}
\def\qedcoiu{\,{\scalebox{0.20}
  { \begin{picture}(258,93) (191,-180)
    \SetWidth{3.0}
    \SetColor{Black}
    \PhotonArc(328,-187)(91.214,15.255,164.745){7.5}{12.5}
    \Line(192,-163)(448,-163)
    \Line(335.998,-178.998)(304.002,-147.002)\Line(304.002,-178.998)(335.998,-147.002)
    \Text(315,-210)[lb]{\Huge{\Black{$1$}}}
  \end{picture}}}\,}
\def\grphopropg{\,{\scalebox{0.3}{   
    \begin{picture}(100,144) (223,-80)
    \SetWidth{2.5}
    \SetColor{Black}
    \Arc[arrow,arrowpos=0.5,arrowlength=5,arrowwidth=5,arrowinset=0.2,clock](344,-66.667)(56.063,177.274,2.726)
    \Arc[arrow,arrowpos=0.5,arrowlength=5,arrowwidth=5,arrowinset=0.2,clock](344,-56)(56.569,-8.13,-171.87)
    \Line[arrow,arrowpos=0.5,arrowlength=5,arrowwidth=5,arrowinset=0.2](224,-64)(288,-64)
    \Line[arrow,arrowpos=0.5,arrowlength=5,arrowwidth=5,arrowinset=0.2](400,-64)(480,-64)
    \Text(256,-48)[lb]{\huge{\Black{$p$}}}
    \Text(345,0)[lb]{\huge{\Black{$k$}}}
    \Text(440,-48)[lb]{\huge{\Black{$p$}}}
    \Text(336,-142)[lb]{\huge{\Black{$k - p$}}}
  \end{picture}}}\,}
\def\diagramme #1{\vskip 4mm \centerline {#1} \vskip 4mm}
\begin{document}
%%%%%%%%%%%%%%%%%%%%%%%%%%%%%%%%%%%%%%%%%%%%%
%%%%%%%%%%%%%%%%%%%%%%%%%%%%%%%%%%%%%%%%%%%%%%%%%%%%%%%%%%%%%%%%%%%%%%%%%%%%%%%%%%%%%%%%%%
%%%%%%%%%%%%%%%%%%%%%%%%%%%%%%%%%%%%%%%%%%%%%
%%%%%%%%%%%%%%%%%%%%%%% %%%%%%%%%%%%%%%%%%%%%%
%%%%%%%%%%%%%%%%%%%%%%%%%%%%%%%%%%%%%%%%%%%%%
\title{
{Doubling bialgebras of graphs and Feynman rules}}
\author{Mohamed Belhaj Mohamed}
\address{{Universit\'e Blaise Pascal,
         laboratoire de math\'ematiques UMR 6620,
         63177 Aubi\`ere, France.}\\
         {Laboratoire de math\'ematiques physique fonctions sp\'eciales et applications, universit\'e de sousse, rue Lamine Abassi 4011 H. Sousse,  Tunisie}}     
         \email{Mohamed.Belhaj@math.univ-bpclermont.fr}
       
%%%%%%%%%%%%%%%%%%%%%%%%%%%%%%%%%%%%%%%%%%%%%%%%%%%%%%%%%%%%%%%%%%%
\date{October, 15 th 2014}
\noindent{\footnotesize{${}\phantom{a}$ }}
%%%%%%%%%%%%%%%%%%%%%%%%%%%%%%%%%%%%%%%%%%%%%%%%%%%%%%%%%%%%%%%%%%%
%%%%%%%%%%%%%%%%%%%%%%%%%%%%%%%%%%%%%%%%%%%%%%%%%%%%%%%%%%%%%%%%%%%
\begin{abstract}
In this article, we define a doubling procedure for the bialgebra of specified Feynman graphs introduced in a previous paper \cite {DMB}. This is the vector space generated by the pairs $(\bar \Gamma, \bar \gamma)$ where $\bar \Gamma$ is a locally $1PI$ specified graph of a perturbation theory $\Cal T$ with $\bar \gamma \subset \bar \Gamma$ locally $1PI$ and where $\bar \Gamma / \bar \gamma $ is a specified graph of $\Cal T$. We also define a convolution product on the characters of this new bialgebra with values in an endomorphism algebra, equipped with a commutative product compatible with the composition. We then express in this framework the renormalization as formulated by A. Smirnov \cite [\S 8.5, 8.6] {Sm}, adapting the approach of A. Connes and D. Kreimer for two renormalization schemes: the minimal renormalization scheme and the Taylor expansion scheme. Finally, we determine the finite parts of Feynman integrals using the BPHZ algorithm after dimensional regularization procedure, by following the approach by P. Etingof \cite{PE} (see also \cite{RM}).
\end{abstract}
%\newpage 
\maketitle
\textbf{MSC Classification}: 05C90, 81T15, 16T05, 16T10.

\textbf{Keywords}: Bialgebra, Hopf algebra, Feynman Graphs, Birkhoff decomposition, renormalization, dimensional regularization.
\tableofcontents
%%%%%%%%%%%%%%%%%%%%%%%%%%%%%%%%%%%%%%%%%%%%%%%%%%%%%%%%%%%%%%%%%%%
%%%%%%%%%%%%%%%%%%%%%%%%%%%%%%%%%%%%%%%%%%%%%%%%%%%%%%%%%%%%%%%%%%%
%%%%%%%%%%%%%%%%%%%%%%%%%%%%%%%%%%%%%%%%%%%%%%%%%%%%%%%%%%%%%%%%%%%
\section{Introduction}
In this note, we are interested in Feynman rules, given by integration of some type of functions with respect to the internal momenta. We recall the construction of the bialgebra $\wt{\Cal H}_{\Cal T}$ and the Hopf algebra ${\Cal H}_{\Cal T}$ of specified Feynman graphs associated to some perturbative theory $\Cal T$ \cite {DMB}. We define the doubling $\wt{\Cal D}_{\Cal T}$ of this bialgebra. This is the vector space generated by the pairs $(\bar \Gamma, \bar \gamma)$ where $\bar \Gamma$ is a locally $1PI$ specified graph of the theory $\Cal T$ with $\bar \gamma \subset \bar \Gamma$ locally $1PI$ and where $\bar \Gamma / \bar \gamma $ is the corresponding specified graph and we consider the following coproduct:
$$
\Delta (\bar\Gamma, \bar\gamma ) = \sum_{\substack{\bar\delta \subseteq \bar\gamma \\ \bar\gamma / \bar\delta \in \Cal T }} ( \bar \Gamma, \bar\delta )  \otimes ( \bar\Gamma / \bar\delta, \bar\gamma / \bar\delta).
$$
We define then the convolution product $\divideontimes$ on the group $G$ of characters of the Hopf algebra ${\Cal D}_{\Cal T}$ with values in an endomorphism algebra $\mop {End} \wt {\Cal B}$ (where $\wt {\Cal B}$ is defined in \S 9.1), equipped with a commutative product $\bullet$ compatible with the composition $\circ$, which takes into account the dependence of the external momenta:
$$
 \varphi \divideontimes \psi :=  \diamond  ( \varphi \otimes \psi ) \Delta,
$$
where $\diamond$ is the opposite of the composition product $\circ$. In other words, for all specified graphs $\bar\gamma$, $\bar\Gamma$ such that $\bar\gamma \subset \bar\Gamma$ we have:
$$
(\varphi \divideontimes \psi) (\bar\Gamma, \bar\gamma ) = \sum_{\substack{\bar\delta \subseteq \bar\gamma \\ \bar\gamma / \bar\delta \in \Cal T }}\psi( \bar\Gamma / \bar\delta, \bar\gamma / \bar\delta)\circ \varphi( \bar \Gamma, \bar\delta ).
$$
We then retrieve the renormalization procedure as formulated by A. Smirnov \cite [\S 8.5, 8.6] {Sm}, adopting the A. Connes and D. Kreimer approach for two renormalization schemes: the minimal renormalization scheme and the Taylor expansion scheme.\\
We then get interested on Feynman integrals which are generally divergent. We use an algebraic approach to reinterpret Smirnov's approach \cite [\S 8] {Sm} in the Connes-Kreimer formalism, using the Hopf algebra ${\Cal D}_{\Cal T}$ and a target algebra which is no longer commutative, because of composition of operators.
We use the dimensional regularization procedure for constructing the $D$-dimensional integrals, which consists in writing the divergent integrals that we have to regularize in such a way that the dimension of the physical space-time $d$ can be replaced by any complex number $D$. We follow the approach of P. Etingof \cite{PE} (see also \cite{RM}). Let $V$ be the $d$-dimensional space-time and $\Gamma$ a Feynman graph with $m$ external edges and corresponding momenta $q_1, \ldots, q_m \in V$, and with $n - m$ loops $(n \geq m)$. The amplitude of this graph can be written as: 
$$
I_{( q_1, \ldots, q_m )} (f)  :=  \int_{ V^{n-m}} f( q_1, \ldots, q_n ) dq_{m+1} \ldots dq_n.
$$

\noindent Let $\Gamma$, $\gamma$ and $\delta$ be three Feynman graphs such that  $\delta \subseteq \gamma \subseteq \Gamma$. We denote by:
$$ E := \Cal E (\Gamma),    \;\;\;\;\;\;\;\; F := \Cal E (\Gamma/\delta) \;\;\;\;\;\text{and} \;\;\;\;\; G:= \Cal E (\Gamma/\gamma) = \Cal E (\Gamma/\delta \Big/ \gamma/\delta),$$
the vector spaces spanned by the half-edges of the three graphs $\Gamma$, $\Gamma/\delta$ and $\Gamma/\gamma$ respectively. We denote by $S^2 E^*$ the vector space of symmetric bilinear forms on $E$, by $\bar S_+^2 E^*$ the subset of positive semi-definite bilinear forms on $E$, by $S_+^2 E^*$ the subset of positive definite bilinear forms on $E$, and by $\Cal S (\bar S_+^2 E^*)$ and $\Cal S (S^2 E^*)$ the two spaces of Schwartz functions on $\bar S_+^2 E^*$ and $ S^2 E^*$ respectively. We adopt a similar notations for $F$ and $G$. We view an n-tuple $q:= (q_1, \ldots, q_n)$, with $q_j \in \mathbb R^d$, as an element of the vector space $\mop{Hom}(E, \mathbb R^d)$. The m-tuple $(q_1, \ldots, q_m)$ is nothing but the restriction $ q_{|F} \in \mop{Hom}(F, \mathbb R^d)$. The bilinear form $q^* (\beta) \in \bar S_+^2 E^*$  is obtained by pulling back $\beta \in (S_+^2 \mathbb R^d)^*$ along $q$.
Then we define the integral of a function $f$ that is defined on $\bar S_+^2 E^*$ by:
$$
I^d\big|_{k^* \beta} (f) =  \int_{\{ q \in \smop{Hom}(E, \mathbb R^d) / q_{|F} = k\}} f(q^* \beta ) dq.
$$
We recall that for all $A$ in $S_+^2 E^*$, for all $B$ in $S_+^2 E$ and for $\phi_B (A) := \exp (- \tr (AB))$ we have:
$$I^d\big|_C  (\phi_B ) = \pi^{(n-m)d/2} \exp ( - \mop{tr} (C. B^{F^*})).(\det B_{F^\bot})^{-d/2}.$$
We recall \cite{PE} the construction of the $D$-dimensional integral with parameters $(I^D)_{D \in \mathbb C} : \Cal S ( \bar S_+^2 E^*) \longrightarrow \Cal O (\mathbb C , \Cal S ( \bar S_+^2 F^*) )$ defined by:
$$ I^D \big|_C (\phi_B ) = \pi^{(n-m)D/2} \exp ( - \mop{tr} (C. B^{F^*})).(\det B_{F^\bot})^{-D/2}.$$
We define then the integral $I^D_{\Gamma, \delta} : \Cal S ( \bar S_+^2 E^*) \longrightarrow  \Cal S ( \bar S_+^2 F^*)$ for all $C \in S_+^2 F^*$ and $f \in \Cal S ( \bar S_+^2 E^*)$, by:
$$
I^D_{\Gamma, \delta} (f) (C) := I^D \big|_C (f),
$$
which is a holomorphic function in $D$.\\
We show that for all graphs $\Gamma$, $\gamma$ and $\delta$ such that $\delta \subseteq \gamma \subseteq \Gamma$ we have:
$$I^D_{\Gamma, \gamma} = I^D_{\Gamma, \delta} \circ I^D_{\Gamma/\delta, \Gamma/\gamma}.$$
We denote by $\Cal F (\bar S_+^2 E^*)$ and $\Cal F (S^2 E^*)$ respectively the two spaces of Feynman type functions on $\bar S_+^2 E^*$ and $S^2 E^*$, which are particular rational functions on $\bar S_+^2 E^*$ and $ S^2 E^*$ without real poles (see Definition \ref{fonc fey}), and by $\wt {\Cal F} (\bar S_+^2 E^*)$ the space of functions on $\mathbb C \times \bar S_+^2 E^*$, meromorphic in the first variable, equal to $I_{E', E}^D (g)$ for some function $g \in \Cal F (\bar S_+^2{E'}^*)$, where $E'$ is a vector space containing $E$. The integral $\wt I^D_{\Gamma, \delta} : \wt {\Cal F} (\bar S_+^2 E^*) \longrightarrow \wt {\Cal F} (\bar S_+^2 F^*)$ is defined by: 
$$\wt I^D_{\Gamma, \delta} (f)(C) :=  I^D \big|_C (f),$$
and extends to a meromorphic function of $D$.\\
We denote by $\mop{res} \Gamma$ the residue of the graph $\Gamma$. The Feynman rules are defined for $U = \Cal E (\mop{res} \Gamma)$ by:
$$\wt I_{\Gamma, \Gamma} \big(\varphi (\Gamma)\big) \in \wt {\Cal F} (\bar S_+^2 U^*),$$
where $\varphi (\Gamma)$ is the Feynman amplitude \eqref{propa} in \S\ref{des} below.\\
Let $G$ be the group of characters of $\Cal D_\Cal T$ with values in $\Cal A := \mop {End} \wt {\Cal B} ([z^{-1} , z]])$, equipped by the minimal renormalization scheme:. We show that every element $\varphi$ of $G$ has a unique Birkhoff decomposition compatible with the renormalization scheme chosen:
$$\varphi = \varphi _-^{\divideontimes-1} \divideontimes \varphi_+,$$
with:
$$
\varphi_- (\bar\Gamma,\bar\gamma) =  - P \Big( \varphi (\bar\Gamma,\bar\gamma) + \sum_{\substack{\bar\delta \subsetneq\; \bar\gamma \\ \bar\gamma / \bar\delta \; \in \Cal T}} \varphi (\bar\Gamma/\bar\delta, \bar\gamma/\bar\delta)  \circ \varphi_-(\bar\Gamma, \bar\delta) \Big), 
$$
$$
\varphi_+ (\bar\Gamma,\bar\gamma) = (I- P) \Big( \varphi (\bar\Gamma,\bar\gamma) + \sum_{\substack{\bar\delta \subsetneq\; \bar\gamma \\ \bar\gamma / \bar\delta \; \in \Cal T}} \varphi (\bar\Gamma/\bar\delta, \bar\gamma/\bar\delta)  \circ \varphi_-(\bar\Gamma, \bar\delta) \Big),
$$

and where $P : \Cal A \twoheadrightarrow \Cal A_-$ is the projection parallel to $\Cal A_+$. These formulas constitute the algebraic frame of Smirnov's approach \cite[\S 8.2]{Sm}. The regularized Feynman rules then define an element $\wt I$ of $G (k[z^{-1} , z]])$, 
$$\wt I : \wt{\Cal D}_{\Cal T} \longrightarrow \Cal A \;\;\;\; (\Gamma, \gamma) \longmapsto \wt I (\Gamma, \gamma) :=  \wt I_{\Gamma, \gamma}^D.$$
$$\text{In the Birkhoff decomposition : }\;\;\;\;\;\;\;\wt I = \wt I _-^{\divideontimes-1} \divideontimes \wt I_+,$$
the component $\wt I _-$ is the character of the counterterms, and the renormalized character $\wt I _+$ is evaluated at $D =d$.

\noindent
{\bf Acknowledgements:} I would like to thank Dominique Manchon for support and advice. Research supported by projet CMCU Utique 12G1502.
%%%%%%%%%%%%%%%%%%%%%%%%%%%%%%%%%%%%%%%%%%%%%%%%%%%%%%%%%%%%%%%%
%%%%%%%%%%%%%%%%%%%%%%%%%%%%%%%%%%%%%%%%%%%%%%%%%%%%%%%%%%%%%%%%
%%%%%%%%%%%%%%%%%%%%%%%%%%%%%%%%%%%%%%%%%%%%%%%%%%%%%%%%%%%%%%%%%%%
%%%%%%%%%%%%%%%%%%%%%%%%%%%%%%%%%%%%%%%%%%%%%%%%%%%%%%%%%%%%%%%%%%%
\section{Feynman graphs}
%%%%%%%%%%%%%%%%%%%%%%%%%%%%%%%%%%%%%%%%%%%%%%%%%%%%%%%%%%%%%%%%%%%%%%%%%%%%%%%
%%%%%%%%%%%%%%%%%%%%%%%%%%%%%%%%%%%%%%%%%%%%%%%%%%%%%%%%%%%%%%%%%%%%%%%%%%%%%%%
%%%%%%%%%%%%%%%%%%%%%%%%%%%%%%%%%%%%%%%%%%%%%%%%%%%%%%%%%%%%%%%%%%%
%%%%%%%%%%%%%%%%%%%%%%%%%%%%%%%%%%%%%%%%%%%%%%%%%%%%%%%%%%%%%%%%%%%
\subsection{Basic definitions}
%%%%%%%%%%%%%%%%%%%%%%%%%%%%%%%%%%%%%%%%%%%%%%%%%%%%%%%%%%%%%%%%%%%%%%%%%%%%%%%
%%%%%%%%%%%%%%%%%%%%%%%%%%%%%%%%%%%%%%%%%%%%%%%%%%%%%%%%%%%%%%%%%%%%%%%%%%%%%%%
A Feynman graph is a graph with a finite number of vertices and edges, which can be internal or external. An internal edge is an edge connected at both ends to a vertex, an external edge is an edge with one open end, the other end being connected to a vertex. The edges are obtained by using half-edges.\\
More precisely, let us consider two finite sets $\Cal V$ and $\Cal E$. A graph $\Gamma$ with $\Cal V$ (resp. $\Cal E$) as set of vertices (resp. half-edges) is defined as follows: let $\sigma : \Cal E \longrightarrow \Cal E $ be an involution and $\partial : \Cal E  \longrightarrow \Cal V$. For any vertex $v\in \Cal V$ we denote by $st(v) = \{e \in \Cal E / \partial (e) = v \}$ the set of half-edges adjacent to $v$. The fixed points of $\sigma$ are the \textsl {external edges} and the \textsl {internal edges} are given by the pairs $\{e , \sigma (e)\}$ for $e \neq \sigma (e)$. The graph $\Gamma$ associated to these data is obtained by attaching half-edges $e\in st(v)$ to any vertex $v\in\Cal V$, and joining the two half-edges $e$ and $\sigma(e)$ if $\sigma(e)\not =e$.\\

Several types of half-edges will be considered later on: the set $\Cal E$ is partitioned into several pieces $\Cal E_i$. In that case we ask that the involution $\sigma$ respects the different types of half-edges, i.e. $\sigma(\Cal E_i)\subset \Cal E_i$.\\

We denote by $\Cal I(\Gamma)$ the set of internal edges and by $\mop{Ext}(\Gamma)$ the set of external edges. The \textsl {loop number} of a graph $\Gamma$ is given by: 
$$L(\Gamma) =  \left|\Cal I (\Gamma)\right| - \left|\Cal V (\Gamma)\right|  + \left|\pi_0 (\Gamma)\right|,$$
where $\pi_0 (\Gamma)$ is the set of connected components of $\Gamma$.\\

A one-particle irreducible graph (in short, $1PI$ graph) is a connected graph which remains connected when we cut any internal edge. A disconnected graph is said to be locally $1PI$ if any of its connected components is $1PI$.

A covering subgraph of $\Gamma$ is a Feynman graph $\gamma$ (not necessarily connected), obtained from $\Gamma$ by cutting internal edges. In other words:
\begin{enumerate}
\item $ \Cal V (\gamma) = \Cal V (\Gamma)$. 
\item $ \Cal E (\gamma) =  \Cal E (\Gamma)$.
\item $ \sigma_\Gamma (e) = e \Longrightarrow \sigma_\gamma (e) = e$.
\item If $ \sigma_\gamma (e) \neq  \sigma_\Gamma (e) \;\; \text{then} \;\; \sigma_\gamma (e) = e \;\; \text{and} \;\; \sigma_\gamma (\sigma_\Gamma (e)) = \sigma_\Gamma (e)$.
\end{enumerate}

For any covering subgraph $\gamma$, the contracted graph $\Gamma / \gamma$ is defined by shrinking all connected components of $\gamma$ inside $\Gamma$ onto a point.  

The residue of the graph $\Gamma$, denoted by $\mop{res} \Gamma$, is the contracted graph $\Gamma / \Gamma$.

The skeleton of a graph $\Gamma$ denoted by $\mop {sk} \Gamma$ is a graph obtained by cutting all internal edges.
%%%%%%%%%%%%%%%%%%%%%%%%%%%%%%%%%%%%%%%%%%%%%%%%%%%%%%%%%%%%%%%%%%%
%%%%%%%%%%%%%%%%%%%%%%%%%%%%%%%%%%%%%%%%%%%%%%%%%%%%%%%%%%%%%%%%%%%
\subsection {Quantum field theory and specified graphs } 
%%%%%%%%%%%%%%%%%%%%%%%%%%%%%%%%%%%%%%%%%%%%%%%%%%%%%%%%%%%%%%%%%%%
%%%%%%%%%%%%%%%%%%%%%%%%%%%%%%%%%%%%%%%%%%%%%%%%%%%%%%%%%%%%%%%%%%% 
We will work inside a physical theory $\Cal T$, ($\varphi^3 , \; \varphi^4$, QED, QCD  etc). The particular form of the Lagrangian leads to consider certain types of vertices and edges. A difficulty appears: the type of half-edges of $st (v)$ is not sufficient to determine the type of the vertex $v$. We denote by $\Cal E (\Cal T)$ the set of possible types of half-edges and by $\Cal V(\Cal T)$ the set of possible types of vertices.
\begin{ex}
$\Cal E ( \varphi^3) = \{ \graphearetenor \} \;\;\;, \;\;\; \Cal E ( QED) = \{ \graphearetenor  , \graphearetecur \}$.\\
${\Cal V}(\varphi^3) = \{ \graphearetenorcroixzero , \graphearetenorcroixun , \graphev  \} \;\;\;, \;\;\;{\Cal V}(QED) = \{ \graphearetenorcroixzero , \graphearetenorcroixun , \graphevcur , \graphearetecurcroixun \}$. 
\end{ex}
\begin{definition} \label{df1}
A specified graph of theory $\Cal T$ is a couple $(\Gamma , \underline{i})$ where:
\begin{enumerate}
\item $\Gamma$ is a locally $1PI$ superficially divergent graph with half-edges and vertices of the type prescribed in $\Cal T$. 
\item $\underline{i} : \pi_0(\Gamma) \longrightarrow \mathbb N$, the values of $\underline {i}(\gamma)$ being prescribed by the possible types of vertex obtained by contracting the connected component $\gamma$ on a point.
\end{enumerate}
We will say that $(\gamma , \underline{j})$ is a specified covering subgraph of $(\Gamma , \underline{i})$, $\big( (\gamma , \underline{j}) \subset (\Gamma , \underline{i}) \big)$  if:
\begin{enumerate}
\item $\gamma$ is a covering subgraph of $\Gamma$.
\item if $\gamma_0$ is a full connected component of $\gamma$, i.e if $\gamma_0$ is also a connected component of $\Gamma$, then $\underline{j} (\gamma_0) = \underline{i} (\gamma_0)$.
\end{enumerate}
\end{definition}
\begin{remark}
Sometimes we denote by $\bar \Gamma = (\Gamma , \underline{i})$ the specified graph, and we will write $\bar \gamma \subset \bar \Gamma$ for $(\gamma , \underline{j}) \subset (\Gamma , \underline{i})$.
\end{remark}
\begin{definition}
Let be $(\gamma , \underline{j}) \subset (\Gamma , \underline{i})$. The contracted specified subgraph is written:
$$\bar\Gamma / \bar \gamma = (\Gamma/\bar\gamma , \underline{i}),$$
where $\bar\Gamma/\bar\gamma$ is obtained by contracting each connected component of $\gamma$ on a point, and specifying the vertex obtained with $\underline {j}$.
\end{definition}
\begin{remark}
The specification $\underline{i}$ is the same for the graph $\bar\Gamma$ and the contracted graph $\bar\Gamma / \bar \gamma$. 
\end{remark}

%%%%%%%%%%%%%%%%%%%%%%%%%%%%%%%%%%%%%%%%%%%%%%%%%%%%%%%%%%%%%%%%%%%
%%%%%%%%%%%%%%%%%%%%%%%%%%%%%%%%%%%%%%%%%%%%%%%%%%%%%%%%%%%%%%%%%%%
\subsection {External structures}
%%%%%%%%%%%%%%%%%%%%%%%%%%%%%%%%%%%%%%%%%%%%%%%%%%%%%%%%%%%%%%%%%%%
%%%%%%%%%%%%%%%%%%%%%%%%%%%%%%%%%%%%%%%%%%%%%%%%%%%%%%%%%%%%%%%%%%%
Let $d$ be an integer $\geq 1$ (the dimension). For any half-edge $e$ of $\Gamma$ we denote by $p_e \in \mathbb R^d$ the corresponding moment. The momenta space of graph $\Gamma$ is defined by:
$$ W_\Gamma = \{ p :\Cal E (\Gamma) \longrightarrow \mathbb R^d , \sum_{e \in st(v)} p_e = 0 \;\;\forall v \in \Cal V (\Gamma) ,\;\; p_e + p_{\sigma(e)} = 0\;\; \forall e \in \Cal E (\Gamma) \;\;/ \;\; e \neq\sigma(e) \}.$$
The space of external momenta of $\Gamma$ is nothing but $W_{\smop{res} \Gamma}$. 
%%%%%%%%%%%%%%%%%%%%%%%%%%%%%%%%%%%%%%%%%%%%%%%%%%%%%%%%%%%%%%%%%%%
%%%%%%%%%%%%%%%%%%%%%%%%%%%%%%%%%%%%%%%%%%%%%%%%%%%%%%%%%%%%%%%%%%%
\section {Construction of Feynman graphs Hopf algebra}
%%%%%%%%%%%%%%%%%%%%%%%%%%%%%%%%%%%%%%%%%%%%%%%%%%%%%%%%%%%%%%%%%%%
%%%%%%%%%%%%%%%%%%%%%%%%%%%%%%%%%%%%%%%%%%%%%%%%%%%%%%%%%%%%%%%%%%%
%%%%%%%%%%%%%%%%%%%%%%%%%%%%%%%%%%%%%%%%%%%%%%%%%%%%%%%%%%%%%%%%%%%
%%%%%%%%%%%%%%%%%%%%%%%%%%%%%%%%%%%%%%%%%%%%%%%%%%%%%%%%%%%%%%%%%%%
\subsection {The bialgebra of specified graphs}
%%%%%%%%%%%%%%%%%%%%%%%%%%%%%%%%%%%%%%%%%%%%%%%%%%%%%%%%%%%%%%%%%%%
%%%%%%%%%%%%%%%%%%%%%%%%%%%%%%%%%%%%%%%%%%%%%%%%%%%%%%%%%%%%%%%%%%%
Let $\wt {\Cal H}_{\Cal T}$ be the vector space generated by the specified superficially divergent Feynman graphs of a field theory $\Cal T$. The product is given by the concatenation, the unit $\un$ is identified with the empty graph and the coproduct is defined by:
\begin{eqnarray*}
\Delta (\bar\Gamma ) = \sum_{\substack{\bar\gamma \subseteq \bar\Gamma \\ \bar\Gamma / \bar\gamma \in \Cal T }} \bar \gamma \otimes \bar\Gamma / \bar\gamma,
\end{eqnarray*}
where the sum runs over all locally $1PI$ specified covering subgraphs $ \bar\gamma = (\gamma,\underline{j})$ of $ \bar\Gamma = (\Gamma,\underline{i})$, such that the contracted subgraph $(\Gamma/(\gamma,\underline{j}) , \underline{i})$ is in the theory $\Cal T$.
\begin{remark}
The condition $\bar\Gamma / \bar\gamma \in \Cal T$ is crucial, and means also that $\bar\gamma$ is a ''superficially divergent'' subgraph. For example, in $\varphi^3$, for :
$$\Gamma = \graphedeltasansindice  \;\;\text{ and } \;\; \gamma = \graphedeltacontratddd \graphev \graphev,$$
gives $\Gamma / \gamma = \graphedeltacontratdddd$ by contraction, which must be eliminated because of the tetravalent vertex.  
\end{remark}
\begin{ex} \label{ex1}
In $\varphi^3$ Theory:
\begin{eqnarray*}
\Delta(\grapheexmdel , 0) &=& (\grapheexmdel , 0) \otimes \graphearetenorcroixzero  + \graphev \graphev \graphev \graphev \otimes (\grapheexmdel , 0)\\
&&\\
&+& (\graphev \graphev \graphecerc , 0)\otimes (\graphecerccroizero , 0)\\
&&\\
&+&  (\graphev \graphev \graphecerc , 1)\otimes (\graphecerccroiun , 0).
\end{eqnarray*}
\textnormal {In QED:}
\begin{eqnarray*}
\Delta (\qedj , 1) &=& \graphevcur \graphevcur \graphevcur \graphevcur\otimes (\qedj , 1)\\
&&\\
&+& (\qedj , 1) \otimes \graphearetenorcroixun  + (\qeda \graphevcur \graphevcur, 0) \otimes (\qedcoiz , 1)\\
&&\\ 
&+&(\qeda \graphevcur \graphevcur, 1) \otimes (\qedcoiu , 1).
\end{eqnarray*}
\end{ex}
\begin{theorem} \cite{DMB}
The coproduct $\Delta$ is coassociative. 
\end{theorem}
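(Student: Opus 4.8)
The plan is to prove coassociativity by verifying that both $(\Delta \otimes \mathrm{id}) \circ \Delta$ and $(\mathrm{id} \otimes \Delta) \circ \Delta$ applied to a specified graph $\bar\Gamma = (\Gamma, \underline{i})$ yield the same sum, indexed by nested pairs of locally $1PI$ specified covering subgraphs. Concretely, I would show that each side equals
$$
\sum_{\substack{\bar\delta \subseteq \bar\gamma \subseteq \bar\Gamma}} \bar\delta \otimes \bar\gamma / \bar\delta \otimes \bar\Gamma / \bar\gamma,
$$
where the sum runs over all chains of locally $1PI$ specified covering subgraphs satisfying the theory constraints at each contraction. The strategy is the standard one for Connes--Kreimer--type coproducts, but the bookkeeping must be done with care for the specifications $\underline{i}$, $\underline{j}$ and for the condition that each successive contraction remains in $\Cal T$.

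First I would expand $(\Delta \otimes \mathrm{id})\Delta(\bar\Gamma)$: the outer coproduct produces terms $\bar\gamma \otimes \bar\Gamma / \bar\gamma$, and applying $\Delta$ to the left factor $\bar\gamma = (\gamma, \underline{j})$ produces pairs $\bar\delta \subseteq \bar\gamma$ with $\bar\gamma / \bar\delta \in \Cal T$. Next I would expand $(\mathrm{id} \otimes \Delta)\Delta(\bar\Gamma)$: here the outer coproduct gives $\bar\delta \otimes \bar\Gamma / \bar\delta$, and applying $\Delta$ to the right factor $\bar\Gamma / \bar\delta$ produces subgraphs of the contracted graph $\Gamma/\delta$. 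The crux is to set up a bijection between the indexing sets of the two expansions. The key structural input is the canonical identification, for $\delta \subseteq \gamma \subseteq \Gamma$, between covering subgraphs $\gamma/\delta$ of $\Gamma/\delta$ and covering subgraphs $\gamma$ of $\Gamma$ containing $\delta$, together with the ``third isomorphism'' relation $\Gamma/\gamma \cong (\Gamma/\delta)\big/(\gamma/\delta)$ already recorded in the excerpt via $\Cal E(\Gamma/\gamma) = \Cal E(\Gamma/\delta \big/ \gamma/\delta)$.

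I would then check that the three constraints match up under this bijection: on the left side the conditions are $\bar\gamma/\bar\delta \in \Cal T$ and $\bar\Gamma/\bar\gamma \in \Cal T$, while on the right side they are $\bar\Gamma/\bar\delta \in \Cal T$ and $(\bar\Gamma/\bar\delta)\big/(\bar\gamma/\bar\delta) \in \Cal T$. Using the contraction isomorphism, the second condition on each side is literally the same statement about $\bar\Gamma/\bar\gamma$; the genuine work is verifying that the pair of remaining conditions are logically equivalent, i.e. that $\bar\gamma/\bar\delta \in \Cal T$ together with $\bar\Gamma/\bar\gamma \in \Cal T$ holds exactly when $\bar\Gamma/\bar\delta \in \Cal T$ together with the corresponding quotient condition holds. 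I would also verify that the specifications transport correctly: the label $\underline{i}$ is inherited unchanged by every contraction (as recorded in the remark after the contraction definition), and the intermediate label $\underline{j}$ on $\bar\delta$ is compatible whether $\bar\delta$ is viewed as a subgraph of $\bar\gamma$ or of $\bar\Gamma$.

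The main obstacle I anticipate is precisely this equivalence of the theory-membership conditions, since the requirement ``$\bar\Gamma/\bar\gamma \in \Cal T$'' is not automatic and encodes the superficial-divergence restriction; I must confirm that it behaves transitively with respect to iterated contraction and that no term is lost or gained on either side. Locally $1PI$-ness of the intermediate subgraphs also needs to be checked to be preserved under the bijection, so that every term appearing on one side is a legitimate term on the other. Once these compatibilities are established, both iterated coproducts are manifestly equal to the symmetric three-fold sum above, which gives coassociativity.
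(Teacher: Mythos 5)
Your overall strategy is exactly the standard one, and it coincides with what the paper does: the theorem itself is only cited from \cite{DMB}, but the paper's proof of the analogous statement for the doubled coproduct on $\wt{\Cal D}_{\Cal T}$ proceeds precisely by expanding $(\Delta\otimes \mathrm{id})\Delta$ and $(\mathrm{id}\otimes\Delta)\Delta$, identifying the index sets via the correspondence $\bar\gamma \leftrightarrow \bar\gamma/\bar\delta$ and the contraction identity $\bar\Gamma/\bar\gamma = (\bar\Gamma/\bar\delta)\big/(\bar\gamma/\bar\delta)$, and then observing that the two pairs of $\Cal T$-membership conditions are equivalent. Your identification of the two condition sets, of the role of the specification $\underline{i}$, and of the need to check that local $1PI$-ness transports across the bijection is all correct.

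The genuine gap is that you stop at exactly the point where the mathematical content lives: you announce that the equivalence of $\{\bar\gamma/\bar\delta\in\Cal T,\ \bar\Gamma/\bar\gamma\in\Cal T\}$ with $\{\bar\Gamma/\bar\delta\in\Cal T,\ \bar\Gamma/\bar\gamma\in\Cal T\}$ ``must be confirmed'' but never confirm it, and this equivalence is not formal bookkeeping --- it is the statement that fails for the naive (unspecified) coproduct in theories such as $\varphi^3$, and it is the sole reason specified graphs are introduced. The argument you are missing is roughly this: since $\delta\subseteq\gamma\subseteq\Gamma$ are covering subgraphs, they share the same vertex and half-edge sets, so the vertices of $\gamma/\delta$ and of $\Gamma/\delta$ are the \emph{same} contracted connected components of $\delta$, carrying the same adjacent half-edges; because $\mop{st}(v)$ alone does not determine a vertex type (e.g.\ the two-point vertices $\graphearetenorcroixzero$ versus $\graphearetenorcroixun$), it is the specification $\underline{j}$ attached to each component of $\delta$ --- the same on both sides --- that forces the contracted vertex types, and hence the admissibility in $\Cal T$, to agree. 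Without spelling this out (and checking that the residual condition $\bar\Gamma/\bar\gamma\in\Cal T$ is common to both sides, which you did note), the proof reduces to the trivial shell shared by all Connes--Kreimer coassociativity arguments and does not establish the theorem.
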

%%%%%%%%%%%%%%%%%%%%%%%%%%%%%%%%%%%%%%%%%%%%%%%%%%%%%%%%%%
%%%%%%%%%%%%%%%%%%%%%%%%%%%%%%%%%%%%%%%%%%%%%%%%%%%%%%%%%%%
%%%%%%%%%%%%%%%%%%%%%%%%%%%%%%%%%%%%%%%%%%%%%%%%%%%%%%%%%%%%%%%
%%%%%%%%%%%%%%%%%%%%%%%%%%%%%%%%%%%%%%%%%%%%%%%%%%%%%%%%%%%%%%%%%%
\subsection {The Hopf algebra of specified graphs}
%%%%%%%%%%%%%%%%%%%%%%%%%%%%%%%%%%%%%%%%%%%%%%%%%%%%%%%%%%
%%%%%%%%%%%%%%%%%%%%%%%%%%%%%%%%%%%%%%%%%%%%%%%%%%%%%%%%%%%
%%%%%%%%%%%%%%%%%%%%%%%%%%%%%%%%%%%%%%%%%%%%%%%%%%%%%%%%%%%%%%%
%%%%%%%%%%%%%%%%%%%%%%%%%%%%%%%%%%%%%%%%%%%%%%%%%%%%%%%%%%%%%%%%%%
The Hopf algebra ${\Cal H}_{\Cal T}$ is given by identifying all elements of degree zero (the residues) to unit $\un$:
\begin{equation}
{\Cal H}_{\Cal T} = \wt{\Cal H}_{\Cal T} / \Cal J
\end{equation}
where $\Cal J$ is the ideal generated by the elements $\un - \mop{res} \bar\Gamma$ where $\bar\Gamma$ is an $1PI$ specified graph. One immediately checks that $\Cal J$ is a bi-ideal. ${\Cal H}_{\Cal T}$ is a connected graded bialgebra, it is therefore a connected graded Hopf algebra. The coproduct then becomes:
\begin {equation}
\Delta (\bar\Gamma ) = \un \otimes \bar\Gamma + \bar\Gamma \otimes \un + \sum_{\substack{\bar\gamma \hbox{ \sevenrm proper subgraph of}\; \bar\Gamma \\ \hbox{ \sevenrm loc 1PI.}\; \bar\Gamma / \bar\gamma \in \Cal T }} \bar\gamma \otimes \bar\Gamma /\bar\gamma,
\end{equation}
%%%%%%%%%%%%%%%%%%%%%%%%%%%%%%%%%%%%%%%%%%%%%%%%%%%%%%%%%%%%%%%%%%%
%%%%%%%%%%%%%%%%%%%%%%%%%%%%%%%%%%%%%%%%%%%%%%%%%%%%%%%%%%%%%%%%%%%
%%%%%%%%%%%%%%%%%%%%%%%%%%%%%%%%%%%%%%%%%%%%%%%%%%%%%%%%%%%%%%%%%%%
\section {Doubling the bialgebra of specified graphs}
%%%%%%%%%%%%%%%%%%%%%%%%%%%%%%%%%%%%%%%%%%%%%%%%%%%%%%%%%%%%%%%%%%%
%%%%%%%%%%%%%%%%%%%%%%%%%%%%%%%%%%%%%%%%%%%%%%%%%%%%%%%%%%%%%%%%%%%
%%%%%%%%%%%%%%%%%%%%%%%%%%%%%%%%%%%%%%%%%%%%%%%%%%%%%%%%%%%%%%%%%%%
%%%%%%%%%%%%%%%%%%%%%%%%%%%%%%%%%%%%%%%%%%%%%%%%%%%%%%%%%%%%%%%%%%%
%%%%%%%%%%%%%%%%%%%%%%%%%%%%%%%%%%%%%%%%%%%%%%%%%%%%%%%%%%%%%%%%%%%
%%%%%%%%%%%%%%%%%%%%%%%%%%%%%%%%%%%%%%%%%%%%%%%%%%%%%%%%%%%%%%%%%%%
Let $\wt{\Cal D}_{\Cal T}$ be the vector space spanned by the pairs $(\bar\Gamma, \bar\gamma)$ of locally $1PI$ specified graphs, with $\bar\gamma \subset \bar\Gamma$ and $\bar\Gamma / \bar\gamma \in \wt {\Cal H}_{\Cal T}$. This is the free commutative algebra generated by the corresponding connected objects. The product is again given by juxtaposition, and the coproduct is defined as follows: 
\begin{equation}
\Delta (\bar\Gamma, \bar\gamma ) = \sum_{\substack{\bar\delta \subseteq \bar\gamma \\ \bar\gamma / \bar\delta \in \Cal T }} ( \bar \Gamma, \bar\delta )  \otimes ( \bar\Gamma / \bar\delta, \bar\gamma / \bar\delta)
\end{equation}
\begin{proposition} 
$({\wt{\Cal D}}_{\Cal T}, m, \Delta, u,\varepsilon)$ is a graded bialgebra, and 
\begin{eqnarray*}
P_2: {\wt{\Cal D}}_{\Cal T} &\longrightarrow& {\wt{\Cal H}}_{\Cal T} \\ (\bar\Gamma, \bar\gamma ) &\longmapsto& \bar\gamma
\end{eqnarray*} 
is a bialgebra morphism. 
\end{proposition}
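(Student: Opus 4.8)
The plan is to check the bialgebra axioms one by one, exploiting the fact that the first slot $\bar\Gamma$ of each pair is carried along inertly so that the combinatorics collapse to that of the coproduct on $\wt{\Cal H}_{\Cal T}$, already known to be coassociative. The algebra structure is free commutative by construction (juxtaposition of pairs, with unit the empty pair). I would grade $(\bar\Gamma, \bar\gamma)$ by the loop number $L(\bar\gamma)$ of the second component, and take $\varepsilon$ to be the projection onto degree zero, so that $\varepsilon(\bar\Gamma, \bar\gamma) = 1$ when $\bar\gamma$ has no internal edges (i.e.\ $\bar\gamma = \mop{sk}\,\bar\gamma$) and $0$ otherwise.

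First I would establish coassociativity. Expanding $(\Delta \otimes \mathrm{id})\Delta(\bar\Gamma, \bar\gamma)$ produces a triple sum over $\bar\epsilon \subseteq \bar\delta \subseteq \bar\gamma$ of terms $(\bar\Gamma, \bar\epsilon) \otimes (\bar\Gamma/\bar\epsilon, \bar\delta/\bar\epsilon) \otimes (\bar\Gamma/\bar\delta, \bar\gamma/\bar\delta)$, whereas $(\mathrm{id} \otimes \Delta)\Delta(\bar\Gamma, \bar\gamma)$ produces a double sum over $\bar\epsilon \subseteq \bar\gamma$ and $\bar\delta' \subseteq \bar\gamma/\bar\epsilon$. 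The two coincide under the substitution $\bar\delta' = \bar\delta/\bar\epsilon$, together with the transitivity-of-contraction identities $(\bar\Gamma/\bar\epsilon)/(\bar\delta/\bar\epsilon) = \bar\Gamma/\bar\delta$ and $(\bar\gamma/\bar\epsilon)/(\bar\delta/\bar\epsilon) = \bar\gamma/\bar\delta$. I expect this to be the main obstacle: beyond the bijection on subgraphs, one must verify that the membership constraint $\bar\gamma/\bar\delta \in \Cal T$ appearing on the left-hand side is jointly equivalent to the pair of constraints $\bar\delta/\bar\epsilon \in \Cal T$ and $(\bar\gamma/\bar\epsilon)/(\bar\delta/\bar\epsilon) \in \Cal T$ on the right, so that the index sets of the two sums agree term by term. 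This is exactly the reduction to the coassociativity argument for $\Delta$ on $\wt{\Cal H}_{\Cal T}$ of \cite{DMB}, now with $\bar\Gamma$ as a passive spectator in the left slot.

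Next I would verify that $\Delta$ is multiplicative and compatible with unit and counit. Since the product is disjoint juxtaposition and a covering subgraph of $\bar\Gamma_1 \sqcup \bar\Gamma_2$ is precisely a pair consisting of one covering subgraph of each factor, the defining sum for $\Delta$ factorizes and gives $\Delta(x \cdot y) = \Delta(x)\,\Delta(y)$ after the standard middle transposition. The counit axioms $(\varepsilon \otimes \mathrm{id})\Delta = \mathrm{id} = (\mathrm{id} \otimes \varepsilon)\Delta$ follow from the two extremal terms of the sum: the term $\bar\delta = \bar\gamma$ contributes $(\bar\Gamma, \bar\gamma) \otimes (\bar\Gamma/\bar\gamma, \mop{res}\,\bar\gamma)$ and the minimal term $\bar\delta = \mop{sk}\,\bar\gamma$ contributes $(\bar\Gamma, \mop{sk}\,\bar\gamma) \otimes (\bar\Gamma, \bar\gamma)$, and local $1PI$-ness guarantees these are the only terms surviving the degree-zero projection, because cutting any internal edge of a locally $1PI$ graph strictly lowers the loop number. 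Finally, additivity $L(\bar\gamma) = L(\bar\delta) + L(\bar\gamma/\bar\delta)$ of the loop number under contraction shows $\Delta$ respects the grading, so $\wt{\Cal D}_{\Cal T}$ is a graded bialgebra.

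It remains to check that $P_2$ is a bialgebra morphism, which is a direct computation. It sends juxtaposition to juxtaposition and the unit to the unit, hence is an algebra morphism, and it intertwines the counits. For the coproducts, applying $P_2 \otimes P_2$ to $\Delta(\bar\Gamma, \bar\gamma)$ discards both first slots and yields $\sum_{\bar\delta \subseteq \bar\gamma,\; \bar\gamma/\bar\delta \in \Cal T} \bar\delta \otimes \bar\gamma/\bar\delta$, which is exactly $\Delta_{\wt{\Cal H}_{\Cal T}}(\bar\gamma) = \Delta_{\wt{\Cal H}_{\Cal T}}\big(P_2(\bar\Gamma, \bar\gamma)\big)$. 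Hence $(P_2 \otimes P_2)\Delta = \Delta_{\wt{\Cal H}_{\Cal T}}\, P_2$, completing the proof.
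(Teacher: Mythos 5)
Your proposal is correct and follows essentially the same route as the paper: direct expansion of $(\Delta\otimes\mathrm{id})\Delta$ and $(\mathrm{id}\otimes\Delta)\Delta$ with $\bar\Gamma$ carried as a spectator, matching of the index sets via the equivalence of the constraints $\{\bar\gamma/\bar\delta,\ \bar\gamma/\bar\sigma\in\Cal T\}$ and $\{\bar\gamma/\bar\delta,\ \bar\delta/\bar\sigma\in\Cal T\}$, and a one-line check that $(P_2\otimes P_2)\Delta=\Delta\circ P_2$. You spell out the unit/counit/multiplicativity verifications that the paper dismisses as obvious, but the underlying argument is the same.
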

\begin{proof}
The unit is the pair $(\textbf{1}, \textbf{1})$, where $\textbf{1}$ is the empty graph, co-unit is given by $\varepsilon (\bar\Gamma, \bar\gamma ) = \varepsilon (\bar\gamma )$, the grading is given by the loop number of the subgraph:
\begin{equation}
 |(\bar\Gamma, \bar\gamma )| = |\bar\gamma|.
\end{equation}
Let us now check coassociativity:
\begin{eqnarray*}
(\Delta \otimes id)\Delta (\bar\Gamma, \bar\gamma ) &=& (\Delta \otimes id) \Big( \sum_{\substack{\bar\delta \subseteq \bar\gamma \\ \bar\gamma / \bar\delta \in \Cal T }} ( \bar \Gamma, \bar\delta )  \otimes ( \bar\Gamma / \bar\delta, \bar\gamma / \bar\delta) \Big)\\
&=& \sum_{\substack{\bar\sigma \subseteq \bar\delta \subseteq \bar\gamma \_ \bar\delta / \bar\sigma \;,\;\bar\gamma / \bar\delta \in \Cal T }} ( \bar \Gamma, \bar\sigma )  \otimes ( \bar\Gamma / \bar\sigma, \bar\delta / \bar\sigma) \otimes ( \bar\Gamma / \bar\delta, \bar\gamma / \bar\delta),
\end{eqnarray*}
whereas,
\begin{eqnarray*}
(id \otimes \Delta)\Delta (\bar\Gamma, \bar\gamma ) &=& (id \otimes \Delta) \Big( \sum_{\substack{\bar\sigma \subseteq \bar\gamma \\ \bar\gamma / \bar\sigma \in \Cal T }} ( \bar \Gamma, \bar\sigma )  \otimes ( \bar\Gamma / \bar\sigma, \bar\gamma / \bar\sigma) \Big)\\  
&=& \sum_{\substack{\bar\sigma \subseteq \bar\gamma \\ \bar\gamma / \bar\sigma \in \Cal T }} \sum_{\substack{\wt\delta \subseteq \bar\gamma / \bar\sigma \\ \bar\gamma / \bar\sigma / \wt\delta \in \Cal T }} (\bar \Gamma, \bar\sigma) \otimes (\bar\Gamma / \bar\sigma, \wt\delta) \otimes (\bar\Gamma / \bar\sigma \Big/\wt\delta , \bar\gamma / \bar\sigma \Big/ \wt\delta)\\ 
&=& \sum_{\substack{\bar\sigma \subseteq \bar\delta \subseteq \bar\gamma \\ \bar\gamma / \bar\sigma \;,\;\bar\gamma / \bar\sigma \in \Cal T }} (\bar \Gamma, \bar\sigma )  \otimes ( \bar\Gamma / \bar\sigma, \bar\delta / \bar\sigma) \otimes ( \bar\Gamma / \bar\delta, \bar\gamma / \bar\delta).
\end{eqnarray*}
The conditions $\{\bar\gamma / \bar\delta \;,\;\bar\gamma / \bar\sigma \in \Cal T \}$ and $\{\bar\gamma / \bar\delta \;,\;\bar\delta / \bar\sigma \in \Cal T \}$ are equivalent, which proves coassociativity. Compatibility with product and grading are obvious. Finally $P_2$ is an algebra morphism, and we have:
\begin{eqnarray*}
\Delta \circ P_2 (\bar\Gamma, \bar\gamma ) &=& \Delta(\bar\gamma ) \\  
&=& \sum_{\substack{\bar\gamma \subseteq \bar\Gamma \\ \bar\Gamma / \bar\gamma \in \Cal T }} \bar \delta \otimes \bar\gamma / \bar\delta\\ 
&=& (P_2 \otimes P_2) \Delta (\bar\Gamma, \bar\gamma ).
\end{eqnarray*}
\end{proof}
%%%%%%%%%%%%%%%%%%%%%%%%%%%%%%%%%%%%%%%%%%%%%%%%%%%%%%%%%%%%%%%%%%%
%%%%%%%%%%%%%%%%%%%%%%%%%%%%%%%%%%%%%%%%%%%%%%%%%%%%%%%%%%%%%%%%%%%
\section {Feynman rules}
%%%%%%%%%%%%%%%%%%%%%%%%%%%%%%%%%%%%%%%%%%%%%%%%%%%%%%%%%%%%%%%%%%%
%%%%%%%%%%%%%%%%%%%%%%%%%%%%%%%%%%%%%%%%%%%%%%%%%%%%%%%%%%%%%%%%%%%
%%%%%%%%%%%%%%%%%%%%%%%%%%%%%%%%%%%%%%%%%%%%%%%%%%%%%%%%%%%%%%%%%%%
%%%%%%%%%%%%%%%%%%%%%%%%%%%%%%%%%%%%%%%%%%%%%%%%%%%%%%%%%%%%%%%%%%%
%%%%%%%%%%%%%%%%%%%%%%%%%%%%%%%%%%%%%%%%%%%%%%%%%%%%%%%%%%%%%%%%%%%
\subsection {Describing the integrand}\label{des}
%%%%%%%%%%%%%%%%%%%%%%%%%%%%%%%%%%%%%%%%%%%%%%%%%%%%%%%%%%%%%%%%%%%
%%%%%%%%%%%%%%%%%%%%%%%%%%%%%%%%%%%%%%%%%%%%%%%%%%%%%%%%%%%%%%%%%%%
Let $\Gamma$ be a Feynman graph. Every vertex $v$ comes with its coupling constant $g_v$ depending only on the vertex type of $v$. Every internal edge $(e_- e_+)$ (resp every external edge $e$) comes with  propagator $G_{e_- e_+} : \mathbb R^d \longrightarrow \mathbb C$ (resp $G_e$) which depends only on the type of edge. This is a rational function without real poles in the Euclidean case with mass. For example in $\varphi^3$ or $\varphi^4$ theory, all propagators (internal and external) are given by the same function $G$:
$$G (p) = \frac{1}{p^2 + m^2}.$$
Then $\varphi(\Gamma)$ is an element of $\Cal C^{\infty} (W_\Gamma ,\mathbb C)$ defined by:
\begin{equation}\label{propa}
 \varphi(\Gamma)(p) = \prod_{v\in \Cal V (\Gamma)} g_v \prod_{\{e,\sigma(e)\}, \sigma(e)\neq e} G_{e \sigma(e)} (p_e)\prod_{\sigma(e)= e} G_e (p_e).
\end{equation}
Note that if $G_{e \sigma(e)}$ is not an even function, we must orient the corresponding internal edge and we set $G_{\sigma(e) e} (p) = G_{e \sigma(e)} (- p)$.
\begin{ex} We consider the following graph in $\varphi^3$ theory:
\vskip-0.8cm 
$$\hskip-3cm\Gamma = \grphopropg$$
The amplitude of $\Gamma$ can be written:
$$\varphi(\Gamma) = g^2 . \frac{1}{(p^2 + m^2)^2} . \frac{1}{k^2 + m^2}. \frac{1}{(k - p)^2 + m^2}.$$ 
\end{ex}
%%%%%%%%%%%%%%%%%%%%%%%%%%%%%%%%%%%%%%%%%%%%%%%%%%%%%%%%%%%%%%%%%%%
%%%%%%%%%%%%%%%%%%%%%%%%%%%%%%%%%%%%%%%%%%%%%%%%%%%%%%%%%%%%%%%%%%%
\subsection {Integrating internal momenta}
%%%%%%%%%%%%%%%%%%%%%%%%%%%%%%%%%%%%%%%%%%%%%%%%%%%%%%%%%%%%%%%%%%%
%%%%%%%%%%%%%%%%%%%%%%%%%%%%%%%%%%%%%%%%%%%%%%%%%%%%%%%%%%%%%%%%%%%
We put on $\mathbb R$ a normalized Lebesgue measure 
$$\dbar \xi = \frac {d\xi} {2\pi}$$ 
and we associate the normalized Lebesgue measure on $W_{\Gamma}$.\\
For all $q$ in $W_{\Gamma/\gamma}$ we set: 
\begin{equation}
I_{\Gamma, \gamma} \varphi (q) = \int_{ {\Cal F}^{-1}_{\Gamma,\gamma}(\{q\}) \subset W_{\Gamma}} \varphi (p) \dbar p.
\end{equation}
Hence:
$$ I_{\Gamma, \gamma} : \Cal C^{\infty} (W_\Gamma) \longrightarrow \Cal C^{\infty} (W_{\Gamma/\gamma} ),$$
if the integral converges. Integrating by stages, if $\delta \subset \gamma \subset \Gamma$, we have:
$$I_{\Gamma, \gamma} = I_{\Gamma/\delta, \gamma/\delta} \circ I_{\Gamma, \delta}$$
The Feynman rules are given by: 
$$ \Gamma \longmapsto I_{\Gamma, \Gamma} (\varphi (\Gamma)).$$
The problem is that these integrals are in general divergent.
%%%%%%%%%%%%%%%%%%%%%%%%%%%%%%%%%%%%%%%%%%%%%%%%%%%%%%%%%%%%%%%%%%%
%%%%%%%%%%%%%%%%%%%%%%%%%%%%%%%%%%%%%%%%%%%%%%%%%%%%%%%%%%%%%%%%%%%
%%%%%%%%%%%%%%%%%%%%%%%%%%%%%%%%%%%%%%%%%%%%%%%%%%%%%%%%%%%%%%%%%%%
\section {Bogoliubov's algorithm}
%%%%%%%%%%%%%%%%%%%%%%%%%%%%%%%%%%%%%%%%%%%%%%%%%%%%%%%%%%%%%%%%%%%
%%%%%%%%%%%%%%%%%%%%%%%%%%%%%%%%%%%%%%%%%%%%%%%%%%%%%%%%%%%%%%%%%%%
We denote by $\pi$ the projection on the counterterms. Here $\pi : \Cal C^{\infty} (W_\Gamma) \twoheadrightarrow \Cal C^{\infty} (W_\Gamma)$ is given by Taylor expansion at a certain order, or in dimensional regularization one often considers $\pi : \Cal C^{\infty} (W_\Gamma) [z^{-1} ,z]] \twoheadrightarrow z^{-1} \Cal C^{\infty} (W_\Gamma)[z^{-1}]$ (minimal scheme).\\
If $|\gamma| = 0$ we find that: $I_{\Gamma, \gamma} = Id$.\\
If $|\gamma| = 1$ we set: $ I_{\Gamma, \gamma}^{-} := - \pi \circ  I_{\Gamma, \gamma}.$\\
If $|\gamma| \geq 2$, we use our favorite recursive formulas:
\begin{equation}
 I_{\Gamma, \gamma}^{-} := - \pi \circ \Big(  I_{\Gamma, \gamma} + \sum_{\delta \subset \gamma}  I_{\Gamma/\delta, \gamma/\delta} \circ  I_{\Gamma, \delta}^{-} \Big),
\end{equation}
\begin{equation}
I_{\Gamma, \gamma}^{+} := (Id- \pi )\circ \Big( I_{\Gamma, \gamma} + \sum_{\delta \subset \gamma}  I_{\Gamma/\delta, \gamma/\delta} \circ  I_{\Gamma, \delta}^{-}\Big).
\end{equation}
This is very similar to the Connes-Kreimer formulas \cite{A.D2000}, but there are some differences: the commutative product of the target algebra is replaced by composition of operators. On the other hand, one can guess that the bialgebra $\wt {\Cal D}_{\Cal T}$ must play a role: this is indeed case. Moreover as $I_{\Gamma, \gamma} = Id$ for any $\gamma$ of degree zero, we can rather work with the Hopf algebra ${\Cal D}_{\Cal T} := \wt{\Cal D}_{\Cal T} / \Cal J$, where $\Cal J$ is the (bi-) ideal spanned by $\un - (\bar\Gamma,\gamma)$, for $|\gamma| = 0$.
%%%%%%%%%%%%%%%%%%%%%%%%%%%%%%%%%%%%%%%%%%%%%%%%%%%%%%%%%%%%%%%%%%%
%%%%%%%%%%%%%%%%%%%%%%%%%%%%%%%%%%%%%%%%%%%%%%%%%%%%%%%%%%%%%%%%%%%
\section {Adaptation of the Connes-Kreimer formalism}
%%%%%%%%%%%%%%%%%%%%%%%%%%%%%%%%%%%%%%%%%%%%%%%%%%%%%%%%%%%%%%%%%%%
%%%%%%%%%%%%%%%%%%%%%%%%%%%%%%%%%%%%%%%%%%%%%%%%%%%%%%%%%%%%%%%%%%%
We want to reinterpret Smirnov's approach \cite[\S 8]{Sm} into Connes-Kreimer formalism, using the Hopf algebra ${\Cal D}_{\Cal T}$ and a target algebra which is no longer commutative, because of composition of operators.  
%%%%%%%%%%%%%%%%%%%%%%%%%%%%%%%%%%%%%%%%%%%%%%%%%%%%%%%%%%%%%%%%%%%
%%%%%%%%%%%%%%%%%%%%%%%%%%%%%%%%%%%%%%%%%%%%%%%%%%%%%%%%%%%%%%%%%%%
%%%%%%%%%%%%%%%%%%%%%%%%%%%%%%%%%%%%%%%%%%%%%%%%%%%%%%%%%%%%%%%%%%%
%%%%%%%%%%%%%%%%%%%%%%%%%%%%%%%%%%%%%%%%%%%%%%%%%%%%%%%%%%%%%%%%%%%
%%%%%%%%%%%%%%%%%%%%%%%%%%%%%%%%%%%%%%%%%%%%%%%%%%%%%%%%%%%%%%%%%%%
\subsection{The unordered tensor product}
%%%%%%%%%%%%%%%%%%%%%%%%%%%%%%%%%%%%%%%%%%%%%%%%%%%%%%%
%%%%%%%%%%%%%%%%%%%%%%%%%%%%%%%%%%%%%%%%%%%%%%%%%%
Let $A$ be a finite set, and let $V_j$ be a vector space for any $j\in A$. The product $\prod_{j\in A} V_{j}$ is defined by:
$$\prod_{j\in A} V_{j} := \{ v : A \longrightarrow \coprod_{j\in A} V_j \;, \; v(j) \; \in \;  V_{j} \;\forall \; j \; \in A \}.$$ 
The space $V := \bigotimes_{j\in A} V_{j}$ is then defined by the following universal property: for any vector space $E$ and for any multilinear map $F : \prod_{j\in A} V_{j} \longrightarrow E$, there exists a unique linear map $\bar F$ such that the following diagram is commutative:
\diagramme{
\xymatrix{
& \bigotimes_{j\in A} V_{j} \ar[d]^{\bar F} \\
 \prod_{j\in A} V_{j} \ar[ru]^{v \mapsto \bigotimes_{j\in A} v_j} \ar[r]_{F} & E }
 }
\begin{remark}
Let $\big( e_{\lambda} \big)_{\lambda \in {\Lambda_j}}$ be a basis of $V_{j}$. A basis of $\bigotimes_{j\in A} V_{j}$ is given by:
$$ \big( f_\mu = \bigotimes_{j\in A} e_{\mu(j)} \big)_{\mu \in  \Lambda},$$ 
where $ \Lambda = \prod_{j\in A} \Lambda_j =  \{ \mu : A \longrightarrow \coprod_{j\in A} \Lambda_j \;\; \text {such that}\;\;  \mu (j) \in \Lambda_j \}$.  
\end{remark}
%%%%%%%%%%%%%%%%%%%%%%%%%%%%%%%%%%%%%%%%%%%%%%%%%%%%%%%%%%%%%%%%%%%
%%%%%%%%%%%%%%%%%%%%%%%%%%%%%%%%%%%%%%%%%%%%%%%%%%%%%%%%%%%%%%%%%%%
%%%%%%%%%%%%%%%%%%%%%%%%%%%%%%%%%%%%%%%%%%%%%%%%%%%%%%%%%%%%%%%%%%%
\subsection {An algebra of Schwartz functions}\label{sch}
%%%%%%%%%%%%%%%%%%%%%%%%%%%%%%%%%%%%%%%%%%%%%%%%%%%%%%%%%%%%%%%%%%%
%%%%%%%%%%%%%%%%%%%%%%%%%%%%%%%%%%%%%%%%%%%%%%%%%%%%%%%%%%%%%%%%%%%
We introduce, for a connected graph $\Gamma$:
\begin{equation}
 V_\Gamma := \Cal S (W_\Gamma ).
\end{equation} 
For $\Gamma = \Gamma_1 \cdots \Gamma_r$ not connected, we put:
$$ V_\Gamma = \bigotimes_{j\in \{1, \ldots, r\}} V_{\Gamma_j},$$
where the space $V_\Gamma := \bigotimes_{j\in A} V_{\Gamma_j}$ is the unordered tensor product of the $V_{\Gamma_j}$'s and where the $\Gamma_j$'s are the connected components of $\Gamma$. Finally we consider\\
\begin{equation}\label{cab}
\Cal B := \prod_{\Gamma \in \Cal T} V_{\Gamma}.
\end{equation} 
Let use remark that the operators $I_{\Gamma, \gamma}$ can be understood as elements of $\mop {End} \Cal B$.
 
Any $a \in \mop {End} \Cal B$ is written as a block matrix whose coefficients are of the following form:
$$a_{\Gamma \Gamma'} : V_\Gamma \longrightarrow V_\Gamma'.$$
We equip $\mop {End} \Cal B$ with a product $\bullet$ defined by:
\begin{eqnarray*}
a_{\Gamma_1 \Gamma'_1} \bullet a_{\Gamma_2 \Gamma'_2} : V_{\Gamma_1 \Gamma_2} &\longrightarrow& V_{\Gamma'_1 \Gamma'_2} \\ \bigotimes_{j\in \{1, 2\}} v_j &\longmapsto& \bigotimes_{j\in \{1, 2\}} a_{\Gamma_j \Gamma'_j} (v_j)
\end{eqnarray*}  
This definition extends naturally to a commutative bilinear product on $\mop {End} \Cal B$:
\begin{equation}
(a \bullet b)_{\Gamma \Gamma'} = \sum_{\substack{\partial \delta = \Gamma\\ \partial' \delta' = \Gamma' }} a_{\partial \partial'} \bullet b_{\delta \delta'}.
\end{equation}
\noindent Considering four linear maps:
$$a_1 : V_{\Gamma_1} \longrightarrow V_{\Gamma'_1} ; \;\;\;\;\; b_1 : V_{\Gamma'_1} \longrightarrow V_{\Gamma''_1} ,$$
$$a_2 : V_{\Gamma_2} \longrightarrow V_{\Gamma'_2} ; \;\;\;\;\; b_2 : V_{\Gamma'_2} \longrightarrow V_{\Gamma''_2} ,$$
we have the following result:
\begin{proposition} \label{ct}
$$(b_1 \circ a_1) \bullet (b_2 \circ a_2) = (b_1 \bullet b_2) \circ (a_1 \bullet a_2).$$
\end{proposition}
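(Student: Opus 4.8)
The plan is to verify the identity directly on decomposable tensors, since both sides are linear maps from the unordered tensor product $V_{\Gamma_1\Gamma_2} = V_{\Gamma_1}\otimes V_{\Gamma_2}$ into $V_{\Gamma''_1\Gamma''_2}$, and such tensors span the source space. Thus it suffices to show that the two composites agree on an arbitrary element of the form $v_1\otimes v_2$ with $v_j\in V_{\Gamma_j}$.

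First I would check that the two sides have the same source and target. On the left, each composite $b_j\circ a_j$ is a linear map $V_{\Gamma_j}\to V_{\Gamma''_j}$, so by the definition of the product $\bullet$ the expression $(b_1\circ a_1)\bullet(b_2\circ a_2)$ is a map $V_{\Gamma_1\Gamma_2}\to V_{\Gamma''_1\Gamma''_2}$. On the right, $a_1\bullet a_2$ maps $V_{\Gamma_1\Gamma_2}\to V_{\Gamma'_1\Gamma'_2}$ and $b_1\bullet b_2$ maps $V_{\Gamma'_1\Gamma'_2}\to V_{\Gamma''_1\Gamma''_2}$, so their composite indeed carries $V_{\Gamma_1\Gamma_2}$ to $V_{\Gamma''_1\Gamma''_2}$. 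The point worth noting is that the intermediate space $V_{\Gamma'_1\Gamma'_2}$ produced by $a_1\bullet a_2$ is exactly the source of $b_1\bullet b_2$, so the composition $\circ$ appearing on the right-hand side is legitimately defined.

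Then I would evaluate both sides on $v_1\otimes v_2$. Unwinding the definition of $\bullet$, the left-hand side gives $(b_1\circ a_1)(v_1)\otimes(b_2\circ a_2)(v_2) = b_1(a_1(v_1))\otimes b_2(a_2(v_2))$. For the right-hand side, applying $a_1\bullet a_2$ first yields $a_1(v_1)\otimes a_2(v_2)$, and then applying $b_1\bullet b_2$ yields $b_1(a_1(v_1))\otimes b_2(a_2(v_2))$, the same element. By linearity and the fact that decomposable tensors span $V_{\Gamma_1\Gamma_2}$, the two maps coincide, which establishes the identity.

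I do not expect a genuine obstacle here: the statement is precisely the interchange law between the two structures on block maps (the ``horizontal'' product $\bullet$ and the ``vertical'' composition $\circ$), and the only care needed is bookkeeping of domains and codomains so that the right-hand composite is well formed. One may optionally observe that the very same computation, carried out with the index set $\{1,2\}$ replaced by an arbitrary finite set $A$, extends the compatibility to products $\bullet$ of more than two factors, which is the form in which $\bullet$ is used on $\mop{End}\Cal B$.
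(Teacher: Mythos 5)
Your proof is correct and follows essentially the same route as the paper: both evaluate the two sides on a decomposable tensor $v_1\otimes v_2$, unwind the definition of $\bullet$, and observe that each side yields $b_1(a_1(v_1))\otimes b_2(a_2(v_2))$. The extra bookkeeping of domains and codomains and the remark on general index sets are harmless additions not present in the paper's proof.
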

\begin{proof}
\begin{eqnarray*}
(b_1 \circ a_1) \bullet (b_2 \circ a_2) ( \bigotimes_{j\in \{1, 2\}} v_j) &=& \bigotimes_{j\in \{1, 2\}} (b_j \circ a_j) (v_j)\\
\end{eqnarray*}
\begin{eqnarray*}
(b_1 \bullet b_2) \circ (a_1 \bullet a_2)( \bigotimes_{j\in \{1, 2\}} v_j) &=& b_1 \bullet b_2 \Big( \bigotimes_{j\in \{1, 2\}} a_j(v_j) \Big)\\
&=&  \bigotimes_{j\in \{1, 2\}} b_j ( a_j(v_j))\\
&=&\bigotimes_{j\in \{1, 2\}} (b_j \circ a_j) (v_j).
\end{eqnarray*}
\end{proof}
%%%%%%%%%%%%%%%%%%%%%%$%%%%%%%%%%%%%%%%%%%%%%%%%%%%%%%%%%%%%%%%%%%%%
%%%%%%%%%%%%%%%%%%%%%%%%%%%%%%%%%%%%%%%%%%%%%%%%%%%%%%%%%%%%%%%%%%%
\subsection {Convolution product $\divideontimes$}
%%%%%%%%%%%%%%%%%%%%%%%%%%%%%%%%%%%%%%%%%%%%%%%%%%%%%%%%%%%%%%%%%%%
%%%%%%%%%%%%%%%%%%%%%%%%%%%%%%%%%%%%%%%%%%%%%%%%%%%%%%%%%%%%%%%%%%%
We denote by $\diamond$ the opposite of composition product in $\mop {End} \Cal B$. Then we define a convolution product $\divideontimes$ for all $\varphi$, $\psi \in {\Cal L} ({\Cal D}_{\Cal T}, \mop {End} \Cal B)$ by: 
\begin{equation}
 \varphi \divideontimes \psi :=  \diamond  ( \varphi \otimes \psi )  \Delta.
\end{equation}
In other words, for all specified graphs $\bar\gamma$, $\bar\Gamma$ such that $\bar\gamma \subset \bar\Gamma$ we have:
\begin{equation}
(\varphi \divideontimes \psi) (\bar\Gamma, \bar\gamma ) = \sum_{\substack{\bar\delta \subseteq \bar\gamma \\ \bar\gamma / \bar\delta \in \Cal T }}\psi( \bar\Gamma / \bar\delta, \bar\gamma / \bar\delta)\circ \varphi( \bar \Gamma, \bar\delta ).
\end{equation}
\begin{theorem}
Let $G$ be the set of morphisms of unitary algebras: ${\Cal D}_{\Cal T} \longrightarrow (\mop {End} \Cal B , \bullet)$. Equipped with the product $\divideontimes$, the set $G$ is a group. 
\end{theorem}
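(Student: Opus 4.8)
The plan is to verify the four group axioms for $(G,\divideontimes)$ in turn. Since $\divideontimes$ is a convolution product built from the coproduct $\Delta$ of the bialgebra ${\Cal D}_{\Cal T}$ and the target product $\diamond$ (the opposite of $\circ$), associativity, the neutral element, and the inverses will follow from the standard reasoning available for a connected graded bialgebra; the genuinely new point, and the one I expect to be the main obstacle, is the \emph{closure} of $G$ under $\divideontimes$, where the two products $\bullet$ and $\circ$ on $\mop{End}\Cal B$ must cooperate.

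I would first record that $\divideontimes$ is associative on the whole of ${\Cal L}({\Cal D}_{\Cal T},\mop{End}\Cal B)$. This is formal: $\diamond$ is associative because $\circ$ is, and $\Delta$ is coassociative (proved for ${\wt{\Cal D}}_{\Cal T}$ above and inherited by the quotient ${\Cal D}_{\Cal T}$), so the usual convolution computation gives $(\varphi\divideontimes\psi)\divideontimes\chi=\varphi\divideontimes(\psi\divideontimes\chi)$ after applying $(\Delta\otimes\mathrm{id})\Delta=(\mathrm{id}\otimes\Delta)\Delta$. The neutral element is $e(\bar\Gamma,\bar\gamma):=\varepsilon(\bar\gamma)\,\mathrm{Id}_{V_\Gamma}$, where $\mathrm{Id}_{V_\Gamma}$ is the identity supported on the block $V_\Gamma$. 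Because $\varepsilon(\bar\gamma/\bar\delta)$ vanishes unless $\bar\delta=\bar\gamma$ and $\varepsilon(\bar\delta)$ vanishes unless $\bar\delta=\un$, the defining sums collapse to a single term and give $e\divideontimes\varphi=\varphi=\varphi\divideontimes e$. One checks that $e\in G$: it sends $\un_{\Cal D}=(\un,\un)$ to $\mathrm{Id}_{V_\un}=\un_\bullet$, and it is $\bullet$-multiplicative since $\mathrm{Id}_{V_{\Gamma_1}}\bullet\mathrm{Id}_{V_{\Gamma_2}}=\mathrm{Id}_{V_{\Gamma_1\Gamma_2}}$ by the definition of $\bullet$ on blocks.

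The crucial step is closure: if $\varphi,\psi\in G$ then $\varphi\divideontimes\psi\in G$. Here I would use that $\Delta$ is an algebra morphism (the bialgebra property), so $\Delta(xy)=\Delta(x)\Delta(y)$ for all $x,y\in{\Cal D}_{\Cal T}$. Expanding $(\varphi\divideontimes\psi)(xy)$ in Sweedler notation and invoking the $\bullet$-multiplicativity of $\varphi$ and $\psi$ rewrites each summand in the form $\bigl(\psi(\cdot)\bullet\psi(\cdot)\bigr)\circ\bigl(\varphi(\cdot)\bullet\varphi(\cdot)\bigr)$. Proposition \ref{ct}, read from right to left as $(b_1\bullet b_2)\circ(a_1\bullet a_2)=(b_1\circ a_1)\bullet(b_2\circ a_2)$, is exactly what refactors this as $\bigl((\varphi\divideontimes\psi)(x)\bigr)\bullet\bigl((\varphi\divideontimes\psi)(y)\bigr)$, while unit preservation is the computation above applied to $\varphi\divideontimes\psi$. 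This is the heart of the matter: it is the one point where $\bullet$ and $\circ$ interact, and the compatibility of Proposition \ref{ct} is precisely the hypothesis that makes the argument go through.

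Finally, for inverses I would exploit that ${\Cal D}_{\Cal T}$ is connected graded by the loop number $|\bar\gamma|$. Given $\varphi\in G$, define $\varphi^{\divideontimes-1}$ recursively on the degree: it agrees with $e$ in degree $0$, and for $|\bar\gamma|\geq 1$ the equation $\varphi\divideontimes\varphi^{\divideontimes-1}=e$ isolates the term $\bar\delta=\un$ (where $\varphi(\bar\Gamma,\un)=\mathrm{Id}_{V_\Gamma}$) and yields
$$
\varphi^{\divideontimes-1}(\bar\Gamma,\bar\gamma)=-\sum_{\substack{\un\neq\bar\delta\subseteq\bar\gamma\\ \bar\gamma/\bar\delta\in\Cal T}}\varphi^{\divideontimes-1}(\bar\Gamma/\bar\delta,\bar\gamma/\bar\delta)\circ\varphi(\bar\Gamma,\bar\delta),
$$
each right-hand term involving $\varphi^{\divideontimes-1}$ at strictly smaller loop number, so the recursion terminates. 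It remains to check that this right inverse is also a left inverse (automatic in an associative unital monoid) and, more delicately, that $\varphi^{\divideontimes-1}$ again lies in $G$; I would establish the latter by induction on the degree, re-running the closure computation together with Proposition \ref{ct} to see that $\bullet$-multiplicativity is propagated by the recursion. With closure, associativity, neutral element, and inverses all verified inside $G$, the set $(G,\divideontimes)$ is a group.
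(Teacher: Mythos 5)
Your proof is correct and takes essentially the same route as the paper: associativity from coassociativity of $\Delta$ together with associativity of $\diamond$, the same recursive construction of $\varphi^{\divideontimes-1}$ on the connected graded Hopf algebra, and closure of $G$ under $\divideontimes$ via the compatibility $(b_1\circ a_1)\bullet(b_2\circ a_2)=(b_1\bullet b_2)\circ(a_1\bullet a_2)$ of Proposition \ref{ct}. If anything you are slightly more careful than the paper on the points it leaves implicit, namely the minus sign in the recursion for the inverse and the verification that $\varphi^{\divideontimes-1}$ itself remains a $\bullet$-character.
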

\begin{proof}
Associativity of $\divideontimes$ is immediate, it results from the associativity of $\diamond$ and coassociativity of $\Delta$. The identity element $E$ is defined by:
\begin{equation}
E(\bar\Gamma, \bar\gamma)= \left\lbrace
\begin{array}{lcl}
Id_{\Cal B}  \;\;\;\;\; \text{if} \;\; \bar\gamma \;\; \text{is a specified graph of degree zero}\\
0  \;\; \;\; \;\; \;\; \text{if not}.
\end{array}\right. 
\end{equation}
The inverse of an element $\varphi$ of $G$ is given by the following recursive formula:
\begin{equation}
\left\lbrace
\begin{array}{lcl}
\varphi^{\divideontimes -1} (\bar\Gamma, \bar\gamma) = \varphi(\bar\Gamma, \bar\gamma) = Id_{\Cal B} \;\;\;\; \text{if} \;\; |\gamma| = 0\\
&&\\
\varphi^{\divideontimes -1} (\bar\Gamma, \bar\gamma) = \sum_{\substack{\bar\delta \subseteq \bar\gamma \\ |\delta| \geq 1}} \varphi^{\divideontimes -1} (\bar\Gamma/\bar\delta, \bar\gamma/\bar\delta) \circ \varphi(\bar\Gamma, \bar\delta) \;\;\;\;\text{if} \;\; |\gamma| \geq 1.
\end{array}\right. 
\end{equation}
The fact that for all $\varphi$,  $\psi \in G$ we have $\varphi \divideontimes \psi \in G$ is a result of the compatibility between the products $\diamond$ and $\bullet$. Indeed:
\begin{eqnarray*}
(\varphi \divideontimes \psi) (\bar\Gamma \bar\Gamma', \bar\gamma\bar\gamma' ) &=&(\varphi \divideontimes \psi) (\Gamma \Gamma',\gamma\bar\gamma')\\
&=& \sum_{\substack{\bar\delta \subseteq \bar\gamma \bar\gamma'\\ (\bar\gamma\bar\gamma') / \bar\delta \in \Cal T}} \psi \big( \bar\Gamma\bar\Gamma' / \bar\delta, (\bar\gamma\bar\gamma') / \bar\delta\big)\circ \varphi \big(\bar \Gamma\bar \Gamma', \bar\delta \big)\\  
&=& \sum_{\substack{\bar\delta  \subset\; \bar\gamma \;,\;\bar\delta' \subset\; \bar\gamma' \\ \bar\gamma / \bar\delta ; \bar\gamma' / \bar\delta' \in \Cal T}} 
\psi\big(\bar\Gamma\bar\Gamma' / \bar\delta\bar\delta', \bar\gamma\bar\gamma' / \bar\delta\bar\delta'\big)\circ \varphi \big(\bar \Gamma\bar \Gamma', \bar\delta\bar\delta'\big)\\
&=& \sum_{\substack{\bar\delta  \subset\; \bar\gamma \;,\;\bar\delta' \subset\; \bar\gamma' \\ \bar\gamma / \bar\delta ; \bar\gamma' / \bar\delta' \in \Cal T}} 
\Big( \psi( \bar\Gamma / \bar\delta, \bar\gamma / \bar\delta)\bullet \psi( \bar\Gamma' / \bar\delta', \bar\gamma' / \bar\delta')\Big)\circ \Big(\varphi( \bar \Gamma\bar , \bar\delta)\bullet \varphi(\bar \Gamma', \bar\delta')\Big)\\
&=& \sum_{\substack{\bar\delta  \subset\; \bar\gamma \;,\;\bar\delta' \subset\; \bar\gamma' \\ \bar\gamma / \bar\delta ; \bar\gamma' / \bar\delta' \in \Cal T}} 
\Big( \psi( \bar\Gamma / \bar\delta, \bar\gamma / \bar\delta)\circ \varphi( \bar \Gamma\bar , \bar\delta)\Big)\bullet \Big(\psi( \bar\Gamma' / \bar\delta', \bar\gamma' / \bar\delta')\circ \varphi(\bar \Gamma', \bar\delta')\Big)\\
&=&(\varphi \divideontimes \psi) (\bar\Gamma, \bar\gamma ) \bullet (\varphi \divideontimes \psi) (\bar\Gamma', \bar\gamma').
\end{eqnarray*}
\end{proof}
%%%%%%%%%%%%%%%%%%%%%%%%%%%%%%%%%%%%%%%%%%%%%%%%%%%%%%%%%%%%%%%%%%%
%%%%%%%%%%%%%%%%%%%%%%%%%%%%%%%%%%%%%%%%%%%%%%%%%%%%%%%%%%%%%%%%%%%
\subsection {Taylor expansions} 
%%%%%%%%%%%%%%%%%%%%%%%%%%%%%%%%%%%%%%%%%%%%%%%%%%%%%%%%%%%%%%%%%%%
%%%%%%%%%%%%%%%%%%%%%%%%%%%%%%%%%%%%%%%%%%%%%%%%%%%%%%%%%%%%%%%%%%%
We adapt here a construction from \cite[\S 9]{EGP} also used by \cite[\S 3.7]{EP}, (see also \cite{KP, KP2}, and also \cite{DMB}). 
\begin{definition}
Let $\Cal B$ be the commutative algebra defined by \eqref{cab}. For $m \in \mathbb N$ the order $m$ Taylor expansion operator is:
\begin{equation}
P_m \in \mop{End}(\Cal B),  \;\;\;\; P_m f  (v) := \sum_{|\beta| \leq m} \frac{v^\beta}{\beta!}\partial_0^\beta f,
\end{equation}
where $\beta = (\beta_1, ..., \beta_n) \in {\mathbb N}^n$ with the usual notations $\beta \leq \alpha$ iff $\beta_i \leq \alpha_i$ for all $i$, $|\beta|:= \beta_1 + ...+ \beta_n$ as well as 
$$v^\beta = \prod_{1 \leq k \leq n} v_k^{\beta_k} , \;\;\;\;\;\;\;\;\; \beta! := \prod_{1 \leq k \leq n} \beta_k! ,\;\;\; \;\;\;\;\;\; \partial_0^\beta := \prod_{1 \leq k \leq n} \frac{\partial^{\beta_k}}{\partial v_k^{\beta_k}}_{|v_k =0}.$$  	
\end{definition}
Let $\varphi$ be an element of $G$ lifted to an algebra morphism from $\wt {\Cal D}_{\Cal T}$ into $(\mop {End} {\Cal B}, \bullet)$. We define a Birkhoff decomposition:
\begin{equation}
\varphi = \varphi _-^{\divideontimes-1} \divideontimes \varphi_+.
\end{equation}
The components $\varphi_+$ and $\varphi_-$ are given by simple recursive formulas: for any $(\bar\Gamma, \bar\gamma)$ of degree zero (i.e $\gamma$ without internal edges) we put: $\varphi _-(\bar\Gamma, \bar\gamma) = \varphi _+(\bar\Gamma, \bar\gamma) = \varphi (\bar\Gamma, \bar\gamma)= Id_{\Cal B}$. If we assume that $\varphi_- (\bar\Gamma, \bar\gamma)$ and $\varphi_+ (\bar\Gamma, \bar\gamma)$ are known for $(\bar\Gamma, \bar\gamma)$ of degree $k \leq m-1$, we have then for any pairs of	specified graphs $(\bar\Gamma, \bar\gamma)$ of degree $m$:
\begin{equation}
\varphi_- (\bar\Gamma,\bar\gamma) = - P_m \Big( \varphi (\bar\Gamma,\bar\gamma) + \sum_{\substack{\bar\delta \subsetneq\; \bar\gamma \\ \bar\gamma / \bar\delta \; \in \Cal T}} \varphi (\bar\Gamma/\bar\delta, \bar\gamma/\bar\delta)  \circ \varphi_-(\bar\Gamma, \bar\delta) \Big) \label{deb1}
\end{equation}
\begin{equation}
\varphi_+ (\bar\Gamma,\bar\gamma) = (I- P_m) \Big( \varphi (\bar\Gamma,\bar\gamma) + \sum_{\substack{\bar\delta \subsetneq\; \bar\gamma \\ \bar\gamma / \bar\delta \; \in \Cal T}} \varphi (\bar\Gamma/\bar\delta, \bar\gamma/\bar\delta)  \circ \varphi_-(\bar\Gamma, \bar\delta) \Big). \label{debb1}
\end{equation}
The operators $P_m$ form a \textsl{Rota--Baxter family} in the sense of K. Ebrahimi-Fard, J. Gracia-Bondia and F. Patras \cite[Proposition 9.1, Proposition 9.2]{EGP}. For all graph $\Gamma$, for all $f, g \in V_\Gamma$ and for any $s, t \in \mathbb N$ we have:
\begin{equation}
 ( P_s f ) (P_t g) = P_{s+t} [ (P_s f)g + f (P_t g) - fg]. \label{debb}
\end{equation} 
\begin{theorem}\label{dt}
Let $\varphi$ be a character of $\wt{\Cal D}_{\Cal T}$ with values in the unitary commutative algebra $\Cal B$. Further let $P_.: \mathbb N \longrightarrow \mop{End}(\Cal B)$ be an indexed renormalization scheme, that is a family $(P_t)_{t \in \mathbb N}$ of endomorphisms such that: 
\begin{equation}\label{defam}
 \mu \circ (P_s \otimes P_t) = P_{s+t} \circ \mu \circ [P_s \otimes Id + Id \otimes P_t - Id \otimes Id],
\end{equation}  
for all $s, t \in \mathbb N$. Then the two maps $\varphi_-$ and $\varphi_+$ defined by \eqref{deb1} and \eqref{debb1} are two characters of $(\Cal B, \bullet)$.
\end{theorem}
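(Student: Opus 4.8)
The plan is to establish the two multiplicativity identities
$$\varphi_-(\bar\Gamma\bar\Gamma',\bar\gamma\bar\gamma') = \varphi_-(\bar\Gamma,\bar\gamma)\bullet\varphi_-(\bar\Gamma',\bar\gamma'),\qquad \varphi_+(\bar\Gamma\bar\Gamma',\bar\gamma\bar\gamma') = \varphi_+(\bar\Gamma,\bar\gamma)\bullet\varphi_+(\bar\Gamma',\bar\gamma')$$
simultaneously, by induction on the total degree $m=|\bar\gamma|+|\bar\gamma'|$. I would abbreviate by $\tilde\varphi(\bar\Gamma,\bar\gamma)$ the parenthesized Bogoliubov expression common to \eqref{deb1} and \eqref{debb1}, so that $\varphi_-=-P_m\tilde\varphi$, $\varphi_+=(I-P_m)\tilde\varphi$, and hence $\varphi_+=\tilde\varphi+\varphi_-$. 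In degree zero all three maps equal $Id_{\Cal B}$, which is the $\bullet$-unit, so the base case together with unitality holds.

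The central step is a Bogoliubov multiplicativity lemma: for pairs of degrees $s,t\geq1$ with $s+t=m$, I would prove
$$\tilde\varphi(\bar\Gamma\bar\Gamma',\bar\gamma\bar\gamma') = \varphi_+(\bar\Gamma,\bar\gamma)\bullet\varphi_+(\bar\Gamma',\bar\gamma') - \varphi_-(\bar\Gamma,\bar\gamma)\bullet\varphi_-(\bar\Gamma',\bar\gamma').$$
Three ingredients feed this. First, since $\wt{\Cal D}_{\Cal T}$ is free commutative on its connected objects, every $\bar\delta\subseteq\bar\gamma\bar\gamma'$ factors uniquely as $\bar\delta=\bar\delta_1\bar\delta_2$ with $\bar\delta_1\subseteq\bar\gamma$ and $\bar\delta_2\subseteq\bar\gamma'$ --- exactly the factorization of the coproduct already used in the proof that $G$ is a group under $\divideontimes$. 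Second, Proposition \ref{ct}, in the form $(b_1\bullet b_2)\circ(a_1\bullet a_2)=(b_1\circ a_1)\bullet(b_2\circ a_2)$, lets me interchange $\circ$ and $\bullet$. Third, the inductive hypothesis applies to every pair $(\bar\Gamma,\bar\delta_1)$, $(\bar\Gamma',\bar\delta_2)$ with $(\bar\delta_1,\bar\delta_2)\neq(\bar\gamma,\bar\gamma')$, all of degree $<m$. Concretely, I would isolate the top term $\bar\delta=\bar\gamma\bar\gamma'$ of $(\varphi_-\divideontimes\varphi)(\bar\Gamma\bar\Gamma',\bar\gamma\bar\gamma')$, which reproduces $\varphi_-(\bar\Gamma\bar\Gamma',\bar\gamma\bar\gamma')$, so that $\tilde\varphi$ on the product is the remaining sum. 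Writing $A_{\bar\delta_1}=\varphi(\bar\Gamma/\bar\delta_1,\bar\gamma/\bar\delta_1)\circ\varphi_-(\bar\Gamma,\bar\delta_1)$ and the analogous $B_{\bar\delta_2}$, the multiplicativity of $\varphi$ and of $\varphi_-$ in degree $<m$ together with Proposition \ref{ct} turn each summand into $A_{\bar\delta_1}\bullet B_{\bar\delta_2}$, whence the sum equals $\big(\sum A_{\bar\delta_1}\big)\bullet\big(\sum B_{\bar\delta_2}\big)-A_{\bar\gamma}\bullet B_{\bar\gamma'}$. Recognizing $\sum A_{\bar\delta_1}=\varphi_+(\bar\Gamma,\bar\gamma)$ and $A_{\bar\gamma}=\varphi_-(\bar\Gamma,\bar\gamma)$ (and likewise for the primed data) yields the lemma.

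The second step converts the lemma into multiplicativity via the Rota--Baxter family relation \eqref{defam}. Setting $a=\tilde\varphi(\bar\Gamma,\bar\gamma)$ and $b=\tilde\varphi(\bar\Gamma',\bar\gamma')$, so that on the factors $\varphi_-(\bar\Gamma,\bar\gamma)=-P_s a$, $\varphi_+(\bar\Gamma,\bar\gamma)=(I-P_s)a$ and similarly with $b,P_t$ for the primed pair, the lemma rewrites $\tilde\varphi(\bar\Gamma\bar\Gamma',\bar\gamma\bar\gamma')$ as $a\bullet b-(P_s a)\bullet b-a\bullet(P_t b)$. Applying $-P_m=-P_{s+t}$ and feeding this into \eqref{defam} with indices $s,t$ collapses the three terms to $(P_s a)\bullet(P_t b)=\varphi_-(\bar\Gamma,\bar\gamma)\bullet\varphi_-(\bar\Gamma',\bar\gamma')$, which is the claimed multiplicativity of $\varphi_-$. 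The identity for $\varphi_+$ then follows at once from $\varphi_+=\tilde\varphi+\varphi_-$ and the lemma, since adding $\varphi_-(\bar\Gamma\bar\Gamma',\bar\gamma\bar\gamma')$ to $\tilde\varphi(\bar\Gamma\bar\Gamma',\bar\gamma\bar\gamma')$ leaves $\varphi_+(\bar\Gamma,\bar\gamma)\bullet\varphi_+(\bar\Gamma',\bar\gamma')$. This closes the induction.

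I expect the Bogoliubov multiplicativity lemma to be the main obstacle, the delicate part being the bookkeeping that matches the coproduct of a product with the $\bullet$-product of the two coproducts, and the disciplined use of Proposition \ref{ct} to interchange $\circ$ and $\bullet$ while tracking which factor each operator acts on. The grading must be followed carefully so that the indices handed to \eqref{defam} are precisely the loop numbers $s=|\bar\gamma|$ and $t=|\bar\gamma'|$; it is exactly this degree-indexed Rota--Baxter identity that compensates for the non-commutativity of the composition $\circ$ and renders the scheme compatible with the commutative product $\bullet$, whereas the classical Connes--Kreimer argument needs only a single Rota--Baxter operator.
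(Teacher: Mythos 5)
Your proposal is correct and follows essentially the same route as the paper: induction on the degree, factorization of the subgraphs of a product through the freeness of $\wt{\Cal D}_{\Cal T}$, Proposition \ref{ct} to interchange $\circ$ and $\bullet$, and the indexed Rota--Baxter identity \eqref{defam} applied to exactly the combination $P_{s+t}\bigl[(P_s a)\bullet b + a\bullet(P_t b) - a\bullet b\bigr]$. The only cosmetic differences are that you package the intermediate identity as a closed-form lemma $\tilde\varphi(\bar\Gamma\bar\Gamma',\bar\gamma\bar\gamma')=\varphi_+\bullet\varphi_+'-\varphi_-\bullet\varphi_-'$ and deduce multiplicativity of $\varphi_+$ from $\varphi_+=\tilde\varphi+\varphi_-$, whereas the paper carries the expanded sum through the computation and obtains $\varphi_+$ as the convolution $\varphi_-\divideontimes\varphi$ of two characters.
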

\begin{proof}
We will just prove that $\varphi_-$ is a character. Then  $\varphi_+ = \varphi_- \circledast \varphi$ is also a character. For $(\bar\Gamma,\bar\gamma), (\bar\Gamma',\bar\gamma') \in \mop{ker} \varepsilon$, we write $\varphi_- (\bar\Gamma,\bar\gamma) = - P_{|(\bar\Gamma,\bar\gamma)|} (\bar\varphi (\bar\Gamma,\bar\gamma) )$ where: 
$$\bar\varphi (\bar\Gamma,\bar\gamma) = \varphi (\bar\Gamma,\bar\gamma) + \sum_{\substack{\bar\delta \subsetneq\; \bar\gamma \\ \bar\gamma / \bar\delta \; \in \Cal T}} \varphi (\bar\Gamma/\bar\delta, \bar\gamma/\bar\delta)  \circ \varphi_-(\bar\Gamma, \bar\delta).$$
For proving this theorem we use the formulas \eqref{deb1} and \eqref{defam}.
\begin{eqnarray*}
\varphi_- (\bar\Gamma\bar\Gamma',\bar\gamma\bar\gamma') &=& - P_{|(\bar\Gamma\bar\Gamma',\bar\gamma\bar\gamma')|} \Big( \varphi (\bar\Gamma\bar\Gamma',\bar\gamma\bar\gamma') + \sum_{\substack{\bar\delta\bar\delta' \subsetneq \bar\gamma\bar\gamma' \\ \bar\gamma\bar\gamma' / \bar\delta\bar\delta' \; \in \Cal T}} \varphi (\bar\Gamma \bar\Gamma'/ \bar\delta\bar\delta', \bar\gamma \bar\gamma'/\bar\delta \bar\delta')  \circ \varphi_-(\bar\Gamma\bar\Gamma', \bar\delta\bar\delta') \Big)\\ 
&=& - P_{|(\bar\Gamma\bar\Gamma',\bar\gamma\bar\gamma')|} \Big[\varphi (\bar\Gamma,\bar\gamma)\bullet \varphi (\bar\Gamma',\bar\gamma') + \varphi_- (\bar\Gamma,\bar\gamma) \bullet  \varphi (\bar\Gamma',\bar\gamma')+ \varphi (\bar\Gamma ,\bar\gamma)\bullet \varphi_- (\bar\Gamma',\bar\gamma')\\
&+& \sum_{\substack{\bar\delta \subsetneq \bar\gamma \\ \bar\gamma / \bar\delta \; \in \Cal T}} \left(\varphi (\bar\Gamma / \bar\delta, \bar\gamma /\bar\delta) \circ \varphi_-(\bar\Gamma, \bar\delta) \right) \bullet \left( \varphi_-(\bar\Gamma', \bar\gamma') + \varphi (\bar\Gamma', \bar\gamma') \right)\\
&+& \sum_{\substack{\bar\delta' \subsetneq \bar\gamma' \\ \bar\gamma' / \bar\delta' \; \in \Cal T}} \left(\varphi (\bar\Gamma' / \bar\delta', \bar\gamma' /\bar\delta') \circ \varphi_-(\bar\Gamma', \bar\delta') \right) \bullet \left( \varphi_-(\bar\Gamma, \bar\gamma) + \varphi (\bar\Gamma, \bar\gamma) \right)\\
&+&\sum_{\substack{\bar\delta \subsetneq\; \bar\gamma \;,\; \bar\delta' \subsetneq\; \bar\gamma' \\ \bar\gamma / \bar\delta \;;\;\bar\gamma' / \bar\delta' \in \Cal T}}
\Big(\varphi (\bar\Gamma / \bar\delta, \bar\gamma/\bar\delta)  \circ \varphi_-(\bar\Gamma, \bar\delta)\Big) \bullet \Big(\varphi (\bar\Gamma'/ \bar\delta', \bar\gamma'/\bar\delta')  \circ \varphi_-(\bar\Gamma', \bar\delta')\Big)\Big]\\
&=& - P_{|(\bar\Gamma, \bar\gamma)|+ |(\bar\Gamma', \bar\gamma')|} \Big[ \Big( \varphi(\bar\Gamma, \bar\gamma) + \sum_{\substack{\bar\delta \subsetneq\; \bar\gamma \\ \bar\gamma / \bar\delta \; \in \Cal T}} \varphi (\bar\Gamma / \bar\delta, \bar\gamma /\bar\delta) \circ \varphi_-(\bar\Gamma, \bar\delta)\Big)\\
&& \hskip 4cm \bullet \Big( \varphi(\bar\Gamma', \bar\gamma') + \sum_{\substack{\bar\delta' \subsetneq\; \bar\gamma' \\ \bar\gamma' / \bar\delta' \; \in \Cal T}} \varphi (\bar\Gamma' / \bar\delta', \bar\gamma' /\bar\delta') \circ \varphi_-(\bar\Gamma', \bar\delta')\Big)\\
&+& \varphi_-(\bar\Gamma', \bar\gamma') \bullet \Big( \varphi(\bar\Gamma, \bar\gamma) + \sum_{\substack{\bar\delta \subsetneq\; \bar\gamma \\ \bar\gamma / \bar\delta \; \in \Cal T}} \varphi (\bar\Gamma / \bar\delta, \bar\gamma /\bar\delta) \circ \varphi_-(\bar\Gamma, \bar\delta)\Big)\\
&+&\varphi_-(\bar\Gamma, \bar\gamma) \bullet \Big( \varphi(\bar\Gamma', \bar\gamma') + \sum_{\substack{\bar\delta' \subsetneq\; \bar\gamma' \\ \bar\gamma' / \bar\delta' \; \in \Cal T}} \varphi (\Gamma' / \delta', \gamma' /\delta') \circ \varphi_-(\bar\Gamma', \bar\delta')\Big) \Big]\\
&=&- P_{|(\bar\Gamma, \bar\gamma)|+ |(\bar\Gamma', \bar\gamma')|} \Big[\bar\varphi (\bar\Gamma,\bar\gamma)\bullet \bar\varphi (\bar\Gamma',\bar\gamma') - P_{|(\bar\Gamma, \bar\gamma)|} (\bar\varphi (\bar\Gamma,\bar\gamma))\bullet \bar\varphi (\bar\Gamma',\bar\gamma') \\
&&\hskip 4cm  -  P_{|(\bar\Gamma', \bar\gamma')|}(\bar\varphi (\bar\Gamma',\bar\gamma'))\bullet \bar\varphi (\bar\Gamma,\bar\gamma) \Big]\\
&=& P_{|(\bar\Gamma, \bar\gamma)|+ |(\bar\Gamma', \bar\gamma')|} \Big[ P_{|(\bar\Gamma', \bar\gamma')|} (\bar\varphi (\bar\Gamma',\bar\gamma'))\bullet \bar\varphi (\bar\Gamma,\bar\gamma) + P_{|(\bar\Gamma, \bar\gamma)|}(\bar\varphi (\bar\Gamma,\bar\gamma))\bullet\bar\varphi (\bar\Gamma',\bar\gamma') \\
&&\hskip 4cm  - \bar\varphi (\bar\Gamma,\bar\gamma) \bullet \bar\varphi (\bar\Gamma',\bar\gamma')\Big]\\
&=&  P_{|(\bar\Gamma, \bar\gamma)|} \big(\bar\varphi (\bar\Gamma,\bar\gamma)\big)\bullet  P_{|(\bar\Gamma', \bar\gamma')|}\big(\bar\varphi (\bar\Gamma',\bar\gamma')\big)\\
&=& \varphi_- (\bar\Gamma,\bar\gamma) \bullet \varphi_- (\bar\Gamma',\bar\gamma').
\end{eqnarray*}
\end{proof}
%%%%%%%%%%%%%%%%%%%%%%%%%%%%%%%%%%%%%%%%%%%%%%%%%%%%%%%%%%%%%%%%%%%
%%%%%%%%%%%%%%%%%%%%%%%%%%%%%%%%%%%%%%%%%%%%%%%%%%%%%%%%%%%%%%%%%%%
\section {Dimensional regularization} 
%%%%%%%%%%%%%%%%%%%%%%%%%%%%%%%%%%%%%%%%%%%%%%%%%%%%%%%%%%%%%%%%%%%
%%%%%%%%%%%%%%%%%%%%%%%%%%%%%%%%%%%%%%%%%%%%%%%%%%%%%%%%%%%%%%%%%%%
%%%%%%%%%%%%%%%%%%%%%%%%%%%%%%%%%%%%%%%%%%%%%%%%%%%%%%%%%%%%%%%%%%%
The problem of perturbative renormalization theory is to give a meaning to certain divergent integrals arising from Feynman graphs. The analytical difficulty is to regularize the occurring divergent integrals, associating to each Feynman graph some finite value indexed by a parameter. The procedure we use to regularize an integral is called dimensional regularization. The basic idea behind dimensional regularization consists in writing down the divergent integrals that we have to regularize in such a way that the dimension of the physical space-time $d$ becomes an complex parameter $D$. We follow the approach of Pavel Etingof \cite{PE} and Ralf Meyer \cite {RM}. 
%%%%%%%%%%%%%%%%%%%%%%%%%%%%%%%%%%%%%%%%%%%%%%%%%%%%%%%%%%%%%%%%%%%
%%%%%%%%%%%%%%%%%%%%%%%%%%%%%%%%%%%%%%%%%%%%%%%%%%%%%%%%%%%%%%%%%%%
\subsection {General idea}
%%%%%%%%%%%%%%%%%%%%%%%%%%%%%%%%%%%%%%%%%%%%%%%%%%%%%%%%%%%%%%%%%%%
%%%%%%%%%%%%%%%%%%%%%%%%%%%%%%%%%%%%%%%%%%%%%%%%%%%%%%%%%%%%%%%%%%%
Let $V$ the $d$-dimensional space-time. Consider a Feynman graph with $m$ external edges and corresponding momenta $q_1, \ldots, q_m \in V$, and with $n - m$ loops $(n \geq m)$. The amplitude of this graph can be written as:
\begin{equation}
I_{( q_1  \ldots q_m )} (f)  :=  \int_{ V^{n-m}} f( q_1, \ldots, q_n ) dq_{m+1}  \ldots dq_n,
\end{equation}
where we assume at first that $f$ is a Schwartz function. The dimension $d$ of space-time has become an external parameter which can be replaced by any complex number $D$. We obtain the following results: first, for $\Re D > n+1$ and $\Re D < 0$ the integral $I^D (f)$ exists. Moreover, this function admits a holomorphic extension with respect to $D$ on $\mathbb C$. If $f$ is a Feynman type rational function (see Definition \ref{fonc fey} below), $I^D (f)$ is defined on a half-plane but admits a meromorphic extension.\\
Once we have a meromorphic function $I^D (f)$, the regularization value of the integral is obtained by minimal subtraction at the physical dimension $D = d$. We consider the Laurent series around $d$:
\begin{equation}
 I^D (f) =  \sum_{n \in \mathbb Z} a_n .( D - d )^n.
\end{equation}
Then the counterterm is given by:
\begin{equation}
 I^{D}_- (f) :=  \sum_{n =-1}^{-\infty} a_n .( D - d )^n,
\end{equation}
and regularized value is:
\begin{equation}
 I^{reg} (f):= a_0.
\end{equation}
%%%%%%%%%%%%%%%%%%%%%%%%%%%%%%%%%%%%%%%%%%%%%%%%%%%%%%%%%%%%%%%%%%%
%%%%%%%%%%%%%%%%%%%%%%%%%%%%%%%%%%%%%%%%%%%%%%%%%%%%%%%%%%%%%%%%%%%
\subsection {The $D$-dimensional integral}
%%%%%%%%%%%%%%%%%%%%%%%%%%%%%%%%%%%%%%%%%%%%%%%%%%%%%%%%%%%%%%%%%%%
%%%%%%%%%%%%%%%%%%%%%%%%%%%%%%%%%%%%%%%%%%%%%%%%%%%%%%%%%%%%%%%%%%%
In this section, we construct the $D$-dimensional integral for Schwartz functions. We will always work with the Euclidean model of space-time. Let $V$ be Euclidean space-time and $\beta$ the positive definite metric on $V$. We denote by $d$ the physical dimension of space-time, and use $D$ for the dimension when viewed as a complex variable. The Lorentz group is replaced by the orthogonal group $O(d)$.\\

Let $W$ be a finite dimensional vector space, we write $S^2 W^*$ for the vector space of symmetric bilinear forms on $w$, $\bar S_+^2 W^*$ and $S_+^2 W^*$ for the subsets of positive semi-definite and positive definite bilinear forms on $W$. Thus
     $$ S_+^2 W^*   \subset  \bar S_+^2 W^*  \subset   S^2 W^*.$$                     
We write $\Cal S (\bar S_+^2 W^*)$ and $\Cal S (S^2 W^*)$ for the spaces of Schwartz functions on $\bar S_+^2 W^*$ and $ S^2 W^*$. By definition, a function on  $\bar S_+^2 W^*$ is a Schwartz function iff it is the restriction of a Schwartz function on $ S^2 W^*$.\\  
Let $E$ be a $n$-dimensional vector space and $F$ a $m$-dimensional subspace of $E$. We view an n-tuple $q:= (q_1, \cdots, q_n)$ with $q_j \in \mathbb R^d$, as an element of the vector space $\mop{Hom}(E, \mathbb R^d)$. The m-tuple $(q_1,  \cdots, q_m)$ is nothing but the restriction $ q_{|F} \in \mop{Hom}(F, \mathbb R^d)$. The bilinear form $q^* (\beta) \in \bar S_+^2 E^*$ is obtained by pulling back $\beta \in (S_+^2 \mathbb R^d)^*$ along $q$. 
\begin{proposition}\cite{RM}
Let $q_1$ and $q_2$ be two elements of $\mop{Hom}(E, \mathbb R^d)$. Then $q_1$ and $q_2$ are in the same $O(d)$-orbit iff $q^*_1 \beta = q^*_2 \beta$.
\end{proposition}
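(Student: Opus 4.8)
The plan is to treat this as a standard linear-algebra fact about the Gram form of a linear map, where $O(d)$ acts on $\mop{Hom}(E,\mathbb R^d)$ by post-composition $g\cdot q = g\circ q$, and to prove the two implications separately. The forward direction is immediate: if $q_2 = g\circ q_1$ for some $g\in O(d)$, then since $g$ preserves $\beta$ we get $(q_2^*\beta)(x,y) = \beta(g q_1 x, g q_1 y) = \beta(q_1 x, q_1 y) = (q_1^*\beta)(x,y)$ for all $x,y\in E$, so $q_1^*\beta = q_2^*\beta$.

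For the converse, suppose $q_1^*\beta = q_2^*\beta$. First I would observe that $q_1$ and $q_2$ have the same kernel: because $\beta$ is positive definite, $x\in\ker q_i$ iff $(q_i^*\beta)(x,x)=0$, and the right-hand side is the same for $i=1,2$. Denote this common kernel by $K$. The equality of Gram forms then lets me define a map $\phi$ on the image $V_1 := \mop{Im} q_1$ by $\phi(q_1 x) := q_2 x$; this is well defined precisely because $\ker q_1 = \ker q_2 = K$ (if $q_1 x = q_1 x'$ then $x-x'\in K$, whence $q_2 x = q_2 x'$), and it is a linear isometry onto $V_2 := \mop{Im} q_2$ since $\beta(\phi q_1 x, \phi q_1 y) = (q_2^*\beta)(x,y) = (q_1^*\beta)(x,y) = \beta(q_1 x, q_1 y)$.

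The remaining step is to extend the partial isometry $\phi : V_1 \to V_2$ to a global orthogonal transformation. Since $\dim V_1 = \dim V_2 = n - \dim K$, the orthogonal complements $V_1^\perp$ and $V_2^\perp$ in $(\mathbb R^d,\beta)$ have equal dimension, so there exists a linear isometry $\psi : V_1^\perp \to V_2^\perp$; setting $g := \phi \oplus \psi$ relative to the orthogonal decompositions $\mathbb R^d = V_1 \oplus V_1^\perp = V_2 \oplus V_2^\perp$ produces an element $g\in O(d)$ with $g|_{V_1} = \phi$. Then $g q_1 x = \phi(q_1 x) = q_2 x$ for every $x\in E$, that is $q_2 = g\circ q_1$, so the two maps lie in the same orbit.

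The only nontrivial point is this extension of an isometry between subspaces to an orthogonal transformation of the ambient Euclidean space — essentially the positive-definite case of Witt's extension theorem — which I would establish directly through the orthogonal-complement construction above rather than by invoking the general theorem. Everything else reduces to the well-definedness of $\phi$ and a routine verification that it is an isometry, both of which follow mechanically from positive definiteness of $\beta$ together with the hypothesis $q_1^*\beta = q_2^*\beta$.
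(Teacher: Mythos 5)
Your proof is correct and complete. Note that the paper itself offers no proof of this proposition --- it is stated with a citation to Meyer's notes \cite{RM} and immediately used --- so there is no internal argument to compare against. Your route (positive definiteness of $\beta$ forces $\ker q_1=\ker q_2$, the equality of pulled-back Gram forms makes $\phi(q_1x):=q_2x$ a well-defined isometry $\mop{Im}q_1\to\mop{Im}q_2$, and any isometry between the orthogonal complements extends $\phi$ to an element of $O(d)$) is the standard positive-definite instance of Witt extension and is exactly what the cited source relies on; the cross-term check showing $\phi\oplus\psi$ is orthogonal is the only detail you leave implicit, and it is immediate from the orthogonality of the two decompositions.
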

\noindent We suppose that $f$ is already given as a function defined on all of $\bar S_+^2 E^*$. Then we can rewrite our integral as:
\begin{equation}\label{int}
I^d\big|_{k^* \beta} (f) = \int_{\{ q \in \smop{Hom}(E, \mathbb R^d) / q_{|F} = k\}} f(q^* \beta) dq,
\end{equation}
for $k \in \mop{Hom}(F, \mathbb R^d)$. The right hand side depends only on $k^*\beta$, not on $k$ itself. The integral $I^d\big|_C (f)$ is defined only for $\mop{rank}(C) \leq d$.
%%%%%%%%%%%%%%%%%%%%%%%%%%%%%%%%%%%%%%%%%%%%%%%%%%%%%%%%%%%%%%%%%%%
%%%%%%%%%%%%%%%%%%%%%%%%%%%%%%%%%%%%%%%%%%%%%%%%%%%%%%%%%%%%%%%%%%%
\subsection {Extrapolation to complex dimensions}
%%%%%%%%%%%%%%%%%%%%%%%%%%%%%%%%%%%%%%%%%%%%%%%%%%%%%%%%%%%%%%%%%%%
%%%%%%%%%%%%%%%%%%%%%%%%%%%%%%%%%%%%%%%%%%%%%%%%%%%%%%%%%%%%%%%%%%%%%%
We consider $\phi_B (A)$, defined for all $A$ in $S_+^2 E^*$ and $B$ in $S_+^2 E$ by:
$$\phi_B (A) := \exp (- \tr (AB)).$$
Notice that this is a Schwartz function on $S_+^2 E^*$ by positivity.
\begin{proposition}
Let be $B \in S_+^2 E^*$. We denote by $B_{F^\bot}$, the restriction of $B$ at $F^\bot = (E/F)^* \subseteq E^*$ and by $B^{F^*} \in S_+^2 F^*$ the restriction of  the bilinear form $B$ to $F$. The integral $I^d\big|_C (\phi_B )$ is defined for $C \in \bar S_+^2 F^*$ such that $\mop{rank}(C) \leq d $, and is given by:
\begin{equation}\label{gg}
I^d\big|_C  (\phi_B ) = \pi^{(n-m)d/2} \exp ( - \mop{tr} (C. B^{F^*})).(\det B_{F^\bot})^{-d/2}.
\end{equation}
\end{proposition}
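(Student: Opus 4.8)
The plan is to reduce the statement to a standard Gaussian integral over the fibre $\{q : q_{|F} = k\}$ and then to recognize the resulting prefactor and exponent as the claimed invariants. First I would choose a basis $(e_1, \ldots, e_n)$ of $E$ adapted to the inclusion $F \subset E$, so that $(e_1, \ldots, e_m)$ spans $F$; this presents $q \in \mop{Hom}(E, \R^d)$ as a $d \times n$ real matrix whose columns are $q_1, \ldots, q_n \in \R^d$, the constraint $q_{|F} = k$ freezing the first $m$ columns to $k_1, \ldots, k_m$ and leaving $Q := (q_{m+1}, \ldots, q_n)$ as the genuine integration variable. In these coordinates $q^*\beta$ is the Gram matrix $q^\top q$, so that $\phi_B(q^*\beta) = \exp(-\tr(B\, q^\top q))$, and writing $B$ in block form along the splitting $\{1,\ldots,m\} \sqcup \{m+1,\ldots,n\}$ turns the exponent into $\tr(B_{FF} C) + 2\tr(B_{LF} K^\top Q) + \tr(B_{LL} Q^\top Q)$, where $K = (k_1,\ldots,k_m)$ and $C = K^\top K = k^*\beta$.

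The key observation is that this exponent decouples over the $d$ Euclidean components: for each $\mu = 1, \ldots, d$ the corresponding column-slice $v^{(\mu)} \in \R^{n-m}$ enters only through the quadratic form $B_{LL}$ and a linear term governed by $B_{LF}$. I would then complete the square in each $v^{(\mu)}$, equivalently shift $Q$ by its critical value $Q_* = -K B_{FL} B_{LL}^{-1}$; this is legitimate because the positive definiteness of $B$ forces its principal submatrix $B_{LL}$ — that is, the restriction $B_{F^\bot}$ of $B$ to $F^\bot \subseteq E^*$ — to be positive definite and hence invertible, which also guarantees convergence. Performing the resulting $d(n-m)$ one-dimensional Gaussian integrals produces the prefactor $\pi^{(n-m)d/2}(\det B_{LL})^{-d/2}$, while the shift-independent part of the exponent collects into $-\tr\!\big(C\,(B_{FF} - B_{FL}B_{LL}^{-1}B_{LF})\big)$.

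It then remains to match these two quantities with the invariants in the statement. The determinant factor is immediate: $\det B_{LL} = \det B_{F^\bot}$, since $B_{LL}$ is literally the matrix of the restriction of $B$ to $F^\bot = \mop{span}(e_{m+1}^*, \ldots, e_n^*)$. The exponent requires identifying the Schur complement $B_{FF} - B_{FL}B_{LL}^{-1}B_{LF}$ with $B^{F^*}$, and I expect this to be the main obstacle, as it hinges on the correct reading of ``the restriction of $B$ to $F$'': since $B$ is a form on $E^*$, its restriction to the subspace $F \subseteq E$ must be understood as $\big((B^{-1})_{|F}\big)^{-1}$, the inverse of the honest restriction of the covariant metric $B^{-1}$ to $F$. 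The standard block-inversion identity $\big((B^{-1})_{FF}\big)^{-1} = B_{FF} - B_{FL}B_{LL}^{-1}B_{LF}$ then delivers exactly the Schur complement, so that the exponent equals $-\tr(C \cdot B^{F^*})$ as a matrix pairing. Finally I would note that the entire computation depends on $k$ only through $C = k^*\beta$, that the hypothesis $\mop{rank}(C) \leq d$ is precisely what guarantees the existence of a $d \times m$ matrix $K$ with $K^\top K = C$, and that this simultaneously confirms the stated domain of definition and the claimed formula.
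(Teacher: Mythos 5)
Your computation is correct, and it reaches the same Gaussian integral over the fibre $\{q : q_{|F}=k\}$ that the paper evaluates; the difference lies in how the basis is chosen. The paper picks a basis of $E$ adapted to a splitting $E = F \oplus F^\bot$ in which $B$ is simultaneously \emph{diagonal}, so that the cross-terms between the frozen columns $K$ and the integration variables $Q$ vanish identically, the integral factorizes into $n-m$ scalar Gaussians, and $B^{F^*}$ appears simply as the diagonal block $B_{11}$. You instead work in an arbitrary basis adapted to $F \subset E$, keep the cross-term $2\,\mop{tr}(B_{LF}K^{\top}Q)$, and complete the square at $Q_* = -KB_{FL}B_{LL}^{-1}$; this produces the Schur complement $B_{FF}-B_{FL}B_{LL}^{-1}B_{LF}$ in the exponent, which you then identify with $B^{F^*} = \big((B^{-1})_{|F}\big)^{-1}$ via block inversion. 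What your route buys is twofold: it supplies a proof of the identity \eqref{hh5}, which the paper merely records after the proposition, and it avoids the implicit assumption (glossed over in the paper's proof) that one can choose the complement of $F$ to be $B$-orthogonal before diagonalizing — the step where the paper writes the exponent as $\sum_i \mu_i\|q(e_i)\|^2$ with no mixed terms is only legitimate for that particular splitting. What the paper's route buys is brevity: once $B$ is diagonal the whole computation is a product of one-dimensional integrals and no matrix identities are needed. Your closing observations — that the result depends on $k$ only through $C=k^*\beta$, and that $\mop{rank}(C)\le d$ is exactly the condition for writing $C=K^{\top}K$ with $K$ of size $d\times m$ — correctly account for the stated domain of definition, which the paper also invokes at the start of its proof.
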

\begin{proof}
We decompose $E$ as follows: $E = F \oplus F^\bot$. Let $(e_1, \cdots, e_m)$ (resp. $(e_{m+1}, \cdots, e_n)$) be an orthogonal basis of $F$ (resp. $F^\bot$) such that $B$ is diagonally on $(e_1, \cdots, e_n)$. We note $\mu_i = B (e_i, e_i)$. The condition $\mop{rank}(C) \leq d$ implies the existence of a symmetric bilinear form $\beta$ on $\mathbb R^d$ such that $C = k^*\beta$ with $k \in \mop{Hom} (F, \mathbb R^d)$. Then we calculate:
\begin{eqnarray*}
I^d\big|_C  (\phi_B ) &=&\int_{\{ q \in \smop{Hom}(E, V) / q_{|F} = k\}} \exp \Big(- \tr (C. B)\Big) dq \\
&=&\int_{\{ q \in \smop{Hom}(E, V) / q_{|F} = k\}} \exp \Big(-\sum_{i=1}^{n}  \mu_i q^* \beta (e_i, e_i)\Big) dq \\
&=& \int_{\{ q \in \smop{Hom}(E, V) / q_{|F} = k\}} \exp \Big(-\sum_{i=1}^{n} \mu_i || q (e_i)||^2 \Big) dq \\  
&=&  \exp \big(-\sum_{i=1}^{m} \mu_i || k_j||^2 \big) \int_{\{ q \in \smop{Hom}(E, V) / q_{|F} = k\}} \exp \Big(-\sum_{i=m+1}^{n} \mu_i || q (e_i)||^2 \Big) dq \\
&=& \exp \Big(- \tr  \big(k^* \beta_{|F}. B^{F^*}\big)\Big) \prod_{i=m+1}^{n} \int \exp \Big( \mu_i || q (e_i)||^2 \Big) dq_i \\
&=& \exp \Big(- \tr \big(C. B^{F^*}\big)\Big) \prod_{i=m+1}^{n} \pi^{d/2} \mu_i^{-d/2}\\
&=& \pi^{(n-m)d/2}  \exp \Big(- \tr \big(C. B^{F^*}\big)\Big) (\det B_{F^\bot})^{-d/2}.
\end{eqnarray*}
\end{proof}
We write $E$ as $E \cong F \oplus G$ and describe $B$ by a block matrix $(B_{ij})$ that respects the decomposition $E^* = F^* \oplus G^*$. We obtain:
\begin{equation}\label{hh5}
 B_{F^\bot} = B_{22}, \hskip 3cm  B^{F^*} = B_{11} - B_{12} B_{22}^{-1} B_{21}.
\end{equation}
If the decomposition of $E$ diagonalizes the matrix $B$, then we get $B^{F^*} = B_{11}$. 
\begin{theorem}\cite{RM} \label{thmdis}
There is unique family of distributions $I^D \big|_C \in \Cal S (\bar S_+^2 E^*)'$ that satisfies
\begin{equation}\label{th5}
 I^D \big|_C (\phi_B ) = \pi^{(n-m)D/2} \exp ( - \mop{tr} (C. B^{F^*})).(\det B_{F^\bot})^{-D/2}
\end{equation} 
for all $B \in  S_+^2 E$, $ C \in \bar S_+^2 F^*$, and $D \in \mathbb C$. In addition, these distributions piece together to a continuous linear map
$$ I = (I^D)_{D \in \mathbb C} : \Cal S ( \bar S_+^2 E^*) \longrightarrow \Cal O (\mathbb C , \Cal S ( \bar S_+^2 F^*) ).$$
\end{theorem}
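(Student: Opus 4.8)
The strategy is to reduce the whole statement to the behaviour on the Gaussian test functions $\phi_B$, on which the value of $I^D\big|_C$ is already dictated by \eqref{th5}, and to exploit that these functions form a total family. First I would establish a density-and-uniqueness lemma: the linear span of $\{\phi_B : B \in S_+^2 E\}$ is dense in $\Cal S(\bar S_+^2 E^*)$. The clean way is a Laplace-transform argument. The assignment $B \mapsto \phi_B$ extends holomorphically, with values in $\Cal S(\bar S_+^2 E^*)$, to the tube $S_+^2 E + i\,S^2 E$ (where $\Re B$ is positive definite, so $\exp(-\tr(AB))$ stays Schwartz in $A$), so for any $u \in \Cal S(\bar S_+^2 E^*)'$ the function $B \mapsto \langle u, \phi_B\rangle$ is holomorphic there. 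If it vanishes on the totally real cone $S_+^2 E$ it vanishes on the whole tube by the identity theorem, and injectivity of the Laplace transform then forces $u = 0$. By Hahn--Banach the span is dense, so a distribution is determined by its values on the $\phi_B$. This already yields \emph{uniqueness} of the family $(I^D\big|_C)$, and it reduces existence to showing that the prescription sending $\phi_B$ to the right-hand side of \eqref{th5} extends to a continuous functional depending holomorphically on $D$.

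For existence I would first treat $\Re D$ large. Fixing the decomposition $E = F \oplus F^\bot$ as in the Proposition establishing \eqref{gg}, the fibre integral \eqref{int} is an integral over $\mop{Hom}(F^\bot, \R^d) \cong \R^{(n-m)d}$ of a function of $q^*\beta$; since the integrand depends on these extra variables only through the $O(d)$-invariant Gram data, the integral reduces to a finite-dimensional integral over the cone of positive semi-definite $(n-m)\times(n-m)$ forms against a matrix-variate (Wishart) weight whose exponent $(d-(n-m)-1)/2$ is affine in $d$. Replacing $d$ by a complex $D$ gives an absolutely convergent integral for $\Re D$ large, hence a genuine element of $\Cal S(\bar S_+^2 E^*)'$; evaluating it on $\phi_B$ repeats verbatim the computation of that Proposition with $d \to D$ (the scalar Gaussian integral $\int \exp(-\mu_i\|q(e_i)\|^2)\,dq_i = \pi^{d/2}\mu_i^{-d/2}$ being continued to $\pi^{D/2}\mu_i^{-D/2}$), so \eqref{th5} holds for these $D$. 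Holomorphy of $D \mapsto I^D\big|_C(f)$ on this half-plane follows by Morera's theorem after differentiating under the integral sign, and the output is visibly a Schwartz function of $C$.

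It remains to continue the family to all of $\mathbb C$ and to check that it assembles into a continuous map into $\Cal O(\mathbb C, \Cal S(\bar S_+^2 F^*))$. Here I would again use the density just established: for fixed Schwartz $f$ the function $D \mapsto I^D\big|_C(f)$ is, on the Gaussians, entire in $D$, because the right-hand side of \eqref{th5} is entire for each $B, C$ (as $\det B_{F^\bot} > 0$). It therefore suffices to approximate $f$ by combinations of $\phi_B$ and pass to the limit, \emph{provided} the approximation is controlled by Schwartz seminorms locally uniformly in $D$. Establishing these locally uniform (in $D$) seminorm bounds on $C \mapsto I^D\big|_C(f)$ is the main obstacle, and it is exactly what upgrades the mere meromorphic continuation (all one gets for Feynman-type functions) to an \emph{entire} dependence for Schwartz functions: the poles that the Wishart/matrix-Gamma normalization would create must be absent on $\Cal S$, which one reads off from the manifestly entire formula \eqref{th5} on the dense Gaussian family. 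Granting these bounds, Vitali--Montel gives holomorphy of the limit on all of $\mathbb C$, the limit agrees with the prescribed values on the $\phi_B$ (hence is the unique family), and continuity of $f \mapsto (I^D\big|_\bullet(f))_{D\in\mathbb C}$ into $\Cal O(\mathbb C, \Cal S(\bar S_+^2 F^*))$ follows from the same estimates.
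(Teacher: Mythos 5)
Your overall architecture is sound and, on the half-plane $\Re D>n-1$, coincides with the paper's: your ``integral over the cone of positive semi-definite forms against a Wishart weight'' is exactly the distribution $\rho^D(A,C)=\pi^{(n-m)D/2}\,\frac{\Gamma_m(D/2)}{\Gamma_n(D/2)}\,\frac{(\det A)^{(D-n-1)/2}}{(\det C)^{(D-m-1)/2}}\,\delta(A_F-C)$ that the paper introduces, and your density-of-Gaussians argument (holomorphy of the Laplace transform on the tube plus Hahn--Banach) is a clean way to obtain the uniqueness statement, which the paper leaves implicit. One caveat on this portion: the verification of \eqref{th5} for the cone integral is \emph{not} a verbatim rerun of the Gaussian computation behind \eqref{gg} with $d\to D$. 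Once the fibre integral \eqref{int} has been traded for an integral over $S_+^2E^*$, evaluating on $\phi_B$ is a genuinely different calculation --- block decomposition of $A$, a Gaussian integral over $\mop{Hom}(F^*,F^\bot)$, then $n-m$ iterated one-dimensional Gamma integrals whose product cancels against the normalization $\Gamma_m(D/2)/\Gamma_n(D/2)$ --- and it occupies the whole of Lemma \ref{lem1}. You should not present it as free.

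The genuine gap is the continuation from $\Re D>n-1$ to all of $\mathbb C$. You correctly isolate the missing ingredient --- seminorm bounds on $g\mapsto I^D\big|_C(g)$, for $g$ in the span of the Gaussians, locally uniform in $D\in\mathbb C$ --- but you then ``grant'' it, and nothing in your setup produces it: entireness of $D\mapsto\pi^{(n-m)D/2}\exp(-\mop{tr}(C.B^{F^*}))(\det B_{F^\bot})^{-D/2}$ for each fixed $B$ gives no control of $\sum_k|c_k|\,(\det B_{k,F^\bot})^{-\Re D/2}$ in terms of Schwartz seminorms of $\sum_k c_k\phi_{B_k}$, and Vitali--Montel cannot even start without that control. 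This is precisely why the paper (following Meyer) announces a \emph{second description} of $I^D\big|_C$, valid for $\Re D<0$, to be patched with the first: the actual mechanism is that $(\det A)^{(D-n-1)/2}/\Gamma_n(D/2)$, viewed as a density on the cone, is a Riesz-type family of distributions which continues to an \emph{entire} family in $D$ via integration by parts against the Cayley operator $\det(\partial)$ --- the same Bernstein identity the paper later invokes for Feynman-type functions --- and it is this second representation that simultaneously yields the continuation and the locally uniform seminorm estimates you need. Without supplying it (or an equivalent), your argument establishes the theorem only for $\Re D>n-1$.
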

\begin{definition}
The operator $I^D$ is called $D$-dimensional integral with parameters. If $F = \{0\}$, it is called $D$-dimensional integral.     
\end{definition}
%%%%%%%%%%%%%%%%%%%%%%%%%%%%%%%%%%%%%%%%%%%%%%%%%%%%%%%%%%%%%%%%
%%%%%%%%%%%%%%%%%%%%%%%%%%%%%%%%%%%%%%%%%%%%%%%%%%%%%%%%%%%%%%%%%%%
\subsection {Construction of the $D$-dimensional integral with parameters}
%%%%%%%%%%%%%%%%%%%%%%%%%%%%%%%%%%%%%%%%%%%%%%%%%%%%%%%%%%%%%%%%%%%
%%%%%%%%%%%%%%%%%%%%%%%%%%%%%%%%%%%%%%%%%%%%%%%%%%%%%%%%%%%%%%%%%%%
To prove Theorem \ref{thmdis}, we present two descriptions for a distribution $I^D \big|_C$ satisfying \eqref{th5}. The first one only works for $\Re D > n-1$. The second one works for $\Re < 0$ and is used to extend the first description to all $D \in \mathbb C$.\\
For all $l \in \mathbb N$, $x \in \mathbb C$, we define:
$$\Gamma_l (x) := \pi^{l(l-1)/4} \prod_{j=0}^{l-1} \Gamma(x-\frac{j}{2}).$$
For $\Re D > n-1$ and $C \in S_+^2 F^*$, we define:
$$ A \longmapsto \rho^D ( A, C) :=  \pi^{(n-m)D/2} . \frac{\Gamma_m(D/2)}{\Gamma_n(D/2)}.\frac{(\det A)^{(D-n-1)/2}}{(\det C)^{(D-m-1)/2}} . \delta (A_F - C).$$
This is a well-defined distribution because the function $(\det A)^{-1}$ is locally integrable on $S^2 E^*$.
\begin{lemma} \cite{RM} \label{lem1}
Let $\Re D > n-1$, $C \in S_+^2 F^*$ and $B \in S_+^2 E$. Then:
$$\int_{S_+^2 E^*} \rho^D ( A, C) \phi_B (A) dA = \pi^{(n-m)D/2} \exp ( - \mop{tr} (C. B^{F^*})) . (\det B_{F^\bot})^{-D/2}.$$  
\end{lemma}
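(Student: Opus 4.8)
The plan is to reduce the computation to a product of two classical matrix integrals by using the $\delta$-factor to eliminate one block of variables and a Schur-complement substitution to decouple the rest. First I would insert the definition of $\rho^D(A,C)$ and pull out everything independent of $A$, so that the problem becomes the evaluation of
$$J := \int_{S_+^2 E^*} (\det A)^{(D-n-1)/2}\,\delta(A_F - C)\,e^{-\tr(AB)}\,dA.$$
Writing $E \cong F \oplus G$ and decomposing $A$ and $B$ in the corresponding block form $(A_{ij})$, $(B_{ij})$ (so that $A_{11}=A_F$, $A_{21}=A_{12}^{\,T}$, and $B_{F^\bot}=B_{22}$, $B^{F^*}=B_{11}-B_{12}B_{22}^{-1}B_{21}$ by \eqref{hh5}), the factor $\delta(A_F-C)$ forces $A_{11}=C$ and removes the integration over the $F$-block, leaving the off-diagonal block $A_{12}$ and the $G$-block $A_{22}$, with $dA=dA_{11}\,dA_{12}\,dA_{22}$.

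Next I would introduce the Schur complement $S := A_{22}-A_{21}C^{-1}A_{12}$. Two standard facts make this change of variables effective: $\det A = \det C\cdot\det S$, and positive definiteness of $A$ is equivalent to $C>0$ together with $S>0$, while $dA_{22}=dS$ at fixed $A_{12}$. Expanding $\tr(AB)$ in blocks and substituting $A_{22}=S+A_{21}C^{-1}A_{12}$, the exponent splits into a constant $-\tr(CB_{11})$, a term $-\tr(SB_{22})$ in $S$ alone, and a quadratic-plus-linear expression in $A_{12}$. Hence $J$ factorizes as $(\det C)^{(D-n-1)/2}e^{-\tr(CB_{11})}$ times an $S$-integral and an $A_{12}$-integral.

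The $S$-integral is a multivariate (Wishart) gamma integral over the cone of positive-definite $(n-m)\times(n-m)$ forms; with exponent $(D-n-1)/2=\alpha-(n-m+1)/2$ for $\alpha=(D-m)/2$ it equals $\Gamma_{n-m}\!\big((D-m)/2\big)(\det B_{22})^{-(D-m)/2}$, and the convergence condition $\Re\alpha>(n-m-1)/2$ is precisely the hypothesis $\Re D>n-1$. The $A_{12}$-integral is Gaussian; completing the square (via $\tr(X^{T}C^{-1}XB_{22})=\operatorname{vec}(X)^{T}(B_{22}\otimes C^{-1})\operatorname{vec}(X)$) gives the prefactor $\pi^{m(n-m)/2}(\det B_{22})^{-m/2}(\det C)^{(n-m)/2}$ and the exponential $e^{\tr(CB_{12}B_{22}^{-1}B_{21})}$; this last factor combines with $e^{-\tr(CB_{11})}$ to produce exactly $e^{-\tr(C\cdot B^{F^*})}$.

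Finally I would assemble the pieces: the powers of $\det C$ collect to $(\det C)^{(D-m-1)/2}$, cancelling the $(\det C)^{-(D-m-1)/2}$ sitting in $\rho^D$, and the powers of $\det B_{22}=\det B_{F^\bot}$ collect to the clean exponent $-D/2$. What survives is the numerical factor $\pi^{m(n-m)/2}\Gamma_m(D/2)\Gamma_{n-m}\!\big((D-m)/2\big)/\Gamma_n(D/2)$, so the claim reduces to the splitting identity $\Gamma_n(D/2)=\pi^{m(n-m)/2}\Gamma_m(D/2)\Gamma_{n-m}\!\big((D-m)/2\big)$. I expect the main obstacle to be exactly this final bookkeeping of constants: verifying the gamma identity by telescoping $\prod_{j=0}^{n-1}\Gamma(D/2-j/2)$ over the ranges $0\le j\le m-1$ and $m\le j\le n-1$, checking that the $\pi$-exponents $m(n-m)/2+m(m-1)/4+(n-m)(n-m-1)/4$ add up to $n(n-1)/4$, and ensuring the Gaussian normalization stays consistent with the Lebesgue measure fixed by \eqref{gg} (so that no stray factor of $2$ or $2\pi$ appears).
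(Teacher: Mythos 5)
Your proof is correct and follows essentially the same route as the paper's: block-decompose $E\cong F\oplus G$, use the $\delta$-factor to fix $A_{11}=C$, pass to the Schur complement with unit Jacobian so that $\det A=\det C\cdot\det S$, evaluate a Gaussian integral over the off-diagonal block and a Wishart/Siegel integral over $S$, and close with the splitting identity $\Gamma_n(D/2)=\pi^{m(n-m)/2}\Gamma_m(D/2)\Gamma_{n-m}\big((D-m)/2\big)$. The only differences are cosmetic: the paper first picks a $B$-adapted basis with $G=F^{\perp_B}$ so that $B_{12}=0$ and $B^{F^*}=B_{11}$ (avoiding the completion of the square by which you recover $B_{11}-B_{12}B_{22}^{-1}B_{21}$ for general $B$), and instead of quoting the Siegel integral it re-derives it by recursively splitting off one-dimensional blocks, which is where its convergence condition $\Re D>n-1$ and the product of Gamma factors appear.
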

\begin{proof}
Both sides of the equation are defined independently of the choice of basis, they only use the subspace $F \subseteq E$, the volume forms on $E$ and $F$, and the positive definite bilinear forms $C$ and $B$. Let $(x_i)_{1\leq i\leq n}$ be a basis of $E$ such that $F = < x_1, \ldots, x_m >$ and $G = < x_{m+1}, \ldots, x_n >$. Let $(x^*_i)_{1\leq i\leq n}$ be the corresponding dual basis for $E^*$. We put:
 $$ G := F^{\bot_B} =  < x_{m+1}, \ldots, x_n > \subseteq E, \hskip 2cm G^* = F^{\bot} \subseteq E^*,$$
The positive definite bilinear forms $B$ and $C$ are then written in the form:
$$ C = \begin{pmatrix}
    c_1 & \cdots & 0 \\ 
    \vdots & \ddots & \vdots \\ 
    0      & \cdots & c_n 
\end{pmatrix}
,\hskip3cm B  =
\left( \begin{array}{cc}
B^{F^*} & 0 \\
0 & B_{F^\bot} 
\end{array}\right).
$$
We represent an element $A$ of $S_+^2 E^*$ as a block matrix $(A_{ij})$ with respect to the decomposition $E = F \oplus G$. If $A$ is positive definite, so is  $A_{11}$. Hence $A_{11}$ is invertible and we can define:
$$Y := A_{21} A_{11}^{-1} , \hskip 2cm  Y^* :=  A_{11}^{-1} A_{12}, \hskip 2cm  X :=  A_{22} - A_{21} A_{11}^{-1} A_{12}.$$
The computation
$$
\left( \begin{array}{cc}
1 & 0 \\
Y & 1 
\end{array}\right) . \left( \begin{array}{cc}
A_{11} & 0 \\
0 & X 
\end{array}\right) . \left( \begin{array}{cc}
1 & Y^* \\
0 & 1 
\end{array}\right) = \left( \begin{array}{cc}
A_{11} & A_{12} \\
A_{21} & A_{22}
\end{array}\right) $$  
shows that  $A > 0$ iff $A_{11} > 0$ and $X > 0$ and that $\det A = \det A_{11} . \det X$. We apply the change of variables: $A \longmapsto (A_{11}, X ,A_{21})$, which identifies
$$ S_+^2 E^* \cong S_+^2 F^* \times S_+^2 F^{\bot} \times \mop{Hom} (F^*, F^{\bot}).$$
Its Jacobian has determinant $1$ everywhere. Simplifying first the $\delta$-function and then the Gaussian integral for $A_{21}$, we obtain:
\begin{eqnarray*}
I &=& \int_{S_+^2 E^*} \rho^D ( A, C) \phi_B (A) dA\\
&=&\int_{S_+^2 E^*}  \pi^{(n-m)D/2}  \frac{\Gamma_m(D/2)}{\Gamma_n(D/2)} \frac{(\det A)^{(D-n-1)/2}}{(\det C)^{(D-m-1)/2}} \exp \big(- \tr (A B)\big). \delta (A_F - C) dA \\
&=& \int_{{S_+^2 F^*} \times {S_+^2 F^{\bot}} \times{\smop{Hom} (F^*, F^{\bot})}} \pi^{(n-m)D/2}  \frac{\Gamma_m(D/2)}{\Gamma_n(D/2)} \frac{(\det A_{11} . \det X)^{(D-n-1)/2}}{(\det C)^{(D-m-1)/2}}\\
&&\\
&& \hskip 2.5cm \exp \big(- \tr (A_{11}B_{11}) - \tr (A_{22}B_{22}) \big). \delta (A_F - C) d A_{11} d A_{21} d X \\
&&\\
&=& \int_{{S_+^2 F^*} \times {S_+^2 F^{\bot}} \times{\smop{Hom} (F^*, F^{\bot})}} \pi^{(n-m)D/2}  \frac{\Gamma_m(D/2)}{\Gamma_n(D/2)} \frac{(\det A_{11} . \det X)^{(D-n-1)/2}}{(\det C)^{(D-m-1)/2}}\\
&&\\
&& \hskip 2.5cm \exp \big(- \tr (A_{11}B_{11}) - \tr (X + A_{21} A_{11}^{-1} A_{12}) B_{22} \big). \delta (A_F - C) d A_{11} d A_{21} d X \\
&&\\
&=& \pi^{(n-m)D/2} . \frac{\Gamma_m(D/2)}{\Gamma_n(D/2)} (\det C)^{(m-n)/2}  \exp \big( - \tr (B^{F^*} C)\big)\\
&&\\
&& \hskip 2.5cm \int_{S_+^2 F^{\bot}} (\det X)^{(D-n-1)/2}  \exp \big( -tr (B_F^{\bot} X)\big)\\
&& \hskip 2.5cm \int_{\smop{Hom} (F^*, F^{\bot})} \exp \big( - \tr (B_{F^{\bot}} A_{21} C^{-1} A_{21}^t)\big) d A_{21} d X.
\end{eqnarray*} 
Let $J = \int_{\smop{Hom} (F^*, F^{\bot})} \exp \big( - \tr (B_{F^{\bot}} A_{21} C^{-1} A_{21}^t)\big) d A_{21}$, we perform the following change of variables:
$$A'_{21} = C^{-1/2} A_{21},$$
which implies that:
$$dA_{21} = (\det C)^{m(n-m)/2m} dA'_{21} = (\det C)^{(n-m)/2} dA'_{21},$$
and $J$ becomes:
\begin{eqnarray*}
J&=& (\det C)^{(n-m)/2} \int_{\smop{Hom} (F^*, F^{\bot})} \exp \big( - \tr (B_{F^{\bot}} A_{21}  A_{21}^t)\big) d A_{21}.
\end{eqnarray*}
We perform another change of variables: 
$$A'_{21} = (B_{F^\bot})^{1/2} A_{21} = (b_{F^\bot})^{1/2} A_{21},$$
which implies that:
$$dA_{21} = (\det B_{F^{\bot}})^{-m(n-m)/2(n-m)} dA'_{21} = (\det B_{F^{\bot}})^{-m/2} dA'_{21},$$
and we obtain:
\begin{eqnarray*}
J&=& (\det C)^{(n-m)/2} (\det B_{F^{\bot}})^{-m/2}  \int_{\smop{Hom} (F^*, F^{\bot})} \exp \big( - \tr ( A_{21} A_{21}^t)\big) d A_{21}\\
&=& \pi^{m(n-m)/2} (\det C)^{(n-m)/2} (\det B_{F^{\bot}})^{-m/2} ,
\end{eqnarray*}
and the integral $I$ becomes:
\begin{eqnarray*}
I&=& \pi^{(n-m)D/2} . \frac{\Gamma_m(D/2)}{\Gamma_n(D/2)} (2\pi)^{m(n-m)/2}  \exp \big( - \tr (B^{F^*} C)\big)\\
&&\hskip3cm \int_{S_+^2 F^{\bot}} (\det B_{F^{\bot}})^{-m/2} (\det X)^{(D-n-1)/2}  \exp \big( - \tr (B_F^{\bot} X)\big) d X.
\end{eqnarray*}
Soit $K = \int_{S_+^2 F^{\bot}} (\det B_{F^{\bot}})^{-m/2} (\det X)^{(D-n-1)/2}  \exp \big( - \tr (B_{F^{\bot}} X)\big) d X$.
We perform the following change of variables: 
$$X' = B_{F^{\bot}} X = b_{F^{\bot}} X,$$
which implies that:
$$dX = (\det B_{F^{\bot}})^{-(n-m)(n-m+1)/2(n-m)} dX' = (\det B_{F^{\bot}})^{-(n-m+1)/2} dX',$$
and the integral $K$ becomes:
\begin{eqnarray*}
K&=& (\det B_{F^{\bot}})^{- m/2} (\det B_{F^{\bot}})^{-(n-m+1)/2} \int_{S_+^2 F^{\bot}}  (\det {b_{F^{\bot}}}^{-1} X)^{(D-n-1)/2}  \exp \big( - \tr X\big) d X\\
&=& (\det B_{F^{\bot}})^{[- m - (n-m+1) - (D-n-1)]/2 } \int_{S_+^2 F^{\bot}}  (\det  X)^{(D-n-1)/2}  \exp \big( - \tr X \big) d X\\
&=& (\det B_{F^{\bot}})^{- D/2 } \int_{S_+^2 F^{\bot}}  (\det  X)^{(D-n-1)/2}  \exp \big( - \tr X \big) d X.
\end{eqnarray*}
Now we decompose $G = F^{\bot} := G_1 \oplus G_2$ with $\dim G_1 = 1$. we write $ X = (X_{ij})$, et $X_{11} = x_{11}$, and we put: 
$$ T :=  X_{21} X_{11}^{-1} , \hskip 2cm  T^* :=  X_{11}^{-1} X_{12}, \hskip 2cm  L :=  X_{22} - X_{21} X_{11}^{-1} X_{12}.$$
By calculation we obtain
$$
\left( \begin{array}{cc}
1 & 0 \\
T & 1 
\end{array}\right) . \left( \begin{array}{cc}
X_{11} & 0 \\
0 & L 
\end{array}\right) . \left( \begin{array}{cc}
1 & T^* \\
0 & 1 
\end{array}\right) = \left( \begin{array}{cc}
X_{11} & X_{12} \\
X_{21} & X_{22}
\end{array}\right).$$  
This equality shows that $X > 0$ iff $X_{11} > 0$ and $L > 0$ and that $\det X = \det X_{11} . \det L$. We use the following change of variables: $X \longmapsto (X_{11}, L ,X_{21})$. The integral $K$ is written:  
\begin{eqnarray*}
K&=&(\det B_{F^{\bot}})^{- D/2 } \int_{{S_+^2 G_1} \times {S_+^2 G_2} \times{\smop{Hom} (G_1, G_2)}} (\det  X_{11})^{(D-n-1)/2}  \exp \big( - \tr X_{11} \big) \\
&&\\
&&\hskip3cm (\det  L)^{(D-n-1)/2} \exp \big( - \tr (L + X_{21} X_{11}^{-1} X_{12}) \big) d X_{11} d L  d X_{21}  \\
&&\\
&=&(\det B_{F^{\bot}})^{- D/2 } \int_{{S_+^2 G_1} \times {S_+^2 G_2}} (\det  X_{11})^{(D-n-1)/2}  \exp \big( - \tr X_{11} \big) (\det  L)^{(D-n-1)/2} \\
&&\\
&& \hskip3cm \exp \big( - \tr (L ) \big) \underbrace{\int_{\smop{Hom} (G_1, G_2)}\exp \big( - \tr ( X_{21} X_{11}^{-1} X_{12}) \big) d X_{21}}_{\pi^{(n-m-1)/2} (\det  X_{11})^{(n-m-1)/2}} \; d X_{11} d L.\\   
&=& \pi^{(n-m-1)/2} (\det B_{F^{\bot}})^{- D/2 } \int_{S_+^2 G_1}  (\det  X_{11})^{[(D-n-1)/2 + (n-m-1)/2]}  \exp (- X_{11})  d X_{11}\\
&&\\
&&\hskip3cm\int_{S_+^2 G_2} (\det L)^{(D-n-1)/2} \exp \big( -\tr (L)\big)d L   \\
&&\\
&=& \pi^{(n-m-1)/2} (\det B_{F^{\bot}})^{- D/2 } \int_{S_+^2 G_1}  X_{11}^{(D-m-2)/2}  \exp (- X_{11})  d X_{11}\\
&&\\
&&\hskip3cm\int_{S_+^2 G_2} (\det L)^{(D-n-1)/2} \exp \big( -\tr (L)\big)d L   \\
&&\\
&=& \pi^{(n-m-1)/2} (\det B_{F^{\bot}})^{- D/2} \Gamma \big(D/2 - m/2\big). \int_{S_+^2 G_2} (\det L)^{(D-n-1)/2} \exp \big( - \tr (L)\big) dL .
\end{eqnarray*}
we repeat this process $n - m + 1$ times, which yieds a product of Gamma functions and powers of $\pi$:
\begin{eqnarray*}
I&=& \pi^{(n-m)D/2} . \frac{\Gamma_m(D/2)}{\Gamma_n(D/2)} \exp \big( -\tr (B^{F^*} C)\big) (\det B_{F^{\bot}})^{-D/2} .\pi^{m(n-m)/2}. \pi^{(n-m-1)/2}\\
&&\\
&& .\Gamma \big(D/2 - m/2\big). \pi^{(n-m-2)/2}.\Gamma \big(D/2 - (m+1)/2\big) \cdots \pi^{0}.\Gamma \big(D/2 - (n-1)/2\big)\\
&&\\
&=& \pi^{(n-m)D/2} \frac{\pi^{\frac{(m-n)(m+n-1)}{4}}}{\prod_{j = m}^{n-1} \Gamma \big(D/2 - j/2\big)} \exp \big( -\tr (B^{F^*} C)\big) (\det B_{F^{\bot}})^{-D/2} \\
&& \pi^{(n-m)m/2} \pi^{\sum_{j = m+1}^{n} (n-j)/2} . \prod_{j = m}^{n-1} \Gamma \big(D/2 - j/2\big)\\
&=& \pi^{(n-m)D/2}  \exp \big( -\tr (B^{F^*} C)\big) (\det B_{F^{\bot}})^{-D/2}. 
\end{eqnarray*}
Hence the final result:
$$\int_{S_+^2 E^*} \rho^D ( A, C) \phi_B (A) dA = \pi^{(n-m)D/2} \exp ( -\tr (C. B^{F^*})) . (\det B_{F^\bot})^{-D/2}.$$ 
\end{proof}
Thus, we can define $I^D \big|_C $ for ${\Re} D > n-1$ and $C \in S_+^2 F^*$ by: 
$$ I^D \big|_C : = \rho^D ( . , C).$$
\begin{definition}
We define $I^D_{E, F} : \Cal S ( \bar S_+^2 E^*) \longrightarrow  \Cal S ( \bar S_+^2 F^*)$ for all $C \in S_+^2 F^*$ and $f \in \Cal S ( \bar S_+^2 E^*)$, by:
\begin{equation}
I^D_{E, F} (f) (C) := I^D \big|_C (f).
\end{equation}
\end{definition}
\begin{proposition}\label{prosc} Let $E$, $F$ and $G$ be three vector spaces such that $G \subseteq F \subseteq E$. The following diagram is commutative:
\diagramme{\xymatrix{
\Cal S ( \bar S_+^2 E^*) \ar[rrd]_{I^D_{E, G}}\ar[rr]^{I^D_{E, F}} &&\Cal S ( \bar S_+^2 F^*) \ar[d]^{I^D_{F, G}} \\
&& \Cal S ( \bar S_+^2 G^*) 
 }}
\end{proposition}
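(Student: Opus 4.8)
The plan is to deduce the identity $I^D_{F,G}\circ I^D_{E,F}=I^D_{E,G}$ from the uniqueness part of Theorem~\ref{thmdis}. Both $I^D_{F,G}\circ I^D_{E,F}$ and $I^D_{E,G}$ are continuous linear maps $\Cal S(\bar S_+^2 E^*)\to\Cal S(\bar S_+^2 G^*)$, holomorphic in $D$, and by Theorem~\ref{thmdis} such a map is completely determined by the associated family of distributions $C\mapsto I^D\big|_C$, which is itself pinned down by its values on the Gaussians $\phi_B$, $B\in S_+^2 E$. Hence it suffices to check that the composite, evaluated on $\phi_B$ at a point $C\in S_+^2 G^*$, reproduces the characterizing formula \eqref{th5} for the pair $(E,G)$, namely $\pi^{(n-\ell)D/2}\exp(-\tr(C\,B^{G^*}))(\det B_{G^\perp})^{-D/2}$, where $n=\dim E$, $\ell=\dim G$, and $B^{G^*}$, $B_{G^\perp}$ are the restrictions of $B$ relative to the inclusion $G\subseteq E$. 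Uniqueness then forces the composite to equal $I^D_{E,G}$.

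The key computational observation is that $I^D_{E,F}$ sends a Gaussian to a Gaussian. Writing $m=\dim F$ and applying \eqref{th5} for the pair $(E,F)$,
\[
I^D_{E,F}(\phi_B)(C)=\pi^{(n-m)D/2}(\det B_{F^\perp})^{-D/2}\,\exp(-\tr(C\,B^{F^*})),
\]
so that, as a function of $C\in S_+^2 F^*$, we have $I^D_{E,F}(\phi_B)=\pi^{(n-m)D/2}(\det B_{F^\perp})^{-D/2}\,\phi_{B^{F^*}}$ with $B^{F^*}\in S_+^2 F$ (positive definiteness of the Schur complement of a positive definite form guarantees this is a genuine Gaussian, so that $I^D_{F,G}$ may indeed be applied to it). Applying \eqref{th5} once more, now for the pair $(F,G)$ with $\dim F=m$ in the role of the ambient dimension, gives
\[
(I^D_{F,G}\circ I^D_{E,F})(\phi_B)(C)=\pi^{(n-\ell)D/2}\big(\det B_{F^\perp}\,\det (B^{F^*})_{G^\perp}\big)^{-D/2}\exp\!\big(-\tr(C\,(B^{F^*})^{G^*})\big),
\]
where the restrictions $(B^{F^*})^{G^*}$ and $(B^{F^*})_{G^\perp}$ are taken relative to $G\subseteq F$. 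The powers of $\pi$ already match, since $(n-m)+(m-\ell)=n-\ell$.

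Comparing with the target formula reduces the whole proposition to two linear-algebraic identities attached to the flag $G\subseteq F\subseteq E$: first $(B^{F^*})^{G^*}=B^{G^*}$ (making the exponential factors agree), and second $\det B_{F^\perp}\cdot\det (B^{F^*})_{G^\perp}=\det B_{G^\perp}$ (making the prefactors agree). To prove both, I would fix a basis adapted to the flag, decomposing $E=G\oplus M\oplus N$ with $M$ a transverse complement of $G$ in $F$ and $N$ a transverse complement of $F$ in $E$, and write $B$ as a $3\times 3$ block matrix, exactly as in the $\mathrm{LDL}^{t}$ reduction used in the proof of Lemma~\ref{lem1} and in \eqref{hh5}. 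The second identity is then just the block-determinant formula $\det\left(\begin{smallmatrix}B_{MM}&B_{MN}\\ B_{NM}&B_{NN}\end{smallmatrix}\right)=\det B_{NN}\cdot\det\big(B_{MM}-B_{MN}B_{NN}^{-1}B_{NM}\big)$, together with the identification of $(B^{F^*})_{G^\perp}$ with the Schur complement $B_{MM}-B_{MN}B_{NN}^{-1}B_{NM}$; the first identity is the transitivity of the Schur complement (the Crabtree--Haynsworth quotient formula), which states that eliminating $N$ and then $M$ yields the same effective form on $G$ as eliminating $M\oplus N$ at once.

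The main obstacle is precisely this first identity: one must keep the three transverse complements involved ($F^\perp\subseteq E$, $G^\perp\subseteq F$, and $G^\perp\subseteq E$) organized consistently, so that the iterated Schur complement truly coincides with the direct one. Once the flag-adapted block decomposition is fixed, this is classical and short, and everything else in the argument is formal, resting on the uniqueness in Theorem~\ref{thmdis} and on the Gaussian computation already carried out in Lemma~\ref{lem1}.
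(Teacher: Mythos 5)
Your proof is correct, but it takes a genuinely different route from the paper's. The paper verifies the identity by composing the explicit integral kernels $\rho^D(\,\cdot\,,C)$ introduced for $\Re D>n-1$: it writes $I^D_{F,G}\bigl(I^D_{E,F}(\phi_B)\bigr)(C')$ as a double integral against $\rho^D(C,C')\,\rho^D(A,C)$, lets the Gamma-function normalizations and the determinant ratios telescope, and integrates the two constraints $\delta(C_G-C')\,\delta(A_F-C)$ out to $\delta(A_G-C')$, thereby recovering the kernel $\rho^D(A,C')$ of $I^D_{E,G}$; transitivity there is carried entirely by the tautological identity $(A_F)_G=A_G$, and no Schur-complement algebra appears. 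You instead work directly from the characterizing formula \eqref{th5}, valid for all $D\in\mathbb C$, observe that $I^D_{E,F}$ sends the Gaussian $\phi_B$ to a scalar multiple of the Gaussian $\phi_{B^{F^*}}$, and invoke the uniqueness in Theorem \ref{thmdis}; the burden of transitivity is then shifted onto the two classical identities $(B^{F^*})^{G^*}=B^{G^*}$ and $\det B_{F^\perp}\cdot\det\bigl(B^{F^*}\bigr)_{G^\perp}=\det B_{G^\perp}$, which you correctly identify (the Crabtree--Haynsworth quotient formula and the block-determinant formula) and which do hold in a block decomposition adapted to the flag $G\subseteq F\subseteq E$. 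What your approach buys is that it bypasses the kernel representation entirely, hence needs no restriction to a half-plane in $D$ followed by analytic continuation, and it exposes precisely which finite-dimensional linear algebra underlies the semigroup property $I^D_{E,G}=I^D_{F,G}\circ I^D_{E,F}$; what it costs is that the two Schur-complement identities, though classical, still have to be proved rather than merely named (you only sketch them), whereas in the paper's kernel computation the corresponding step is the trivial transitivity of restriction.
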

\begin{proof}
Let $C \in S_+^2 F^*$, $C' \in S_+^2 G^*$ with $C_{|G} = C'$ and $B \in S_+^2 E$, we have:
\begin{eqnarray*}
I^D_{F, G} \big(I^D_{E, F} (\phi_B)\big)(C') &=& \int_{S_+^2 F^*} \rho^D ( C, C') \int_{S_+^2 E^*} \rho^D ( A, C) \phi_B (A) dA dC\\
&=& \pi^{(n-p)D/2} \frac{\Gamma_p(D/2)}{\Gamma_m(D/2)} \frac{\Gamma_m(D/2)}{\Gamma_n(D/2)} \int_{S_+^2 F^*}  \frac{(\det C)^{(D-m-1)/2}}{(\det C')^{(D-p-1)/2}}  \\
&& \int_{S_+^2 E^*} \frac{(\det A)^{(D-n-1)/2}}{(\det C)^{(D-m-1)/2}}  \phi_B (A)  \delta (C_G - C') \delta (A_F - C) dA dC \\
&=& \pi^{(n-p)D/2} \frac{\Gamma_p(D/2)}{\Gamma_n(D/2)} \int_{S_+^2 F^*} \frac{(\det A)^{(D-n-1)/2}}{(\det C')^{(D-p-1)/2}}  \\
&& \int_{S_+^2 E^*} \phi_B (A)  \delta (C_G - C') \delta (A_F - C) dA dC \\
&=&\hskip-0.3cm\int_{S_+^2 E^*} \pi^{(n-p)D/2} \frac{\Gamma_p(D/2)}{\Gamma_n(D/2)} \frac{(\det A)^{(D-n-1)/2}}{(\det C')^{(D-p-1)/2}} \phi_B (A) \delta (A_G - C') dA\\
&=&\int_{S_+^2 E^*} \rho^D ( A, C') \phi_B (A) dA \\
&=&I^D_{E, G} \big(\phi_B \big) (C').
\end{eqnarray*}
Hence: $I^D_{E, G} = I^D_{F, G} \circ I^D_{E, F}$. 
\end{proof}
%%%%%%%%%%%%%%%%%%%%%%%%%%%%%%%%%%%%%%%%%%%%%%%%%%%%%%%%%%%%%%%%%%%
%%%%%%%%%%%%%%%%%%%%%%%%%%%%%%%%%%%%%%%%%%%%%%%%%%%%%%%%%%%%%%%%%%%
\subsection {$D$-dimensional integral of a Feynman type function} 
%%%%%%%%%%%%%%%%%%%%%%%%%%%%%%%%%%%%%%%%%%%%%%%%%%%%%%%%%%%%%%%%%%%
%%%%%%%%%%%%%%%%%%%%%%%%%%%%%%%%%%%%%%%%%%%%%%%%%%%%%%%%%%%%%%%%%%%
\begin{definition}\label{fonc fey}
A function $f \in {S_+^2 E^*}$ is of Feynman type if it is written in the form:
\begin{equation}
f(A) = \frac{P(A)}{\prod_{j=1}^l \big( \tr (AB_j) + m_j^2\big)},
\end{equation}
where $P$ is a polynomial, $B_j \in {\bar S_+^2 E^*}$ such that $B(t):= \big(\sum t_j B_j \big) \in {S_+^2 E^*}$ for all $t_j >0$, and the $m_j$ are positive real numbers. In particular, for all subspace $F\subset E$, $(B(t))_{F^\bot}$ is non-degenerate. The function $f$ is without poles in ${\bar S_+^2 E^*}$. 
\end{definition}
\begin{remark}
We note here that the functions of Feynman type form an algebra. For the product that is immediate, and for the sum it is sufficient to reduce to the same denominator. 
\end{remark}
\begin{proposition}\cite{PE}
If $f$ is a Feynman type function then $I_{E, F}^D (f)$ extends to the whole complex plane to a meromorphic function of the variable $D$ for all subspace $F\subset E$.
\end{proposition}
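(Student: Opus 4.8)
The plan is to reduce the assertion to the explicit Gaussian formula \eqref{th5} by means of a Schwinger (Feynman--parameter) representation of the denominator, and then to read off the meromorphy in $D$ from a single finite--dimensional parametric integral. First I would dispose of the numerator $P$. For every $B\in S_+^2 E$ the function $P(A)\phi_B(A)$ is Schwartz, so the distribution $I^D\big|_C$ of Theorem \ref{thmdis} applies to it, and differentiating the defining identity \eqref{th5} in the entries of $B$ shows that $I^D\big|_C\big(P(A)\phi_B\big)$ is obtained by applying a constant--coefficient differential operator $P(-\partial_B)$ to the right--hand side of \eqref{th5}. Since that right--hand side is $\pi^{(n-m)D/2}\exp(-\tr(C\cdot B^{F^*}))(\det B_{F^\bot})^{-D/2}$, which for fixed $B\in S_+^2 E$ is entire in $D$, the only effect of $P$ is to multiply the forthcoming integrand by a rational expression of definite homogeneity in the Schwinger parameters; so I may carry out the main estimate for $P\equiv 1$ and reinstate $P$ at the end.

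Next I would Schwinger--parametrize each propagator by
\[
\frac{1}{\tr(AB_j)+m_j^2}=\int_0^\infty e^{-t_j(\tr(AB_j)+m_j^2)}\,dt_j ,
\]
so that, writing $B(t):=\sum_j t_j B_j\in S_+^2 E$ for $t\in(0,\infty)^l$ (which is legitimate by the positivity hypothesis of Definition \ref{fonc fey}), one has $f(A)=\int_{(0,\infty)^l}e^{-\sum_j t_j m_j^2}\,\phi_{B(t)}(A)\,dt$. For $\Re D$ in the range where the kernel $\rho^D(\cdot,C)$ of Lemma \ref{lem1} is an honest positive integration, the integrand is absolutely integrable, Fubini applies, and \eqref{th5} gives
\[
I^D_{E,F}(f)(C)=\pi^{(n-m)D/2}\int_{(0,\infty)^l}e^{-\sum_j t_j m_j^2}\,e^{-\tr(C\cdot B(t)^{F^*})}\,(\det B(t)_{F^\bot})^{-D/2}\,dt .
\]
The whole problem is thereby reduced to the meromorphic continuation in $D$ of this explicit $l$--dimensional integral.

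Finally I would analyse that integral by separating the scaling variable: write $t=\lambda u$ with $\lambda>0$ and $u$ in the simplex $\{u_j\ge 0,\ \sum_j u_j=1\}$, so $dt=\lambda^{l-1}\,d\lambda\,d\sigma(u)$. Because $F^\bot$ has dimension $n-m$, the form $B(t)_{F^\bot}$ is homogeneous of degree $n-m$, whence $\det B(t)_{F^\bot}=\lambda^{\,n-m}\det B(u)_{F^\bot}$, and the non--degeneracy built into Definition \ref{fonc fey} makes $\det B(u)_{F^\bot}>0$ on the simplex; similarly the Schur complement scales as $B(t)^{F^*}=\lambda\,B(u)^{F^*}$. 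Thus the radial integral is of Gamma type,
\[
\int_0^\infty \lambda^{\,l-1-(n-m)D/2}\,e^{-\lambda\,c(u)}\,d\lambda=\Gamma\Big(l-\tfrac{(n-m)D}{2}\Big)\,c(u)^{-(l-(n-m)D/2)},
\]
where $c(u)=\sum_j u_j m_j^2+\tr(C\cdot B(u)^{F^*})>0$ since the $m_j>0$ and $B(u)^{F^*}$ is positive definite. The residual integral over the compact simplex is absolutely convergent and holomorphic in $D$, while the factor $\Gamma\big(l-\tfrac{(n-m)D}{2}\big)$ supplies the meromorphic continuation to all of $\mathbb{C}$, with poles confined to the $C$--independent discrete set $l-\tfrac{(n-m)D}{2}\in-\mathbb{N}$; this yields genuine meromorphy of $D\mapsto I^D_{E,F}(f)$ as a map into $\Cal S(\bar S_+^2 F^*)$. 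Reinstating $P$ only shifts the $\lambda$--exponent by integers and inserts polynomial--in--$D$ prefactors, leaving the conclusion intact.

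The step I expect to be the main obstacle is not the Gamma--function bookkeeping but the justification of the interchange of the Schwinger integral with $I^D\big|_C$ for all $D$, together with the uniform control of the ultraviolet region $\lambda\to 0$ where $\det B(t)_{F^\bot}\to 0$: it is precisely the homogeneity of $B(t)_{F^\bot}$ combined with its non--degeneracy on the simplex that isolates the single power of $\lambda$ responsible for the poles. Making the interchange rigorous away from the convergence half--plane requires either a dominated--convergence argument on the explicit kernel $\rho^D$ of Lemma \ref{lem1} or analytic continuation from the region $\Re D$ large, which is the delicate part of the argument.
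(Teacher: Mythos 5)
Your first two reductions coincide with the paper's: the numerator $P$ is absorbed into a constant--coefficient differential operator $\bar P(\partial_B)$ acting on the Gaussian identity \eqref{th5}, and the denominator is Schwinger--parametrized so that everything reduces to the parametric integral
$\pi^{(n-m)D/2}\int_{(0,\infty)^l}e^{-\sum_j t_jm_j^2-\tr(C\cdot B(t)^{F^*})}(\det B(t)_{F^\bot})^{-D/2}\,dt$.
Up to that point your plan is the paper's proof. The divergence is in the last step, and there is a genuine gap there. You claim that ``the non--degeneracy built into Definition \ref{fonc fey} makes $\det B(u)_{F^\bot}>0$ on the simplex.'' But the definition only guarantees that $B(t)=\sum_j t_jB_j$ is positive definite (hence $(B(t))_{F^\bot}$ non--degenerate) for \emph{all $t_j>0$}, i.e.\ on the open orthant, hence only on the \emph{interior} of the simplex. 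On the faces where some $u_j=0$ the form $B(u)$ degenerates, and $\det B(u)_{F^\bot}$ --- which is essentially the first Symanzik polynomial of the graph --- genuinely vanishes whenever the edges with $u_j>0$ fail to support all the loops. Consequently your ``residual integral over the compact simplex'' is \emph{not} absolutely convergent and holomorphic in $D$: for $\Re D>0$ the factor $(\det B(u)_{F^\bot})^{-D/2}$ blows up near those faces, and the additional poles produced there (the subdivergences) are precisely why the pole set of a Feynman integral is not confined to the single progression $l-\tfrac{(n-m)D}{2}\in-\mathbb N$ that your radial Gamma factor predicts. Your closing remark about the region $\lambda\to 0$ misidentifies the danger: the homogeneity argument disposes of the overall scale completely, and the real obstruction lives on the boundary of the simplex, which your argument does not touch.

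The paper closes exactly this gap by a different device: it applies Bernstein's theorem to the polynomial $Q(t)=\det B(t)_{F^\bot}$, producing a differential operator $L(D)$ with $L(D)Q^{-D/2}=q(D)Q^{-1-D/2}$, and then integrates by parts to shift $D\mapsto D+2$ at the cost of a factor $q(D)^{-1}$ plus boundary terms; the boundary terms are handled by induction on the strata of $\partial(\mathbb R_+)^l$ (Corollary \ref{coroll}). That induction on boundary faces is precisely the mechanism that controls the vanishing of $Q$ where your argument asserts positivity. To repair your route you would need to replace the single radial split by an iterated (Hepp--sector or blow--up) decomposition resolving the zero locus of $\det B(u)_{F^\bot}$ on the simplex boundary, which amounts to reproving a version of the Bernstein--type continuation by hand.
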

\begin{proof}
We first prove the proposition for $P = 1$, the case where $P \neq 1$ will be proved later. Firstly we use the following equality:
$$ \int_0^\infty \exp (-a t) dt = a^{-1} \;\; \text{for all} \;\;  a > 0.$$
Let $t = (t_1, \cdots, t_l)$ and $B(t) = \sum_{j=1}^l t_j B_j$. Since we have:
\begin{eqnarray*}
f(A) &=& \frac{1}{\prod_{j=1}^l \big( \tr (AB_j) + m_j^2\big)}\\
&=& \prod_{j=1}^l   \big( \tr (AB_j) + m_j^2\big)^{-1} \\
&=& \prod_{j=1}^l \int_{t_j >0}  \exp \big( - t_j (\tr (AB_j) +m_j^2) \big) dt_j\\
&=&  \int_{t_j >0}  \exp \big( - \sum_{j=1}^l \tr A (t_jB_j)\big) \exp \big( - \sum_{j=1}^l t_j m_j^2 \big)  dt_1 \cdots dt_l\\
&=& \int_{t_j >0}  \exp \big( - \tr A (\sum_{j=1}^l t_jB_j)\big) \exp \big( - \sum_{j=1}^l t_j m_j^2 \big) dt,
\end{eqnarray*}
we obtain for all $C \in {S_+^2 F^*}$:
\begin{eqnarray*}
I_{E, F}^D (f) (C) &=& \int_{t_j >0 }  \exp \big( - \sum_{j=1}^l t_j m_j^2 \big)  I_{E, F}^D \Big(\exp \big( - \tr (C \sum_{j=1}^l t_jB_j)\big)\Big) dt.
\end{eqnarray*}
Using Lemma \ref{lem1}, we can write:
\begin{equation}
 I_{E, F}^D (f) (C) = \pi^{(n-m)D/2} \int_{t_j >0 }  \exp \big( - \sum_{j=1}^l t_j m_j^2 -\tr (C. B(t)^{F^*}) \big) . (\det B(t)_{F^\bot})^{-D/2} dt.
\end{equation}
To finish the proof of the proposition we use Bernstein's theorem \cite{Ber} (see also \cite{Cou}) and a corollary.
\begin{theorem} \textbf{(Bernstein's theorem)}
Let $Q$ be a polynomial with $l$ variables. Then there exists a differential operator $L(D)$ in $l$ variables, with coefficients which depend on $D$ and a polynomial $q$ in $D$ such that:
\begin{equation}\label{bern}
L(D) Q^{- D/2} = q(D) Q^{-1 - D/2}.
\end{equation}
\end{theorem}
\begin{corollary}\label{coroll}
We assume that $ Q $ takes positive values for $t_j > 0$. Let $g$ be a function with rapid decay defined for $t_j > 0$, and such that its derivatives are also rapidly decreasing. Then the integral
\begin{equation}\label{berncorbern}
I(D, g) := \int_{t_j >0 } g(t) Q^{- D/2} (t) dt,
\end{equation}
converges for $\Re << 0$, and extends to a meromorphic function in the whole complex plane.
\end{corollary}
\begin{proof}[Preuve]
Using the formulas \eqref{bern} et \eqref{berncorbern} we can write:
\begin{eqnarray*} 
I(D + 2, g) &=& \int_{t_j >0 } g(t) Q^{- D/2 -1} (t) dt\\
&=& \int_{t_j >0 } g(t) q(D)^{-1} L(D) Q^{- D/2} (t) dt\\
&=&  q(D)^{-1} \int_{t_j >0 } g(t) L(D) Q^{- D/2} (t) dt.
\end{eqnarray*}
Using the following induction:
$$g(t) \longrightarrow L(D)^* (g(t))  \;\;\;\;\text{et} \;\;\;\; L(D) Q^{- D/2} (t)\longrightarrow Q^{- D/2} (t)$$
we obtain:
\begin{eqnarray*} 
I(D + 2, g) &=& q(D)^{-1} \int_{t_j >0 } L(D)^* (g(t)) Q^{- D/2} (t) dt + C(D)\\
&=& q(D)^{-1} I(D , L(D)^* (g(t))) + C(D),
\end{eqnarray*}
where $C(D)$ is a sum of integrals of the same type on the boundary. This term is meromorphic by the induction hypothesis, since the boundary of $(\mathbb R_+)^l$ can be written as the union of $l$ copies of $(\mathbb R_+)^{l-1}$ and strata in dimension $\leq l-2$.
\end{proof}
Using Bernstein's theorem and the previous corollary, we can conclude that:
$$I_{E, F}^D (f) (C) = \pi^{(n-m)D/2} \int_{t_j >0 }  \exp \big( - \sum_{j=1}^l t_j m_j^2 -\tr (C. B(t)^{F^*}) \big) . (\det B(t)_{F^\bot})^{-D/2} dt$$
extends to a meromorphic function into $D$, which proves the proposition for $P = 1$.\\
For $P \neq 1$, we obtain the same results: considering
 $$f(A) = \frac{P(A)}{\prod_{j=1}^l \big( \tr (AB_j) + m_j^2\big)},$$
the integral $I_{E, F}^D (f) (C)$ will be written in the form:
\begin{eqnarray*}
I_{E, F}^D (f) (C) &=& \int_{t_j >0 }  \exp \big( - \sum_{j=1}^l t_j m_j^2 \big)  I_{E, F}^D \big( P \exp ( - \tr (. B(t))\big) (C)  dt\\
&=& \int_{t_j >0 }  \exp \big( - \sum_{j=1}^l t_j m_j^2 \big)  I_{E, F}^D \big( P \phi_{B(t)} \big) (C)  dt\\
&=&\int_{t_j >0 }  \exp \big( - \sum_{j=1}^l t_j m_j^2 \big)  I_{E, F}^D \big( \bar P (\partial_B) \phi_{(B(t))} \big) (C)  dt,
\end{eqnarray*}
where $\bar P (\partial_B)$ is a constant coefficient differential operator, defined as follows: for $A = (a_{ij})$ et $B(t) = ( b_{ij}(t))$ we have:
$$\tr (A B(t)) = \sum_{k,i} a_{ik} b_{ki} (t),$$
and then:
\begin{eqnarray*}
- \frac{\partial}{\partial b_{qp}} \exp \big( -\tr (A. B(t)) \big) &=& a_{pq} \exp \big( - \sum_{k,i} a_{ik} b_{ki} (t) \big) \\
&=& a_{pq} \exp \big( -\tr (A. B(t)) \big), 
\end{eqnarray*}
what defines the polynomial $\bar P$ when $P(A) = a_{pq}$, and therefore for all $P$ by iterating the process. As a result:
\begin{eqnarray*}
I_{E, F}^D \big( \bar P (\partial_B) \phi_{(B(t))} \big) (C) &=& \bar P (\partial_B) I_{E, F}^D \big(\phi_{(B(t))} \big) (C)\\
&=& \bar P (\partial_B) \pi^{(n-m)D/2} \exp \big( -\tr (C. B(t)^{F^*}) \big) . (\det B(t)_{F^\bot})^{-D/2}\\
&=&\sum_{r=0}^{\smop {deg} \bar P} g_{r, C} (t) (\det B(t)_{F^\bot})^{-D/2 - r},   
\end{eqnarray*}
where $t \longmapsto g_{r, C} (t)$ is a Schwartz function. Then we obtain: 
\begin{eqnarray*}
I_{E, F}^D (f) (C) &=& \int_{t_j >0 }  \exp \big(- \sum_{j=1}^l t_j m_j^2 \big)  \sum_{r=0}^{\smop {deg} \bar P} g_{r, C} (t) (\det B(t)_{F^\bot})^{-D/2 - r} dt.
\end{eqnarray*}
Hence, under the same conditions as for $P = 1$, the integral $I_{E, F}^D (f)$, if $P\neq 1$, extends to the whole complex plane to a meromorphic function for the complex variable $D$. 
\end{proof}
We denote by $\Cal F (\bar S_+^2 E^*)$ the space of Feynman type functions on $\bar S_+^2 E^*$ and by $\wt {\Cal F} (\bar S_+^2 E^*)$ the space of functions on $\mathbb C \times \bar S_+^2 E^*$, meromorphic in the first variable, equal to $I_{E', E}^D (g)$ for some function $g \in \Cal F (\bar S_+^2{E'}^*)$, where $E'$  is a vector space containing $E$. We use this representation:  
\begin{eqnarray*}
g (A') &=& \int_{t_j >0 }  \bar P (\partial_B) \exp \big( -\tr (A. B(t)) \big) \exp \big( - \sum_{j=1}^l t_j m_j^2 \big) dt,
\end{eqnarray*}
for all $A' \in S_+^2 E'^*$, let also:
\begin{eqnarray*}
g &=& \int_{t_j >0 }  \bar P (\partial) \phi_{(B(t))} \exp \big( - \sum_{j=1}^l t_j m_j^2 \big) dt.
\end{eqnarray*}
We have then for all $A \in S_+^2 E^*$ :
$$ f(A) = I_{E', E}^D (g)(A) = \int_{t_j >0 } \exp \big( - \sum_{j=1}^l t_j m_j^2 \big) \underbrace{I_{E', E}^D \big(\bar P (\partial_B) \exp ( -\tr (.. B(t)))\big)(A)}_{\varphi_t(A)} dt.$$
Clearly $\varphi_t \in \Cal S (\bar S_+^2 E^*)$. Then we put:
$$ I_{E, F}^D (f) :=  \int_{t_j >0 } \exp \big( - \sum_{j=1}^l t_j m_j^2 \big) I_{E, F}^D (\varphi_t) dt.$$
By means of the Proposition \ref{prosc}, we can write for all $C \in \bar S_+^2 F^*$ :
\begin{eqnarray*}
I_{E, F}^D (f)(C) &=& \int_{t_j >0 } \exp \big( - \sum_{j=1}^l t_j m_j^2 \big) I_{E, F}^D I_{E', E}^D \big(\bar P (\partial) \phi_{(B(t))}\big) (C) dt\\
&=& \int_{t_j >0 } \exp \big( - \sum_{j=1}^l t_j m_j^2 \big) I_{E', F}^D \big(\bar P (\partial) \phi_{(B(t))}\big) (C) dt\\
&=& I_{E', F}^D (g)(C). 
\end{eqnarray*}
\begin{corollary}\label{corool}
We have: $$ I_{E, F}^D : \wt {\Cal F} (\bar S_+^2 E^*)\longrightarrow \wt {\Cal F} (\bar S_+^2 F^*),$$
and if $G \subset F \subset E$, the following diagram is commutative:
$$
\xymatrix{
\wt {\Cal F} ( \bar S_+^2 E^*) \ar[rrd]_{ I^D_{E, G}}\ar[rr]^{ I^D_{E, F}} &&\wt {\Cal F} ({\bar S}_+^2 F^*) \ar[d]^{ I^D_{F, G}} \\
&& \wt {\Cal F} ({\bar S}_+^2 G^*) 
 }
$$
in other words:
$$I_{E, G}^D = I_{F, G}^D \circ I_{E, F}^D.$$ 
\end{corollary}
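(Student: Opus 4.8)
My plan is to deduce the corollary formally from the identity established in the computation immediately preceding it, namely that whenever $f = I_{E', E}^D(g)$ with $g \in \Cal F(\bar S_+^2 {E'}^*)$ a Feynman type function and $F \subseteq E \subseteq E'$, one has $I_{E, F}^D(f) = I_{E', F}^D(g)$. I will treat this as the single substantive input and apply it repeatedly, so that both assertions of the corollary become bookkeeping consequences.

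First I would establish the mapping property $I_{E, F}^D : \wt{\Cal F}(\bar S_+^2 E^*) \to \wt{\Cal F}(\bar S_+^2 F^*)$. Given $f \in \wt{\Cal F}(\bar S_+^2 E^*)$, by definition $f = I_{E', E}^D(g)$ for some Feynman type $g$ on a containing space $E' \supseteq E \supseteq F$. The identity above rewrites $I_{E, F}^D(f)$ as $I_{E', F}^D(g)$, which is precisely the integral of a Feynman type function from the containing space $E'$ down to $F$; its meromorphy in $D$ is supplied by the proposition asserting that $I^D$ of a Feynman type function extends meromorphically over $\mathbb C$. Hence, taking the containing space to be $E'$ in the definition of $\wt{\Cal F}(\bar S_+^2 F^*)$, we get $I_{E, F}^D(f) \in \wt{\Cal F}(\bar S_+^2 F^*)$, as required.

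For the commuting triangle with $G \subseteq F \subseteq E$, I would apply the key identity three times. Writing $f = I_{E', E}^D(g)$, it yields both $I_{E, G}^D(f) = I_{E', G}^D(g)$ and $I_{E, F}^D(f) = I_{E', F}^D(g)$. The intermediate function $h := I_{E', F}^D(g)$ is itself presented as an integral from the containing space $E'$ down to $F$ of the Feynman type function $g$; applying the same identity once more, now with $F$ in the role formerly played by $E$ and $G$ in that of $F$ (legitimate since $G \subseteq F \subseteq E'$), gives $I_{F, G}^D(h) = I_{E', G}^D(g)$. Chaining these, $I_{F, G}^D \circ I_{E, F}^D(f) = I_{F, G}^D(h) = I_{E', G}^D(g) = I_{E, G}^D(f)$, which is exactly the desired $I_{E, G}^D = I_{F, G}^D \circ I_{E, F}^D$.

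The one place demanding care, already built into the derivation of the key identity, is the interchange of the operators $I^D$ with the parameter integral $\int_{t_j > 0} \exp(-\sum_j t_j m_j^2)\,(\cdots)\,dt$ through which $g$ is expressed in terms of its Schwartz building blocks $\bar P(\partial)\phi_{B(t)}$, so that the Schwartz-function composition law of Proposition \ref{prosc} may be invoked termwise. I would handle this exactly as in the preceding computation: each $\varphi_t := I^D_{E', \cdot}(\bar P(\partial)\phi_{B(t)})$ is a Schwartz function, Proposition \ref{prosc} applies under the integral sign, and the resulting equality of holomorphic integrands holds first on the half-plane of absolute convergence and then, by uniqueness of meromorphic continuation in $D$, on all of $\mathbb C$. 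This continuation step is the only genuine analytic obstacle; everything else is a formal consequence of the key identity.
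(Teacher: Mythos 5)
Your proposal is correct and takes essentially the same route as the paper: the computation immediately preceding the corollary establishes exactly the identity $I_{E,F}^D(f)=I_{E',F}^D(g)$ for $f=I_{E',E}^D(g)$ (by applying Proposition \ref{prosc} to the Schwartz building blocks $\varphi_t$ under the parameter integral), and the paper, like you, reads off both the mapping property and the commutative triangle as formal consequences of that single identity.
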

%%%%%%%%%%%%%%%%%%%%%%%%%%%%%%%%%%%%%%%%%%%%%%%%%%%%%%%%%%%%%%%%%%%
%%%%%%%%%%%%%%%%%%%%%%%%%%%%%%%%%%%%%%%%%%%%%%%%%%%%%%%%%%%%%%%%%%%
\section {Renormalization of the Feynman integral} 
%%%%%%%%%%%%%%%%%%%%%%%%%%%%%%%%%%%%%%%%%%%%%%%%%%%%%%%%%%%%%%%%%%%
%%%%%%%%%%%%%%%%%%%%%%%%%%%%%%%%%%%%%%%%%%%%%%%%%%%%%%%%%%%%%%%%%%%
%%%%%%%%%%%%%%%%%%%%%%%%%%%%%%%%%%%%%%%%%%%%%%%%%%%%%%%%%%%%%%%%%%%
%%%%%%%%%%%%%%%%%%%%%%%%%%%%%%%%%%%%%%%%%%%%%%%%%%%%%%%%%%%%%%%%%%%
\subsection {The target algebra $\mop{End} \wt {\Cal B}$ and convolution product} 
%%%%%%%%%%%%%%%%%%%%%%%%%%%%%%%%%%%%%%%%%%%%%%%%%%%%%%%%%%%%%%%%%%%
%%%%%%%%%%%%%%%%%%%%%%%%%%%%%%%%%%%%%%%%%%%%%%%%%%%%%%%%%%%%%%%%%%%
Let $\Cal T$ be a quantum field theory and let $\Gamma$ be a connected Feynman graph of $\Cal T$. We recall that $W_\Gamma$ is the momenta space of graph $\Gamma$, and $E = \Cal E (\Gamma)$. In this section, we define the target algebra $\wt {\Cal B}$, the product $\bullet$ and the convolution product $\divideontimes$ analogously to Section \S 7. We put:
\begin{equation}
 \wt V_\Gamma :=  \wt{\Cal F} (\bar S^2_+ E^*).
\end{equation} 
For $\Gamma = \Gamma_1 \cdots \Gamma_r$ connected, we set:
$$\wt V_\Gamma = \bigotimes_{j\in \{1, \cdots, r\}} \wt V_{\Gamma_j},$$
\begin{equation}\label{cabi}
\wt {\Cal B} := \prod_{\Gamma \in \Cal T} \wt V_{\Gamma}.
\end{equation}
The product $\bullet$ is defined on the elements of $\mop{End} \wt {\Cal B}$ similarly to the Paragraph \S \ref{sch}. It is compatible with the composition product. In other words, for the linear maps:
$$\wt a_1 : V_{\Gamma_1} \longrightarrow V_{\Gamma'_1} ; \;\;\;\;\; \wt b_1 : V_{\Gamma'_1} \longrightarrow V_{\Gamma''_1} ,$$
$$\wt a_2 : V_{\Gamma_2} \longrightarrow V_{\Gamma'_2} ; \;\;\;\;\; \wt b_2 : V_{\Gamma'_2} \longrightarrow V_{\Gamma''_2} ,$$
we obtain the following result:
\begin{equation} \label{ctu}
(\wt b_1 \circ \wt a_1) \bullet (\wt b_2 \circ \wt a_2) = (\wt b_1 \bullet \wt b_2) \circ (\wt a_1 \bullet \wt a_2).
\end{equation}
We denote by $\diamond$ the opposite of composition product in $\mop {End} \wt {\Cal B}$. Then we define a convolution product $\divideontimes$ for all $\varphi$, $\psi \in {\Cal L} ({\Cal D}_{\Cal T}, \mop {End} \wt {\Cal B})$ by: 
\begin{equation}
 \varphi \divideontimes \psi :=  \diamond  ( \varphi \otimes \psi )  \Delta.
\end{equation}
In other words, for all specified graphs $\bar\gamma$, $\bar\Gamma$ such that $\bar\gamma \subset \bar\Gamma$ we have:
\begin{equation}
(\varphi \divideontimes \psi) (\bar\Gamma, \bar\gamma ) = \sum_{\substack{\bar\delta \subseteq \bar\gamma \\ \bar\gamma / \bar\delta \in \Cal T }}\psi( \bar\Gamma / \bar\delta, \bar\gamma / \bar\delta)\circ \varphi( \bar \Gamma, \bar\delta ).
\end{equation}
Similarly to the Paragraph \S 7.4 we obtain the following theorem:
\begin{theorem}
Let $G$ be the set of morphisms of unitary algebras: ${\Cal D}_{\Cal T} \longrightarrow (\mop {End} \wt {\Cal B}, \bullet)$. Equipped with the product $\divideontimes$, the set $G$ is a group. 
\end{theorem}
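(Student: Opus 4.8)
The plan is to repeat, mutatis mutandis, the argument used for the analogous group statement over $\Cal B$ in the earlier section, now with $\Cal B$ replaced by $\wt{\Cal B}$ and with the compatibility identity \eqref{ctu} playing the role that Proposition \ref{ct} played there. Concretely, I must verify four points: that $\divideontimes$ is associative, that the map $E$ is a two-sided unit, that $G$ is stable under $\divideontimes$, and that each $\varphi \in G$ possesses a $\divideontimes$-inverse that again lies in $G$.

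For associativity I would use $\varphi \divideontimes \psi = \diamond(\varphi \otimes \psi)\Delta$ together with the coassociativity of $\Delta$ (established in the Proposition showing $\wt{\Cal D}_{\Cal T}$ is a bialgebra) and the associativity of $\diamond$, the opposite of the composition $\circ$ on $\mop{End}\wt{\Cal B}$; these two facts give associativity of $\divideontimes$ immediately. The unit is $E(\bar\Gamma,\bar\gamma) = Id_{\wt{\Cal B}}$ when $\bar\gamma$ is of degree zero and $E(\bar\Gamma,\bar\gamma)=0$ otherwise; plugging $E$ into the explicit formula $(\varphi \divideontimes \psi)(\bar\Gamma,\bar\gamma)=\sum_{\bar\delta}\psi(\bar\Gamma/\bar\delta,\bar\gamma/\bar\delta)\circ\varphi(\bar\Gamma,\bar\delta)$ collapses the sum to a single term and returns $\varphi$ on either side.

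The substantive step is stability. Given $\varphi,\psi \in G$, I would show $(\varphi \divideontimes \psi)(\bar\Gamma\bar\Gamma',\bar\gamma\bar\gamma') = (\varphi \divideontimes \psi)(\bar\Gamma,\bar\gamma) \bullet (\varphi \divideontimes \psi)(\bar\Gamma',\bar\gamma')$ by the same chain of equalities as in the $\Cal B$ case: a covering subgraph $\bar\delta \subseteq \bar\gamma\bar\gamma'$ of a juxtaposition splits uniquely as $\bar\delta\bar\delta'$ with $\bar\delta \subseteq \bar\gamma$ and $\bar\delta' \subseteq \bar\gamma'$, so the convolution sum factorizes; one then uses that $\varphi,\psi$ are $\bullet$-morphisms to replace $\varphi(\bar\Gamma\bar\Gamma',\bar\delta\bar\delta')$ by $\varphi(\bar\Gamma,\bar\delta)\bullet\varphi(\bar\Gamma',\bar\delta')$ (and similarly for $\psi$), and finally invokes \eqref{ctu} to interchange $\bullet$ and $\circ$, regrouping the four factors into the asserted $\bullet$-product. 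This is precisely the computation displayed for $\Cal B$, with every occurrence of Proposition \ref{ct} replaced by \eqref{ctu}.

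For the inverse I would set $\varphi^{\divideontimes-1}(\bar\Gamma,\bar\gamma)=Id_{\wt{\Cal B}}$ for $|\gamma|=0$ and, for $|\gamma|\geq 1$, define it recursively by $\varphi^{\divideontimes-1}(\bar\Gamma,\bar\gamma)=\sum_{\bar\delta\subseteq\bar\gamma,\,|\delta|\geq 1}\varphi^{\divideontimes-1}(\bar\Gamma/\bar\delta,\bar\gamma/\bar\delta)\circ\varphi(\bar\Gamma,\bar\delta)$; the recursion terminates because $|\bar\gamma/\bar\delta|<|\bar\gamma|$, and an induction on the degree (again using \eqref{ctu}) shows $\varphi^{\divideontimes-1}$ is itself a $\bullet$-character, hence in $G$. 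The one point requiring care — and the place I expect the only real obstacle — is not algebraic but structural: since $\wt{\Cal B}$ is built from families of meromorphic functions rather than Schwartz functions, I must confirm that $\circ$, $\bullet$, and the recursive inverse never leave $\wt{\Cal B}$. This is exactly what Corollary \ref{corool} guarantees, namely that the operators $I^D_{E,F}$ send $\wt{\Cal F}$ into $\wt{\Cal F}$ and compose as required; granting that, the meromorphic setting behaves formally like the Schwartz one and the proof goes through unchanged.
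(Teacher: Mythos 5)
Your proposal is correct and follows essentially the same route as the paper, which simply asserts the result ``similarly to Paragraph 7.4'' and relies on the earlier proof with $\Cal B$ replaced by $\wt{\Cal B}$ and Proposition \ref{ct} replaced by the compatibility identity \eqref{ctu}. Your additional remark that Corollary \ref{corool} is what guarantees the operators stay within the meromorphic setting is a sensible point of care that the paper leaves implicit.
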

%%%%%%%%%%%%%%%%%%%%%%%%%%%%%%%%%%%%%%%%%%%%%%%%%%%%%%%%%%%%%%%%%%%
%%%%%%%%%%%%%%%%%%%%%%%%%%%%%%%%%%%%%%%%%%%%%%%%%%%%%%%%%%%%%%%%%%%
\subsection {Feynman integral} 
%%%%%%%%%%%%%%%%%%%%%%%%%%%%%%%%%%%%%%%%%%%%%%%%%%%%%%%%%%%%%%%%%%%
%%%%%%%%%%%%%%%%%%%%%%%%%%%%%%%%%%%%%%%%%%%%%%%%%%%%%%%%%%%%%%%%%%%
Let $\Gamma$, $\gamma$ and $\delta$ be three Feynman graphs such that $\delta \subseteq \gamma \subseteq \Gamma$. We put:
$$ E := \Cal E (\Gamma),    \;\;\;\;\;\;\;\; F := \Cal E (\Gamma/\delta) \;\;\;\;\;\text{and} \;\;\;\;\; G:= \Cal E (\Gamma/\gamma) = \Cal E (\Gamma/\delta \Big/ \gamma/\delta),$$
we have $ G \subseteq F \subseteq E$. We define the Feynman integral $\wt I^D_{\Gamma, \gamma}$ by:
\begin{eqnarray*}
\wt I^D_{\Gamma, \gamma} := I^D_{E, F} : \Cal S (\bar S_+^2 E^*) &\longrightarrow& \Cal S (\bar S_+^2 F^*) 
\end{eqnarray*} 
This expression is holomorphic in $D$, in other words it defines an operator:
\begin{equation}
 \wt I_{\Gamma, \gamma} = (I^D)_{D \in \mathbb C} : \Cal S ( \bar S_+^2 E^*) \longrightarrow \Cal O (\mathbb C , \Cal S ( \bar S_+^2 F^*)).
\end{equation}
\begin{theorem}
Let $\Gamma$, $\gamma$ and $\delta$ be three Feynman graphs such that $\delta \subseteq \gamma \subseteq \Gamma$. We have:
\begin{equation}
\wt I^D_{\Gamma, \gamma} = \wt I^D_{\Gamma, \delta} \circ \wt I^D_{\Gamma/\delta, \gamma/\delta}.
\end{equation}
\end{theorem}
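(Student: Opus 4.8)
The plan is to reduce the identity to the commutative diagram already proved for the $D$-dimensional integral with parameters, i.e. Proposition \ref{prosc} for Schwartz functions and its Feynman-type extension Corollary \ref{corool}, after translating the graph data into nested subspaces of half-edges. First I would record the dictionary: with $E = \Cal E(\Gamma)$, $F = \Cal E(\Gamma/\delta)$ and $G = \Cal E(\Gamma/\gamma)$, the definition of the Feynman integral gives at once
$$\wt I^D_{\Gamma, \delta} = I^D_{E, F}, \qquad \wt I^D_{\Gamma, \gamma} = I^D_{E, G},$$
and it remains to identify $\wt I^D_{\Gamma/\delta, \gamma/\delta}$ with $I^D_{F, G}$.

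The essential combinatorial input is the equality $(\Gamma/\delta)/(\gamma/\delta) = \Gamma/\gamma$, already recorded in the definition of $G$. I would verify it at the level of half-edges: a contraction deletes exactly the half-edges carried by the internal edges of the contracted covering subgraph, so contracting $\delta$ deletes the internal half-edges of $\delta$, contracting $\gamma/\delta$ afterwards deletes those internal half-edges of $\gamma$ not already in $\delta$, and the two deletions together remove precisely the internal half-edges of $\gamma$. Hence $\Cal E\big((\Gamma/\delta)/(\gamma/\delta)\big) = \Cal E(\Gamma/\gamma) = G$, so that $\wt I^D_{\Gamma/\delta, \gamma/\delta} = I^D_{F, G}$, and since $\delta \subseteq \gamma$ the same count yields the inclusions $G \subseteq F \subseteq E$. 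With these identifications the assertion is exactly $I^D_{E, G} = I^D_{F, G} \circ I^D_{E, F}$ for the nested triple $G \subseteq F \subseteq E$ --- written in the order dictated by the domains $\Cal S(\bar S_+^2 E^*) \to \Cal S(\bar S_+^2 F^*) \to \Cal S(\bar S_+^2 G^*)$ --- which is the content of Proposition \ref{prosc}, and of Corollary \ref{corool} in the meromorphic-in-$D$, Feynman-type setting. No new analytic estimate is needed: holomorphy, respectively meromorphy, in $D$ is inherited from those statements.

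The main obstacle is geometric bookkeeping rather than analysis. One must check that the data attached to the two descriptions match under the dictionary, so that Proposition \ref{prosc} applies verbatim and not merely up to a constant: that the orthogonal complement $F^\bot \subseteq E^*$ used to integrate out $\delta$ and the complement $G^\bot \subseteq F^*$ used to integrate out $\gamma/\delta$ assemble into the single complement $G^\bot \subseteq E^*$ used to integrate out $\gamma$, and that the normalized volume forms on $E$, $F$ and $G$ are compatible with these successive projections. Once this nesting of subspaces and volume forms is confirmed, the theorem follows by a direct appeal to the commutative diagram.
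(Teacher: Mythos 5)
Your proof is correct and takes essentially the same route as the paper, whose entire proof is the one-line remark that the theorem is a direct corollary of Proposition \ref{prosc} (with Corollary \ref{corool} handling the meromorphic, Feynman-type setting); your verification of the half-edge dictionary $E=\Cal E(\Gamma)$, $F=\Cal E(\Gamma/\delta)$, $G=\Cal E(\Gamma/\gamma)=\Cal E\big((\Gamma/\delta)/(\gamma/\delta)\big)$ merely makes explicit what the paper leaves implicit. One small point worth flagging: in reducing the claim to $I^D_{E,G}=I^D_{F,G}\circ I^D_{E,F}$ you have silently corrected the order of composition, since the statement as printed reads $\wt I^D_{\Gamma,\delta}\circ\wt I^D_{\Gamma/\delta,\gamma/\delta}$, which does not typecheck on the domains $\Cal S(\bar S_+^2 E^*)\to\Cal S(\bar S_+^2 F^*)\to\Cal S(\bar S_+^2 G^*)$; the paper's companion theorem for $\wt{\Cal F}$ states the composition in your (correct) order.
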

\begin{proof}
This is a direct corollary of Proposition \ref{prosc}.
\end{proof}
We adopt the notation $\wt I^D_{\Gamma, \gamma}$ for $I^D_{E, F} : \wt {\Cal F} (\bar S_+^2 E^*) \longrightarrow \wt {\Cal F} (\bar S_+^2 F^*)$. It extends to a meromorphic function of the complex variable $D$.
\begin{theorem}
Let $\Gamma$, $\gamma$ and $\delta$ be three Feynman graphs such that $\delta \subseteq \gamma \subseteq \Gamma$. Then:
\begin{equation}
\wt I^D_{\Gamma, \gamma} = \wt I^D_{\Gamma/\delta, \gamma/\delta} \circ \wt I^D_{\Gamma, \delta}.
\end{equation}
\end{theorem}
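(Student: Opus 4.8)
The plan is to deduce the identity directly from Corollary \ref{corool}, exactly as the preceding (Schwartz) version was deduced from Proposition \ref{prosc}. First I would unwind the definitions. Recall that $\wt I^D_{\Gamma,\gamma}$ is the operator that contracts $\gamma$ inside $\Gamma$, so with $E = \Cal E(\Gamma)$ and $G = \Cal E(\Gamma/\gamma)$ it is $I^D_{E,G}$; likewise $\wt I^D_{\Gamma,\delta} = I^D_{E,F}$ with $F = \Cal E(\Gamma/\delta)$, and $\wt I^D_{\Gamma/\delta,\gamma/\delta} = I^D_{F,G}$, the last identification using $\Cal E(\Gamma/\delta) = F$ together with $\Cal E\big((\Gamma/\delta)/(\gamma/\delta)\big) = \Cal E(\Gamma/\gamma) = G$. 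Under these identifications the assertion $\wt I^D_{\Gamma,\gamma} = \wt I^D_{\Gamma/\delta,\gamma/\delta} \circ \wt I^D_{\Gamma,\delta}$ becomes precisely $I^D_{E,G} = I^D_{F,G} \circ I^D_{E,F}$.

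Second, I would check that the hypothesis of Corollary \ref{corool}, the nesting $G \subseteq F \subseteq E$, is satisfied. This is where the combinatorics of contraction enter: contracting $\delta$ inside $\Gamma$ discards the half-edges internal to $\delta$ and keeps the remaining ones, which span $F \subseteq E$; contracting $\gamma/\delta$ inside $\Gamma/\delta$ keeps a still smaller set, spanning $G \subseteq F$. Since $\delta \subseteq \gamma \subseteq \Gamma$ these subspaces are nested, $G \subseteq F \subseteq E$, and the two-step contraction $\Gamma \to \Gamma/\delta \to (\Gamma/\delta)/(\gamma/\delta)$ agrees with the one-step contraction $\Gamma \to \Gamma/\gamma$. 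With $G \subseteq F \subseteq E$ in hand, Corollary \ref{corool} applies on the nose and yields the commuting triangle $I^D_{E,G} = I^D_{F,G} \circ I^D_{E,F}$, i.e. the desired equality. The meromorphic dependence on $D$ and the fact that both sides take values in $\wt{\Cal F}$ are already built into that corollary, so no further analytic work is needed.

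I expect the only genuine obstacle to be this half-edge bookkeeping under iterated contraction — in particular the identity $\Cal E\big((\Gamma/\delta)/(\gamma/\delta)\big) = \Cal E(\Gamma/\gamma)$ and the consequent identification of $I^D_{F,G}$ with $\wt I^D_{\Gamma/\delta,\gamma/\delta}$. This is the meromorphic analogue of the stage-by-stage factorization $I_{\Gamma,\gamma} = I_{\Gamma/\delta,\gamma/\delta} \circ I_{\Gamma,\delta}$ already recorded in Section 6 for convergent integrals; since all the analysis has been absorbed into Corollary \ref{corool}, the proof reduces to matching these identifications and invoking the corollary.
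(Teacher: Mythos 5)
Your proposal is correct and follows exactly the paper's route: the paper's own proof consists of the single line that the theorem is a direct consequence of Corollary \ref{corool}, and you simply make explicit the identifications $E=\Cal E(\Gamma)$, $F=\Cal E(\Gamma/\delta)$, $G=\Cal E(\Gamma/\gamma)=\Cal E\big((\Gamma/\delta)/(\gamma/\delta)\big)$ and the nesting $G\subseteq F\subseteq E$ that the corollary requires. The half-edge bookkeeping you flag is indeed the only content beyond the corollary, and the paper sets up those identifications in \S 9.2 just before the theorem, so your argument matches the intended one.
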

\begin{proof}
This is a direct consequence of corollary \ref{corool}.
\end{proof}
The Feynman rules are defined for $U = \Cal E (\mop{res} \Gamma)$ by:
$$\wt I_{\Gamma, \Gamma} \big(\varphi (\Gamma)\big) \in \wt {\Cal F} (\bar S_+^2 U^*),$$
where $\varphi (\Gamma)$ is the integrand defined by the formula \eqref{propa}, which can also be written as being a Feynman type function on $\bar S_+^2 U^*$, in the form:
\begin{equation}
\varphi(\Gamma)(p) = \prod_{v\in \Cal V (\Gamma)} g_v  \prod_{\{e,\sigma(e)\}, \sigma(e)\neq e} G_{e \sigma(e)} (p^*\beta(e , e))\prod_{\sigma(e)= e} G_e (p^*\beta(e , e)).
\end{equation}
%%%%%%%%%%%%%%%%%%%%%%%%%%%%%%%%%%%%%%%%%%%%%%%%%%%%%%%%%%%%%%%%%%%
%%%%%%%%%%%%%%%%%%%%%%%%%%%%%%%%%%%%%%%%%%%%%%%%%%%%%%%%%%%%%%%%%%%
\subsection {Birkhoff decomposition} 
%%%%%%%%%%%%%%%%%%%%%%%%%%%%%%%%%%%%%%%%%%%%%%%%%%%%%%%%%%%%%%%%%%%
%%%%%%%%%%%%%%%%%%%%%%%%%%%%%%%%%%%%%%%%%%%%%%%%%%%%%%%%%%%%%%%%%%%
Let $\varphi$ be an element of the group $G (k[z^{-1} , z]])$, that is to say a character of ${\Cal D}_{\Cal T}$ with values in the unitary commutative algebra  ${\Cal A} : = \mop {End} \wt {\Cal B} ([z^{-1} , z]])$, where we have extended the commutative product $\bullet$ to $\Cal A$ by $k[z^{-1} , z]]$-linearity. We equipped $\Cal A$ by the minimal renormalization scheme:
\begin{equation}
{\Cal A} = {\Cal A}_- \oplus {\Cal A}_+,
\end{equation}
$$ \text{where:} \;\; {\Cal A}_+ : =   \mop {End} \wt {\Cal B} [[ z ]] \;\;\;\; \text{and} \;\;\;\; {\Cal A}_- : = \mop {End} z^{-1} \wt {\Cal B} [z^{-1}].$$  
We denote by $P$ the projection on ${\Cal A_-}$ parallel to ${\Cal A}_+$. 
\begin{theorem}
\begin{enumerate}
\item \hskip-0.1cm  Any character $\varphi$ has a unique Birkhoff decomposition in $G$: 
\begin{equation}
\varphi = \varphi _-^{\divideontimes-1} \divideontimes \varphi_+
\end{equation}
compatible with the renormalization scheme chosen.
\item The components $\varphi_+$ and $\varphi_- $ are given by simple recursive formulas: for all $(\bar\Gamma,\bar\gamma)$ of degree zero, $\varphi _-(\bar\Gamma,\bar\gamma) = \varphi _+(\bar\Gamma,\bar\gamma) = \varphi (\bar\Gamma,\bar\gamma) = Id_{\Cal B}$, and for all $(\bar\Gamma,\bar\gamma)$ of degree $n$ we put:
\begin{equation}\label{fin1}
\varphi_- (\bar\Gamma,\bar\gamma) =  - P \Big( \varphi (\bar\Gamma,\bar\gamma) + \sum_{\substack{\bar\delta \subsetneq\; \bar\gamma \\ \bar\gamma / \bar\delta \; \in \Cal T}} \varphi (\bar\Gamma/\bar\delta, \bar\gamma/\bar\delta)  \circ \varphi_-(\bar\Gamma, \bar\delta) \Big) 
\end{equation}
\begin{equation}\label{fin2}
\varphi_+ (\bar\Gamma,\bar\gamma) = (I- P) \Big( \varphi (\bar\Gamma,\bar\gamma) + \sum_{\substack{\bar\delta \subsetneq\; \bar\gamma \\ \bar\gamma / \bar\delta \; \in \Cal T}} \varphi (\bar\Gamma/\bar\delta, \bar\gamma/\bar\delta)  \circ \varphi_-(\bar\Gamma, \bar\delta) \Big) 
\end{equation}
\item $\varphi_+$ and $\varphi_-$ are two characters. We will call $\varphi_+$ the renormalized character and $\varphi_-$ the character of the counterterms.
\end{enumerate}
\end{theorem}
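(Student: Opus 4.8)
The proof transposes the Connes--Kreimer recursion to the doubled Hopf algebra $\Cal D_{\Cal T}$ and to the target algebra $\Cal A$, the only genuine novelty being that the convolution $\divideontimes$ is built from the composition $\circ$ while the components are required to be multiplicative for the commutative product $\bullet$. The plan is to construct $\varphi_-$ by induction on the degree $|(\bar\Gamma,\bar\gamma)|=|\bar\gamma|$ through \eqref{fin1}, to define $\varphi_+:=\varphi_-\divideontimes\varphi$, and then to establish in turn: the closed formula \eqref{fin2} together with the factorisation $\varphi=\varphi_-^{\divideontimes-1}\divideontimes\varphi_+$ (existence in (1) and formula (2)); uniqueness; and finally that both components are characters (part (3)).

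For existence, set $\varphi_-(\bar\Gamma,\bar\gamma)=Id_{\wt{\Cal B}}$ in degree zero and, assuming $\varphi_-$ known in all degrees $<n$, read off $\varphi_-(\bar\Gamma,\bar\gamma)$ in degree $n$ from \eqref{fin1}; since $P$ projects onto $\Cal A_-$, this lands in $\Cal A_-$. Denote by $\bar\varphi(\bar\Gamma,\bar\gamma)$ the bracketed Bogoliubov preparation occurring in \eqref{fin1}. Expanding $\varphi_-\divideontimes\varphi$ by the coproduct formula and isolating the extreme term $\bar\delta=\bar\gamma$, which contributes $\varphi_-(\bar\Gamma,\bar\gamma)$ because $\bar\gamma/\bar\gamma$ has degree zero and $\varphi$ is unital there, the remaining terms reconstitute exactly $\bar\varphi(\bar\Gamma,\bar\gamma)$; hence $(\varphi_-\divideontimes\varphi)(\bar\Gamma,\bar\gamma)=\varphi_-(\bar\Gamma,\bar\gamma)+\bar\varphi(\bar\Gamma,\bar\gamma)$. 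Substituting $\varphi_-(\bar\Gamma,\bar\gamma)=-P(\bar\varphi(\bar\Gamma,\bar\gamma))$ and writing $Id=P+(I-P)$ gives $\varphi_+(\bar\Gamma,\bar\gamma)=(I-P)(\bar\varphi(\bar\Gamma,\bar\gamma))$, which is \eqref{fin2} and is $\Cal A_+$-valued. Since $G$ is a group and $\varphi_+=\varphi_-\divideontimes\varphi$ by construction, this is precisely the asserted factorisation.

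Uniqueness I would prove by the same induction. Suppose $\varphi=\psi_-^{\divideontimes-1}\divideontimes\psi_+$, equivalently $\psi_-\divideontimes\varphi=\psi_+$, with $\psi_-$ valued in $\Cal A_-$ off degree zero and $\psi_+$ valued in $\Cal A_+$. In degree $n$, assuming $\psi_-=\varphi_-$ in lower degrees, the preparation built from $\psi_-$ coincides with $\bar\varphi(\bar\Gamma,\bar\gamma)$, so the relation reads $\psi_-(\bar\Gamma,\bar\gamma)+\bar\varphi(\bar\Gamma,\bar\gamma)=\psi_+(\bar\Gamma,\bar\gamma)$. Applying $P$ and using $\Cal A_+=\ker P$ together with $P|_{\Cal A_-}=\mathrm{id}$ forces $\psi_-(\bar\Gamma,\bar\gamma)=-P(\bar\varphi(\bar\Gamma,\bar\gamma))=\varphi_-(\bar\Gamma,\bar\gamma)$, and then $\psi_+=\varphi_+$; the induction closes.

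It remains to prove part (3). Here I would invoke Theorem \ref{dt} with the constant indexed scheme $P_t:=P$ for all $t\in\mathbb N$. For a constant family the hypothesis \eqref{defam} reduces to the single weight $-1$ Rota--Baxter identity $P(a)\bullet P(b)=P\big(P(a)\bullet b+a\bullet P(b)-a\bullet b\big)$, which is the standard multiplicativity property of the pole-part projection attached to the splitting $\Cal A=\Cal A_-\oplus\Cal A_+$ into $\bullet$-subalgebras. Theorem \ref{dt} then gives at once that $\varphi_-$, and hence $\varphi_+=\varphi_-\divideontimes\varphi$, is multiplicative for $\bullet$, i.e. a character. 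I expect the genuine difficulty to be concentrated precisely in this last point, already behind Theorem \ref{dt}: the recursion defining $\varphi_\pm$ is assembled from the composition product $\circ$, whereas multiplicativity is demanded for the commutative product $\bullet$, and the two are reconciled only through the interchange law \eqref{ctu}, $(\wt b_1\circ\wt a_1)\bullet(\wt b_2\circ\wt a_2)=(\wt b_1\bullet\wt b_2)\circ(\wt a_1\bullet\wt a_2)$. Checking that this interchange, combined with the Rota--Baxter relation, survives the quotient by $\Cal J$ and the passage to $k[z^{-1},z]]$-coefficients is the crux of the argument.
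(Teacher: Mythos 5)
Your proposal is correct and follows essentially the same route as the paper: existence via the recursive formulas and the Bogoliubov preparation $\bar\varphi$, and the character property in part (3) reduced to the weight $-1$ Rota--Baxter identity for $P$ and the interchange law between $\circ$ and $\bullet$, exactly as in (the proof of) Theorem \ref{dt}. The only divergence is cosmetic: for uniqueness the paper argues globally that $\varphi_+\divideontimes\psi_+^{\divideontimes-1}=\varphi_-\divideontimes\psi_-^{\divideontimes-1}$ is $\Cal A_+$-valued and $\Cal A_-$-valued at once, hence equals $E$, whereas you run the same comparison degree by degree and apply $P$; both are standard and equivalent.
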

\begin{proof}
\begin{enumerate}
\item The existence of the Birkhoff decomposition of $\varphi$ is given by the formulas \eqref{fin1} and \eqref{fin2}, and we have $\varphi = \varphi _-^{\divideontimes-1} \divideontimes \varphi_+$. We now prove the uniqueness: We assume that $\varphi$ admits two decompositions i.e: 
 $$\varphi  =  \varphi_-^{-1}  \divideontimes \varphi_+ = \psi_-^{-1}  \divideontimes \psi_+.$$
Then we obtain the following equation: \;\;\; $\varphi_+ \divideontimes \psi_+^{-1} = \varphi_- \divideontimes \psi_-^{-1}.$\\
As for all $(\bar\Gamma,\bar\gamma) \in {\Cal D}_{\Cal T}$ we have: $\varphi_+ \ast \psi_+^{-1} (\bar\Gamma,\bar\gamma) \in {\Cal A}_+$ and $\varphi_-\divideontimes \psi_-^{-1} (\bar\Gamma,\bar\gamma) \in {\Cal A}_-$, then: $$\varphi_+ \divideontimes \psi_+^{-1} = \varphi_-\divideontimes \psi_-^{-1} = E,$$
$$\hskip-6.5cm \text {and consequently:} \;\;\;\;\;\; \;\;\;\;\;\;\;\;\;\;\;\;\;\;\;\;\;\;  \varphi_+ =  \psi_+ \;\;\; \text{and}  \;\;\; \varphi_- =  \psi_-.$$
\item The proof is lengthy but straightforward, similar to the proof of Theorem \ref{dt}.
\end{enumerate}
\end{proof}
From formulas \eqref{fin1} and \eqref{fin2} we get the algebraic frame explaining Smirnov's approach \cite[\S 8.2]{Sm}. Finally, this definition allows us to make sense of the renormalized Feynman rules.
\begin{definition}
The Feynman rules define an element $\wt I$ of $G$:
$$\wt I : \wt{\Cal D}_{\Cal T} \longrightarrow \Cal A \;\;\;\; (\Gamma, \gamma) \longmapsto \wt I (\Gamma, \gamma) :=  \wt I_{\Gamma, \gamma}^D,$$
$$\text{such that: }\;\;\;\;\;\;\;\wt I = \wt I _-^{\divideontimes-1} \divideontimes \wt I_+,$$
where $\wt I_-$ is the character of the counterterms. The renormalized character is $\wt I_+$ evaluated at $D =d$.
\end{definition}
%%%%%%%%%%%%%%%%%%%%%%%%%%%%%%%%%%%%%%%%%%%%%%%%%%%%%%%%%%%%%%%%%%%
%%%%%%%%%%%%%%%%%%%%%%%%%%%%%%%%%%%%%%%%%%%%%%%%%%%%%%%%%%%%%%%%%%%
%%%%%%%%%%%%%%%%%%%%%%%%%%%%%%%%%%%%%%%%%%%%%%%%%%%%%%%%%%%%%%%%%%%

\end{document}